\def\BState{\State\hskip-\ALG@thistlm}
 \newtheorem{theorem}{Theorem}[section]
 \newtheorem{lemma}[theorem]{Lemma}
 \newtheorem{corollary}[theorem]{Corollary}
 \newtheorem{remark}[theorem]{Remark}
 \theoremstyle{definition}
 \newtheorem{example}{Example}[section]
 \numberwithin{equation}{section}
  \newtheorem{definition}{Definition}[section]
\newcommand{\N}{\mathbb{N}}
\newcommand{\R}{\mathbb{R}}
\renewcommand{\emptyset}{\text{\usefont{OMS}{cmsy}{m}{n}\symbol{59}}}
\DeclareMathOperator*{\argmin}{arg\,min}
\DeclareMathOperator*{\argmax}{arg\,max}
\renewcommand{\theta}{\vartheta}
\newcommand{\bs}{\boldsymbol}
\newcommand{\out}[1]{{}}
\begin{document}

\allowdisplaybreaks

\baselineskip3.75ex

\title{\vspace{-1cm}\bf High-dimensional variable selection via 
low-dimensional adaptive learning}

\author{Christian Staerk$^1$, Maria Kateri$^2$ and Ioannis Ntzoufras$^3$ 
        \footnote{ To cite this work, please refer to: Staerk, C., Kateri, M., and Ntzoufras, I. (2021). High-dimensional variable selection via low-dimensional adaptive learning. Electronic Journal of Statistics, 15(1), 830-879.  e-mails: {\tt staerk@imbie.uni-bonn.de}, {\tt maria.kateri@rwth-aachen.de},
 {\tt ntzoufras@aueb.gr}
       } 
\vspace{2ex}\\ 
$^1$ Department of Medical Biometry, Informatics and Epidemiology, \\ University Hospital Bonn, Germany\\
$^2$ Institute of Statistics, RWTH Aachen University, Germany\\
$^3$ Department of Statistics, Athens University of Economics and Business, Greece
}

\date{}
\maketitle
\vspace{-0.5cm}
\begin{abstract}
A stochastic search method, the so-called Adaptive Subspace (AdaSub) method, is proposed for variable selection in high-dimensional linear regression models. The method aims at finding the best model with respect to a certain model selection criterion and is based on the idea of adaptively solving low-dimensional sub-problems in order to provide a solution to the original high-dimensional problem. Any of the usual \(\ell_0\)-type model selection criteria can be used, such as Akaike's Information Criterion (AIC), the Bayesian Information Criterion (BIC) or the Extended BIC (EBIC), with the last being particularly suitable for high-dimensional cases. 
The limiting properties of the new algorithm are analysed and it is shown that, under certain conditions, AdaSub converges to the best model according to the considered criterion. 
In a simulation study, the performance of AdaSub is investigated in comparison to alternative methods. The effectiveness of the proposed method is illustrated via various simulated datasets and a high-dimensional real data example. \\[1mm]
\textbf{Keywords:} Extended Bayesian Information Criterion, High-Dimensional Data, Sparsity, Stability Selection,  Subset Selection
\end{abstract}

\section{Introduction}
\label{intro}
Rapid developments during the last decades in fields such as information technology or genetics have led to an increased collection of huge amounts of data. Nowadays one often faces the challenging scenario, where the number of possible explanatory variables \(p\) is large while the sample size \(n\) can be relatively small. In this high-dimensional setting with \(p\) possibly much larger than \(n\) (abbreviated by \(p\gg n\)), statistical modelling and inference is possible under the assumption that the true underlying model is sparse. Hence, we are particularly interested in variable selection, that is we want to identify a sparse, well-fitted model with only a few of the many candidate explanatory variables. 

Although the proposed Adaptive Subspace method can be applied in a more general setup, in this paper we focus on variable selection in linear regression models with a response \(Y\) and explanatory variables \(X_1,\dots,X_p\), i.e. 
%
\begin{equation} Y_i = \mu + \sum_{j=1}^p \beta_j  X_{i,j} + \epsilon_i , ~~ i=1,\dots,n, \label{linearmodel}\end{equation}
where \(\epsilon_i\) are i.i.d. random errors, \(\epsilon_i\sim N(0,\sigma^2)\), with variance \(\sigma^2>0\), \(\mu\in\R\) is the intercept and \(\bs \beta=(\beta_1,\dots,\beta_p)^T\in\R^{p}\) is the vector of regression coefficients.  The matrix \(\bs X=( X_{i,j})\in\R^{n \times p}\) is the design or data matrix with its \(i\)-th row \(\bs X_{i,*}\) corresponding to the \(i\)-th observation and its \(j\)-th column \(\bs X_{*,j}\) to the values of the \(j\)-th explanatory variable.
Let \(\{X_j:j\in\mathcal{P}\}\) be the set of all possible explanatory variables, where \(\mathcal{P}=\{1,\dots,p\}\) is the corresponding set of indices. Then, for \(S\subseteq\mathcal{P}\), let \(\bs X_{S}\in\R^{n\times |S|}\) denote the design matrix restricted to the columns with indices in \(S\) and let \(\bs \beta_{S}\in\R^{|S|}\) denote the coefficient vector restricted to indices in \(S\). 
Furthermore let \(S_0=\{j\in\mathcal{P}:\,\beta_j\neq 0\}\) be the set of indices corresponding to the true underlying model, the so-called true active set.

As already mentioned, a usual theoretical assumption in the high-dimensional regime is the sparsity of the true model. Thus, for the linear model (\ref{linearmodel}), the cardinality of \(S_0\) is assumed to be small, that is \(s_0=|S_0|\ll p\). The aim is to identify the active set \(S_0\), so a variable selection method tries to ``estimate'' \(S_0\) by some subset \(\hat{S}\subseteq\{1,\dots,p\}\). It is desirable that a selection procedure has the following frequentist properties: The probability \(P(\hat{S}=S_0)\) of selecting the correct model should be as large as possible and the procedure should be variable selection consistent in the sense that \(P(\hat{S}=S_0)\rightarrow 1\) in an asymptotic setting where \(n\rightarrow\infty\) and (possibly) \( p\rightarrow\infty\) with some specified rate. Although the assumption that the ``truth'' is linear and sparse cannot be expected to hold in practice, it is desirable to identify the ``best'' linear, sparse approximation to the ``truth'' in order to find an interpretable model that avoids overfitting (see e.g. \citealp{vandeGeer2011}). 

Many different methods have been proposed to solve the variable selection problem in a high-dimensional situation, including the Lasso \citep{Tibshirani1996} and its variants (see \citealp{Tibshirani2011}, for an overview), the SCAD \citep{Fan2001} or Stability Selection \citep{Meinshausen2010}. Here we propose an alternative approach, the Adaptive Subspace (AdaSub) method, which tackles the original high-dimensional selection problem by appropriately splitting it into many low-dimensional sub-problems, based on a certain form of adaptive learning. 

In Section \ref{sec:background} a selective overview of existing high-dimensional variable selection methods is given along with a motivation for the proposed new approach. The AdaSub algorithm is presented in Section \ref{sec:adasub}.  
Its limiting properties are analysed in Section \ref{sec:limiting} where it is shown that, under the ordered importance property (OIP), AdaSub converges to the best model according to the adopted criterion (Theorem \ref{OIP_Theorem}). 
It is further argued that, even when OIP is not satisfied, AdaSub provides a stable thresholded model. 
The performance of AdaSub is investigated through low- and high-dimensional examples in Section \ref{sec:simstudy}, demonstrating that AdaSub can outperform other well-established methods in certain situations with small sample sizes or highly correlated covariates. In Section \ref{sect:realdata}, the effectiveness of AdaSub is further illustrated via a very high-dimensional real data example with \(p=22,575\) explanatory variables. Finally, the results along with directions for future work are discussed in Section \ref{sec:discussion}. 

\section{Background and motivation}\label{sec:background}

Many different methods have been proposed to solve the variable selection problem in a linear model. 
Classical selection criteria include the Akaike Information Criterion AIC \citep{Akaike1974} aiming for optimal predictions and the Bayesian Information Criterion BIC \citep{Schwarz1978} aiming at identifying the ``true'' generating model. 
The BIC can be obtained as an approximation to 
a fully Bayesian analysis with a uniform prior on the model space. 
\citet{Chen2008} argue that this model prior underlying BIC is not suitable for a high-dimensional framework where the truth is assumed to be sparse. 
Therefore they propose a modified version of the BIC, called the Extended Bayesian Information Criterion (EBIC), with an adjusted underlying prior on the model space: For a fixed additional parameter \(\gamma\in[0,1]\) and a subset \(S\subseteq\mathcal{P}\) let the prior of the corresponding model be \(\pi(S)\propto {\binom{p}{|S|} }^{-\gamma}\,.\)
If \(\gamma=1\), the model prior is \(\pi(S)=\frac{1}{p+1}{\binom{p}{|S|}}^{-1}\) and it gives equal probability to each model size, and to each model of the same size. The choice \(\gamma=1\) also corresponds to a default beta-binomial model prior providing automatic multiplicity correction (see \citealp{scott2010}). For \(\gamma=0\), the original BIC is obtained. 

Similarly to the derivation of the BIC, for a subset \(S\subseteq\mathcal{P}\), the EBIC with parameter \(\gamma\in[0,1]\) is asymptotically obtained as
\begin{equation}
\text{EBIC}_\gamma(S) = -2\log\left(f_{\hat{\bs \beta}_S,\hat{\mu},\hat{\sigma}^2}(\bs Y| \bs X_{S})\right) + \Big(\log(n)+2\gamma\log(p)\Big) |S| ,\label{EBIC}
\end{equation}
where \(f_{\hat{\bs \beta}_S,\hat{\mu},\hat{\sigma}^2}(\bs Y| \bs X_{S})\) denotes the maximized normal likelihood under model (\ref{linearmodel}) with restricted design matrix \(\bs X_{S}\) \citep{Chen2012}.
According to EBIC, the active set \(S_0\) is estimated by \( \hat{S} = \argmin_{S} \text{EBIC}_\gamma(S). \) 
It has been shown by \citet{Chen2008} that, under a mild asymptotic identifiability condition, the EBIC is variable selection consistent for a linear model if \(p=\mathcal{O}(n^k)\) for some \(k>0\) and \(\gamma>1-\frac{1}{2k}\), where the size of the true active set \(s_0=|S_0|\) is assumed to be fixed. The result has been extended by \citet{Foygel2010} and \citet{Luo2013} to the setting of a diverging number of relevant explanatory variables. 
 
The identification of the best model according to an \(\ell_0\)-type selection criterion leads to combinatorial optimization problems which are very difficult to solve in the presence of many possible explanatory variables \(p\), since there are \(2^p\) possible models for which the criterion has to be evaluated. 
In fact, best subset selection with an \(\ell_0\)-penalty is in general NP-hard (see e.g. \citealp{Huo2007}). Different alternatives have been proposed to circumvent the costly full enumeration approach. Clever branch-and-bound strategies (see e.g. \citealp{Furnival1974}; \citealp{Narendra1977}) reduce the number of model evaluations and in practice allow an exact solution up to \(p\approx 40\). Very recently, a mixed integer optimization approach has been proposed by \citet{Bertsimas2016} which practically solves problems with \(n\approx 1000\) and \(p\approx 100\) exactly and finds approximate solutions for \(n\approx 100\) and \(p\approx 1000\). Methods like classical forward-stepwise selection, genetic algorithms (see e.g. \citealp{Yang1998}) as well as the the more recently proposed ``shotgun stochastic search'' algorithm of \citet{Hans2007} and the stochastic regrouping algorithm of \citet{cai2009} try to trace good models in a heuristic way, but there is no guarantee that one obtains the optimal solution according to the selected criterion. 

In the 90's the focus shifted from solving discrete optimization problems to solving continuous, convex relaxations of the original problem. \citet{Tibshirani1996} proposes the celebrated Lasso, which solves a convex optimization problem with an \(\ell_1\)-penalty on the regression coefficients and then selects those variables whose corresponding regression coefficients are non-zero in the optimal solution. 
Many modifications of the Lasso have been proposed such as the Elastic Net \citep{Zou2005} or the Group Lasso \citep{Yuan2006} and efficient algorithms for solving the corresponding optimization problems have been developed (see e.g. \citealp{Efron2004}; \citealp{Friedman2007}). 
A drawback of \(\ell_1\)-regularization methods like the Lasso is that, in order to be variable selection consistent, they typically require quite strong conditions on the design matrix \(\bs X\). For the Lasso in linear regression models, it has been shown that the design matrix \(\bs X\) has to satisfy the restrictive ``Irrepresentable Condition'' to obtain variable selection consistency (\citealp{Meinshausen2006}; \citealp{Zhao2006}). Alternative methods like SCAD \citep{Fan2001} --- yielding a non-convex optimization problem --- or the Adaptive Lasso \citep{Zou2006} provide consistent variable selection under weaker conditions. 
The performance of regularization methods such as the Lasso, the Adaptive Lasso and SCAD strongly depends on a sensible choice of their penalty parameters which control the sparsity of the resulting estimators. In practice, penalty parameters are often tuned for predictive performance via cross-validation \citep{shao1993, feng2013}; alternatively, information criteria such as the EBIC can also be used for tuning parameter selection in regularization methods \citep{fan2013}. 

A general problem with procedures based on either \(\ell_0\)- or \(\ell_1\)-type criteria is that their optimal solution is not very stable with respect to small changes in the sample. In particular, it has been noted that the discrete nature of the \(\ell_0\)-penalty can lead to ``overfitting'' of the criterion, if the optimization is carried out among all possible \(2^p\) models (see e.g. \citealp{Breiman1996}; \citealp{Loughrey2005}). Another problem of \(\ell_1\)-type criteria is that they do not provide any information about the uncertainty concerning the best model, per se. Further, it is well-known that standard confidence intervals for regression coefficients are too narrow if the data-driven variable selection is not taken into account. Recent works in post-selection inference aim to yield valid inference after high-dimensional variable selection with methods including the Lasso \citep{zhang2014, van2014, dezeure2015}. 

\citet{Meinshausen2010} propose a procedure called Stability Selection which addresses the particular issue of variable selection (in)stability. 
It is based on the idea of applying a given variable selection method (e.g. the Lasso) multiple times (say \(L\) times) on subsamples of the data. At the end, one selects those explanatory variables whose relative selection frequencies exceed some threshold (which is chosen in a way to control the false discovery rate). The subsampling scheme is to draw subsets \(I_l\), \(l\in\{1,\dots,L\}\), of size \(\left\lfloor \frac{n}{2}\right\rfloor\) without replacement from \(\{1,\dots,n\}\) and then repeatedly consider the model (\ref{linearmodel}) with observations \(i\in I_l\) only. 
Even though Stability Selection has nice theoretical properties and also seems to be used more and more in practice,  
one might observe that in a high-dimensional situation with $p\gg n$, Stability Selection in combination with Lasso successively applies a possibly inconsistent selection procedure on even more severe high-dimensional problems with \(p\ggg\left\lfloor \frac{n}{2}\right\rfloor\). 

The main idea of the proposed AdaSub method is to successively apply a consistent selection procedure (\(\ell_0\)-type criteria like EBIC) on data with the original sample size \(n\) and only a few \(q\) covariates (where \(q\ll\min(n,p)\)). 
So the concept behind AdaSub can be summarized as: 
\begin{center}
\textit{``Solve a high-dimensional problem \\ by solving many low-dimensional sub-problems.''} 
\end{center}
Two issues naturally arise in this regime: Which low-dimensional problems should be solved? And how can the information from the solved low-dimensional problems be combined in order to solve the original problem? 
AdaSub links the answers to those questions using a certain form of adaptive learning: 
In each iteration of the algorithm, the solutions from the already solved low-dimensional problems are used to propose (or more precisely ``sample'' in a stochastic way) a new low-dimensional problem of potentially higher relevance. The construction is based on the principle that a significant explanatory variable 
for the full model space should also be identified as significant in ``many'' of the considered low-dimensional problems it is involved in. 
 
The idea of applying variable selection methods subsequently to different model subspaces appears also in other methods like 
the Random Subspace Method (\citealp{Ho1998}; \citealp{Lai2006}), Tournament Screening \citep{Chen2009}, the stochastic regrouping algorithm \citep{cai2009}, 
the Bayesian split-and-merge (SAM) approach \citep{Song2015}, extensions of Stability Selection \citep{Beinrucker2016} and DECOrrelated feature space partitioning \citep{Wang2016}. 
Relevant are also the PC-simple algorithm \citep{buhlmann2010} and Tilting \citep{cho2012}, which are discussed later in Sections \ref{sec:limiting} and \ref{sec:simstudy}.  
A characteristic feature of the proposed AdaSub method is that it makes explicit and effective use of the information learned from the subspaces already considered by using a certain form of adaptive stochastic learning. In particular, the inclusion probabilities of the individual variables to be selected in the subspaces are adjusted after each iteration of AdaSub, based on their currently estimated ``importance''. Therefore, the sizes of the sampled subspaces in AdaSub are not fixed in advance but are automatically adapted during the algorithm. In addition, the solution of the sub-problems in AdaSub does not necessarily rely on relaxations of the original \(\ell_0\)-type problem (such as the Lasso with an \(\ell_1\)-penalty) or on heuristic methods (such as stepwise selection methods). These features distinguish AdaSub from other subspace methods that have been previously considered in the literature.

\section{The Adaptive Subspace (AdaSub) method}\label{sec:adasub}

\subsection{Notation and assumptions}

We first introduce some general notation in a setting with a criterion-based variable selection procedure. 
For the full set of explanatory variables \(\{X_j:j\in\mathcal{P}\}\) 
we identify a subset \(S\subseteq\mathcal{P}\) with the linear model (\ref{linearmodel}) where the sum on the right hand side is restricted to the indices \(j\in S\); i.e. in matrix notation the model induced by \(S\) is given by 
\begin{equation} \bs Y = \bs \mu + \bs X_{S}\bs\beta_S + \bs \epsilon \,, \end{equation}
where \(\bs Y=(Y_1,\dots,Y_n)^T\), \(\bs \mu=(\mu,\dots,\mu)^T\in\R^n\) and \(\bs \epsilon=(\epsilon_1,\dots,\epsilon_n)^T\) with error variance \(\sigma^2>0\). 
We consider the model space \( \mathcal{M}=\{S\subseteq\mathcal{P}:\, |S|< n-2 \} \,\). 
Here we exclude subsets \(S\subseteq\mathcal{P}\) with \(|S|\geq n-2\) to avoid obvious overfitting and non-identifiability of the regression coefficients. 
Given that we have observed some data \(\mathcal{D}=(\bs X,\bs Y)\), let $C_{\mathcal{D}}:\mathcal{M}\rightarrow \R$ be a certain model selection criterion. In the following we will write \(C\equiv C_{\mathcal{D}}\) for brevity, but one should always recall that the function \(C\) depends on the observed data \(\mathcal{D}\).  
We aim at identifying the best model, which is assumed to be, without loss of generality, the one that maximizes the given criterion \(C\). 

\begin{example}
Examples for \(C\) include posterior model probabilities (within the Bayesian setup) or the negative AIC, BIC or EBIC (within the \(\ell_0\)-penalized criteria framework).
\begin{enumerate}
\item[(a)] To be more specific, in a fully Bayesian framework, posterior model probabilities \(\pi( S \,|\, \mathcal{D})\) are proportional to
\begin{equation}\label{eq:bayesianC}
C(S) = \pi( \bs Y \,|\,\bs X_S, S) \, \pi( S )  \, , ~~ S\in \mathcal{M} \,,
\end{equation}
where \(\pi(S)\) denotes the prior probability of model \(S\) and \(\pi( \bs Y \,|\,\bs X_S, S)\) the marginal likelihood of the data under model \(S\). Maximizing \(C\) in equation (\ref{eq:bayesianC}) corresponds to the identification of the maximum-a-posteriori (MAP) model. 
\item[(b)] In the context of linear regression, (negative) \(\ell_0\)-type information criteria with penalty parameter \(\lambda_{n,p}>0\) can be written as 
\begin{equation} C(S) =  - \left( n \cdot \log\left(\lVert \bs Y - \bs X_S \hat{\bs \beta}_S \rVert_2^2 /n\right) + \lambda_{n,p} |S| \right)\, , ~~ S\in \mathcal{M} \label{l0:normal}\,. \end{equation}
In particular, the penalty parameter choice \(\lambda_{n,p}=2\) corresponds to the AIC, \(\lambda_{n,p}=\log(n)\) corresponds to the BIC and \(\lambda_{n,p}=\log(n)+2\gamma\log(p)\) with \(\gamma\in[0,1]\) corresponds to the \(\text{EBIC}_{\gamma}\) (see equation (\ref{EBIC})). Maximizing \(C\) in equation (\ref{l0:normal}) yields the best model according to the particular \(\ell_0\)-type selection criterion. In this work we mainly focus on \(\ell_0\)-type selection criteria. 
\end{enumerate}
\end{example}

\noindent We define the function
\begin{equation} f_C:\mathfrak{P}(\mathcal{P})\rightarrow\mathcal{M},~ f_C(V):=\argmax_{S\subseteq V,\,S\in\mathcal{M}} C(S) \, ,\end{equation}
where \(\mathfrak{P}(\mathcal{P}) = \{ V \subseteq \mathcal{P}\}\) denotes the power set of \(\mathcal{P}=\{1,\dots,p\}\).
So for a given \(V\subseteq\mathcal{P}\), \(f_C(V)\) is the best model according to criterion \(C\) among all models included in \(V\). In the following we will assume 
that $f_C$ is a well-defined function which maps any \(V\subseteq\mathcal{P}\) to a single model \(f_C(V)\in\mathcal{M}\). 
In the \(\ell_0\)-penalized likelihood framework (see equation (\ref{l0:normal})) this assumption almost surely holds if the values of the covariates are generated from an absolutely continuous distribution with respect to the Lebesgue measure \citep{Nikolova2013}; see Remark~\ref{remark:unique} in Section~\ref{sec:theory} for a further discussion of this assumption.
Let 
\begin{equation} S^*:=f_C(\mathcal{P})=\argmax_{S\in\mathcal{M}} C(S) \end{equation} 
with $s^*=|S^*|$ denote the best model according to criterion \(C\) which is unique under the made assumptions. Hereafter, \(S^*\) will be referred to as the \(C\)-optimal model.

\begin{remark}\label{remark:fC} The following basic properties can immediately be derived from the definitions of the function \(f_C\) and the \(C\)-optimal model \(S^*\):
\vspace{-1mm}
\begin{enumerate}
\item[(a)] It holds $f_C(V)\subseteq V$ for all $V\subseteq\mathcal{P}$.
\item[(b)] It holds $f_C(V)=S^*$ if and only if $S^*\subseteq V$. 
\item[(c)] If \(f_C(V)\subseteq V' \subseteq V\) with \(V,V'\subseteq\mathcal{P}\), then it holds \(f_C(V')=f_C(V)\).
\end{enumerate}
\end{remark}

Property (b) in Remark \ref{remark:fC} already hints at a strategy for the identification of the \(C\)-optimal model \(S^*\): one aims to identify and solve those low-dimensional sub-problems \(f_C(V)\) with $V\supseteq S^*$, i.e.\ \(V\) should include at least all important variables according to the criterion $C$ (i.e.\ the variables in \(S^*\)), so that \(f_C(V)=S^*\). This property will be particularly exploited in the theoretical analysis of the proposed AdaSub algorithm in Section \ref{sec:limiting}. 

Finally, in the following let \(\N\) denote the set of natural numbers 
and \(\N_0=\N\cup\{0\}\) the set of non-negative integers. For a set \(\Omega\) and a subset \(A\subseteq\Omega\) the indicator function of \(A\) is denoted by \(1_A\), i.e. we have \(1_A(\omega)=1\) if \(\omega\in A\), and \(1_A(\omega)=0\) if \(\omega\in\Omega\setminus A\).

\vspace{-2mm}
\subsection{The algorithm}

We will now describe the generic AdaSub method, given as Algorithm \ref{AdaSub}. A first version of the algorithm has been presented at the 31st International Workshop on Statistical Modelling \citep{Staerk2016}. 

\begin{algorithm}[h]
\caption{Adaptive Subspace (AdaSub) method}\label{AdaSub} 
\vspace{0.2cm}
\textbf{Input:} 
\vspace{-1mm}
\begin{itemize}
\item Data $\mathcal{D}=(\bs X,\bs Y)$ 
\item $C:\mathcal{M}\rightarrow \R$ model selection criterion (\(C\equiv C_{\mathcal{D}}\))
\item Initial expected search size $q\in(0,p)$ 
\item Learning rate $K>0$ 
\item Number of iterations $T\in\N$ 
\end{itemize}

\textbf{Algorithm: }
\vspace{-1mm}
\begin{enumerate}
\item[(1)] For $j=1,\dots,p$ initialize selection probability of variable $X_j$ as
$r_j^{(0)}:=\frac{q}{p}.$
\item[(2)] For $t=1,\dots,T$:
\begin{enumerate}
\vspace{-1mm}
\item[(a)] Draw $b_j^{(t)}\sim\text{Bernoulli}(r_j^{(t-1)})$ indep. for $j\in\mathcal{P}$.
\item[(b)] Set $V^{(t)}=\{j\in\mathcal{P}:\,b_j^{(t)}=1\}$.
\item[(c)] Compute $S^{(t)}=f_C(V^{(t)})$.
\item[(d)] For $j\in\mathcal{P}$ update \(r_j^{(t)}=\frac{q+K\sum_{i=1}^t 1_{S^{(i)}}(j)}{p+K\sum_{i=1}^t 1_{V^{(i)}}(j)}.\)
\end{enumerate}
\end{enumerate}

\textbf{Output (Final subset selected by AdaSub):} 
\begin{enumerate}
\item[(i)] ``Best'' sampled model: \(\hat{S}_\text{b} = \argmax\{C(S^{(1)}),\dots,C(S^{(T)})\}\) 
\vspace{-1mm}
\item[(ii)] Thresholded model for some threshold $\rho\in(0,1)$: $\hat{S}_{\rho}=\{j\in\mathcal{P}:\,r_j^{(T)}>\rho\}$ 
\end{enumerate}
\end{algorithm}

Suppose that we have observed some data \(\mathcal{D}=(\bs X,\bs Y)\) and we want to identify the \(C\)-optimal model. As described in Section \ref{sec:background}, the basic idea of AdaSub is to solve many low-dimensional problems (i.e. compute \(f_C(V)\) for many subspaces \(V\subseteq\mathcal{P}\) with \(|V|\) relatively small) in order to obtain a solution for the given high-dimensional problem (i.e. identify \(S^*=f_C(\mathcal{P})\)). AdaSub is a stochastic algorithm which in each iteration \(t\), for \(t=1,\dots,T\), samples a subset \(V^{(t)}\subseteq\mathcal{P}\) of the set of all possible explanatory variables \(\mathcal{P}=\{1,\dots,p\}\) and then computes \(S^{(t)}=f_C(V^{(t)})\). The probability that \(j\in\mathcal{P}\) is included in \(V^{(t)}\) at iteration \(t\) is given by \(r_j^{(t-1)}\). The selection probabilities \(r_j^{(t)}\) are automatically adapted after each iteration \(t\) in the following way: 
\begin{equation} r_j^{(t)}=\frac{q+K\sum_{i=1}^t 1_{S^{(i)}}(j)}{p+K\sum_{i=1}^t 1_{V^{(i)}}(j)} \, , \end{equation}
where \(q\in(0,p)\) and \(K>0\) are tuning parameters of the algorithm.

If \(j\in V^{(t)}\) but \(j\notin S^{(t)}=f_C(V^{(t)})\), then \(r_j^{(t)}<r_j^{(t-1)}\), so the selection probability of variable \(X_j\) decreases in the next iteration. If \(j\in V^{(t)}\) and also \(j\in S^{(t)}\), then \(r_j^{(t)}>r_j^{(t-1)}\), so the selection probability increases. If \(j\notin V^{(t)}\), then obviously \(j\notin S^{(t)}\), so the selection probability does not change in the next iteration. Note that \(r_j^{(t)}\) depends on the whole history (from iteration 1 up to iteration \(t\)) of the number of times variable \(X_j\) has been considered in the search (\(j\in V^{(i)}\)) and the number of times it has been included in the best subset (\(j\in S^{(i)}\)). Clearly we have $0<r_j^{(t)}<1$ for all $t=1,\dots,T$ and $j\in\mathcal{P}$. So at each iteration \(t\) each variable $X_j$ has positive probability $r_j^{(t)}$ of being considered in the model search ($j\in V^{(t)}$) and also has positive probability $1-r_j^{(t)}$ of not being considered ($j\notin V^{(t)}$).

As the final subset selected by AdaSub one can either (i) choose the ``best'' sampled model \(\hat{S}_{\textrm{b}}\) for which \(C(\hat{S}_{\textrm{b}})=\max\{C(S^{(1)}),\dots,C(S^{(T)})\}\), or (ii) consider the thresholded model \(\hat{S}_{\rho}=\{j\in\mathcal{P}:\,r^{(T)}_j > \rho\}\) with some threshold \(\rho\in(0,1)\). While \(\hat{S}_{\textrm{b}}\) is obviously more likely to coincide with the \(C\)-optimal model \(S^*\), it can be beneficial in terms of variable selection stability to consider the thresholded model \(\hat{S}_{\rho}\) instead (with \(\rho\) relatively large). A detailed relevant discussion follows in Section \ref{sec:limiting}. 

Note that we implicitly assume that it is computationally feasible to compute \(S^{(t)}=f_C(V^{(t)})\) in each iteration \(t\). In fact, if the underlying ``truth'' is sparse and the criterion used enforces sparsity, \(|V^{(t)}|\) is expected to be relatively small. Otherwise one might use heuristic algorithms in place of a full enumeration.  Alternatively, if \(|V^{(t)}|\) is bigger than some computational bound \(U_C\), one might replace \(V^{(t)}\) by a subsample of \(V^{(t)}\) of size \(U_C\). 
In the case of variable selection in linear regression with $C(S)=-\text{EBIC}(S)$ using the fast branch-and-bound algorithm \citep{Lumley2009} one might set $U_C\leq 40$. 
However, in the following we will assume that the original version of AdaSub (Algorithm \ref{AdaSub}) is used.

The AdaSub method requires that we initialize three parameters: \(q, K\) and \(T\). Here \(q\in(0,p)\) is the initial expected search size, which should be relatively small (e.g.\ $q=10$). The initial expected search size \(q\) reflects our prior belief about the sparsity of the problem, i.e. \(q\) should be a first rough ``estimate'' of the size of the \(C\)-optimal model. We have \(E\left(|V^{(1)}|\right) = \sum_{j=1}^p r_j^{(0)} = q \), 
so the expected search size in the first iteration is indeed $q$. In the following iterations \(t\), \(t\in\{2,\dots,T\}\), the expected search size is automatically adapted depending on the sizes of the previously selected models $S^{(i)}$, \(i<t\); see Section \ref{sec:illustrative} of the appendix for an illustrative example. 
The parameter \(K>0\) controls the learning rate of the algorithm. The larger \(K\) is chosen, the faster the selection probabilities \(r_j^{(t)}\) of the variables \(X_j\) are adapted. 
Based on our experience with numerous simulated and real data examples, we recommend the choices \(K=n\) and \(q\in[5,15]\). A more detailed discussion of the tuning parameters is given in Section \ref{sec:choice}, where we investigate the performance of AdaSub with respect to the choices of \(q\) and \(K\) in a simulation study. 
The number of iterations \(T\in\N\) can be specified in advance. Alternatively one might impose an automatic stopping criterion for the algorithm, but we strongly advise to inspect the output of AdaSub by appropriate diagnostic plots and assess the convergence of the algorithm interactively; see Section \ref{sec:additional} of the appendix for suggested diagnostic plots. 

\section{Limiting properties of AdaSub}\label{sec:limiting}

In this section we summarize theoretical results concerning the limiting properties of AdaSub while a detailed exposition and proofs of the results can be found in the appendix to this paper. In particular, we address the question under which conditions it can be guaranteed that AdaSub ``converges correctly'' against the \(C\)-optimal model \(S^*=f_C(\mathcal{P})\). 

\begin{definition}\label{def:correct_conv}
For a given selection problem with model selection criterion \(C\), the AdaSub algorithm is said to converge to the \(C\)-optimal model \(S^*\) if and only if for all \(j \in \mathcal{P}\) we have for the selection probability of explanatory variable \(X_j\) that 
\vspace{-1ex} 
\begin{equation} r_j^{(t)}\overset{\text{a.s.}}{\rightarrow}\begin{cases} 1 &\mbox{, if } j\in S^*, \\
0 & \mbox{, if } j\notin S^*, \\ \end{cases}  ~~~\text{ for } t\rightarrow\infty \, .   \end{equation}
\end{definition}

By definition, AdaSub converges to the \(C\)-optimal model \(S^*\) if the selection probabilities \(r_j^{(t)}\) converge almost surely against one (zero) for explanatory variables included (not included) in \(S^*\). The \(C\)-optimal convergence of AdaSub implies that, for any fixed threshold \(\rho\in(0,1)\), the thresholded model \(\hat{S}_\rho=\{j\in\mathcal{P}: r_j^{(T)}>\rho\}\) will coincide with the \(C\)-optimal model \(S^*\) if the number of iterations \(T\) of AdaSub is large enough. Note that even when AdaSub does not converge to the \(C\)-optimal model in the sense of Definition \ref{def:correct_conv}, it is still possible that the \(C\)-optimal model is identified by AdaSub, by considering the ``best'' model \(\hat{S}_\text{b}\) found by AdaSub after a finite number of iterations. 

We now introduce the so called ordered importance property (OIP) of a given variable selection problem with criterion \(C\), which turns out to be a sufficient condition for the \(C\)-optimal convergence of AdaSub.

\begin{definition}\label{def:OIP}
Given that dataset \(\mathcal{D}=(\bs X,\bs Y)\) is observed, let \(C_{\mathcal{D}}:\mathcal{M}\rightarrow\R\) be a selection criterion with well-defined function \(f_C\) and \(C\)-optimal model \(S^*=f_C(\mathcal{P})=\{j_1,\dots,j_{s^*}\}\) of size \(s^*=|S^*|\).
Then the selection criterion \(C\) is said to fulfil the \textit{ordered importance property (OIP)} for the sample \(\mathcal{D}\), if there exists a permutation \((k_1,\dots,k_{s^*})\) of \((j_1,\dots,j_{s^*})\) such that for each \(i=1,\dots,s^*-1\) we have
\begin{equation} k_i \in f_C(V) \text{ for all } V\subseteq \mathcal{P} \text{ with } \{k_1,\dots,k_i\}\subseteq V. \label{OIP:eq}\end{equation}
\end{definition}

\begin{theorem}\label{OIP_Theorem}
Given that dataset \(\mathcal{D}=(\bs X,\bs Y)\) is observed, let \(C_{\mathcal{D}}:\mathcal{M}\rightarrow\R\) be a selection criterion with well-defined function \(f_C\) and \(C\)-optimal model \(S^*\). Suppose that the ordered importance property (OIP) is satisfied. Then AdaSub converges to the \(C\)-optimal model~\(S^*\).  
\end{theorem}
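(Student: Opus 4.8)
My plan is to work on the probability space generated by the algorithm, equipped with the filtration \(\mathcal{F}_t=\sigma(V^{(1)},\dots,V^{(t)})\); by construction the \(b_j^{(t)}\) are conditionally independent \(\mathrm{Bernoulli}(r_j^{(t-1)})\) given \(\mathcal{F}_{t-1}\), and \(r_j^{(t)}\) is \(\mathcal{F}_t\)-measurable. Writing \(N_V^{(t)}(j)=\sum_{i\le t}1_{V^{(i)}}(j)\) for the number of times \(j\) has been \emph{considered} and \(N_S^{(t)}(j)=\sum_{i\le t}1_{S^{(i)}}(j)\le N_V^{(t)}(j)\) for the number of times it has been \emph{selected}, the update reads \(r_j^{(t)}=\frac{q+K N_S^{(t)}(j)}{p+K N_V^{(t)}(j)}\); hence for the two target limits it suffices to control the ratio \(N_S^{(t)}(j)/N_V^{(t)}(j)\), and I will show that \(N_V^{(t)}(j)\to\infty\) a.s.\ for every \(j\), that this ratio tends to \(1\) when \(j\in S^*\), and to \(0\) when \(j\notin S^*\). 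Throughout I use that OIP, together with Remark~\ref{remark:fC}(b), yields the clean statement: for the OIP-ordering \((k_1,\dots,k_{s^*})\) and every \(i\le s^*\), one has \(k_i\in f_C(V)\) whenever \(\{k_1,\dots,k_i\}\subseteq V\) (the case \(i=s^*\) being exactly Remark~\ref{remark:fC}(b)).

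For the variables in \(S^*\) I would argue by induction on \(i\) that \(r_{k_i}^{(t)}\to 1\) a.s. Fix \(i\) and assume \(r_{k_l}^{(t)}\to 1\) for all \(l<i\) (an empty hypothesis for \(i=1\)). Let \(B_i^{(t)}=\{\{k_1,\dots,k_{i-1}\}\subseteq V^{(t)}\}\), so that \(\P(B_i^{(t)}\mid\mathcal{F}_{t-1})=\prod_{l<i}r_{k_l}^{(t-1)}\to 1\). Partition the iterations in which \(k_i\) is considered into those on which \(B_i^{(t)}\) holds (count \(H^{(t)}\)) and those on its complement (count \(O^{(t)}\)), so \(N_V^{(t)}(k_i)=H^{(t)}+O^{(t)}\). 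By the displayed consequence of OIP, every iteration counted in \(H^{(t)}\) is a guaranteed hit, hence \(N_S^{(t)}(k_i)\ge H^{(t)}\) and the number of misses is at most \(O^{(t)}\); consequently \(1-r_{k_i}^{(t)}\le \frac{p-q+K O^{(t)}}{p+K(H^{(t)}+O^{(t)})}\), and it remains to prove \(H^{(t)}\to\infty\) and \(O^{(t)}/H^{(t)}\to 0\) a.s.

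The two engines are probabilistic. First, a \emph{dichotomy} shows every variable is considered infinitely often: by L\'evy's conditional Borel--Cantelli lemma the event \(\{N_V^{(\infty)}(j)=\infty\}\) coincides a.s.\ with \(\{\sum_t r_j^{(t-1)}=\infty\}\); were the latter sum finite, \(N_V^{(t)}(j)\) would be eventually constant, say \(\le M\), forcing \(r_j^{(t)}\ge q/(p+KM)>0\) for large \(t\) and hence \(\sum_t r_j^{(t-1)}=\infty\), a contradiction. Thus \(\sum_t r_j^{(t-1)}=\infty\) and \(N_V^{(t)}(j)\to\infty\) a.s. Second, I will invoke the ratio strong law for adapted indicators (a consequence of the same L\'evy extension / martingale convergence): if \(A_i\in\mathcal{F}_i\) with compensators \(p_i=\P(A_i\mid\mathcal{F}_{i-1})\), then \(\sum_{i\le t}1_{A_i}\big/\sum_{i\le t}p_i\to 1\) a.s.\ on \(\{\sum_i p_i=\infty\}\). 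Applying it to the \(H\)- and \(O\)-increment events (with conditional probabilities \(r_{k_i}^{(t-1)}\prod_{l<i}r_{k_l}^{(t-1)}\) and \(r_{k_i}^{(t-1)}(1-\prod_{l<i}r_{k_l}^{(t-1)})\)), using \(\sum_t r_{k_i}^{(t-1)}=\infty\) from the dichotomy and \(\prod_{l<i}r_{k_l}^{(t-1)}\to 1\), gives \(H^{(t)}\to\infty\) and, via a Toeplitz/weighted-Ces\`aro estimate on \(\sum\rho_i\varepsilon_i/\sum\rho_i(1-\varepsilon_i)\) with \(\varepsilon_i=1-\prod_{l<i}r_{k_l}^{(i-1)}\to 0\), also \(O^{(t)}/H^{(t)}\to 0\). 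Hence \(r_{k_i}^{(t)}\to 1\), closing the induction.

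Finally, the variables \(j\notin S^*\) are handled by the mirror image, now using \(B^{(t)}=\{S^*\subseteq V^{(t)}\}\) with \(\P(B^{(t)}\mid\mathcal{F}_{t-1})=\prod_{l\le s^*}r_{k_l}^{(t-1)}\to 1\) by the first part. When \(j\in V^{(t)}\) and \(B^{(t)}\) holds, Remark~\ref{remark:fC}(b) gives \(f_C(V^{(t)})=S^*\not\ni j\), so \(j\) is necessarily a miss and selections of \(j\) occur only on the complement of \(B^{(t)}\). The same dichotomy gives \(N_V^{(t)}(j)\to\infty\), and the ratio strong law together with the Toeplitz estimate (now with \(\varepsilon_i=1-\prod_{l\le s^*}r_{k_l}^{(i-1)}\to 0\)) yields \(N_S^{(t)}(j)/N_V^{(t)}(j)\to 0\), whence \(r_j^{(t)}\to 0\). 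I expect the main obstacle to be precisely the apparent circularity in the \(S^*\)-induction — controlling how often \(k_i\) is considered needs a positive lower bound on \(r_{k_i}^{(t)}\), which is itself the quantity under study — and the role of the dichotomy is exactly to break this loop; a secondary technical point is verifying the ratio strong law in the regime \(p_i\to 1\), where the predictable quadratic variation may remain finite while \(\sum_i p_i\) diverges.
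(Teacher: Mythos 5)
Your proposal is correct and is essentially the paper's own argument: the paper likewise rests on the Dubins--Freedman conditional Borel--Cantelli/ratio strong law (Theorem~\ref{Theorem Borel}), on the fact that every variable is considered infinitely often (Lemma~\ref{lemma_borel}), and on an induction along the OIP permutation in which a law-of-total-probability decomposition --- your $H$/$O$ split --- drives the conditional selection probability of $k_i$ to one and that of each $j\notin S^*$ to zero. The only real difference is packaging: the paper isolates Theorem~\ref{keylemma} (a.s.\ convergence of the conditional selection probabilities $p_j^{(t)}$ to \emph{any} limit in $[0,1]$ forces $r_j^{(t)}$ to the same limit, proved by a Ces\`aro average along the consideration times, and also covering limits in $(0,1)$ as needed for Corollary~\ref{OIP_replacement}) and then proves the theorem under the weaker condition OIP$'$, whereas you extract only the $0/1$ limits directly from one-sided bounds on the hit/miss counts; your worry about the ratio law when the compensators tend to one is unfounded, since Theorem~\ref{Theorem Borel}(b) requires only $\sum_i q_i=\infty$.
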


We briefly describe the main idea behind OIP and the proof of Theorem \ref{OIP_Theorem}: 
OIP assumes that there exists an \(k_1\in S^*\) (the ``most important'' variable \(X_{k_1}\)) such that it is always selected to be in the best subset \(f_C(V)\) for all sets \(V\subseteq \mathcal{P}\) with \(k_1\in V\). By Theorem \ref{keylemma} of the appendix we conclude that \(r^{(t)}_{k_1}\rightarrow 1\) (almost surely).  
Furthermore, by OIP there exists an \(k_2\in S^*\) (the ``second most important'' variable \(X_{k_2}\)) such that it is always selected to be in the best subset \(f_C(V)\) for all sets \(V\subseteq \mathcal{P}\) with \(k_1,k_2\in V\). In other words, variable \(X_{k_2}\) is always selected to be in the best subset as long as variable \(X_{k_1}\) is also considered. By Theorem \ref{keylemma} we similarly conclude that \(r^{(t)}_{k_2}\rightarrow 1\) (a.s.). We continue in the same way and obtain that \(r^{(t)}_{k_i}\rightarrow 1\) (a.s.) for each \(i=1,\dots,s^*-1\). Now by the definition of the map \(f_C\) and the \(C\)-optimal model \(S^*\) it holds \(f_C(V)=S^*\) for all \(V\subseteq\mathcal{P}\) with \(S^*\subseteq V\) (see Remark \ref{remark:fC}). Thus with Theorem \ref{keylemma} we conclude that \(r_{k_{s^*}}^{(t)}\rightarrow 1\) (a.s.) and that \(r_j^{(t)}\rightarrow 0\) (a.s.) for each \(j\in\mathcal{P}\setminus S^*\). 
In the appendix of this paper we prove the \(C\)-optimal convergence of AdaSub under a slightly different (weaker) sufficient condition OIP' (see Definition \ref{def:OIP'} and Theorem \ref{OIP_Theorem'}). 
For ease of presentation here we focused on the more intuitive version of OIP in Definition \ref{def:OIP}. Theorem \ref{OIP_Theorem'} of the appendix implies Theorem \ref{OIP_Theorem} above. 

Note that OIP requires only the existence of such a permutation of the variables with indices in \(S^*\) and not its identification or uniqueness. So in order to guarantee that OIP holds, we do not need to know any concrete permutation, but only that such a permutation exists. On the other hand, this condition cannot be easily checked, since we do not know the set \(S^*\), which AdaSub actually tries to identify.  
Despite this, note that if we observe that the AdaSub algorithm does not converge to the \(C\)-optimal model, i.e.\ if there exists \(j\in\mathcal{P}\) with \(r_j^{(t)}\rightarrow r_j^*\), \(r_j^*\in(0,1)\) with positive probability, then we can conclude that OIP is not satisfied. In that situation we actually might not wish to select \(S^*=f_C(\mathcal{P})\), since then there is no ``stable learning path'' in the sense of OIP. 
Instead, we propose to consider the thresholded model \(\hat{S}_{\rho}\) for some large threshold value (e.g. \(\rho = 0.9\)). 

Indeed, even if OIP does not hold, Corollary \ref{OIP_replacement} of the appendix implies that the thresholded model \(\hat{S}_{\rho}\) will (for fixed \(\rho\in(0,1)\) and \(T\) large enough) contain at least those variables in \(S^*\) that are included in a maximal ``learning path'' in the sense of OIP. 
Simulation studies (Section \ref{sec:simstudy}) show that in most of the cases when OIP is not satisfied the thresholded model \(\hat{S}_{\rho}\) provides a sparser and more stable solution (with less false positives) in comparison to the ``best'' model \(\hat{S}_{\textrm{b}}\) found by AdaSub and in comparison to other competitive variable selection procedures including the Lasso, the Adaptive Lasso and the SCAD (e.g.\ Figure \ref{HIGH_Toeplitz}); see also the examples discussed 
in Sections \ref{sec:illustrative} and \ref{sec:additional} of the appendix. 
In particular, simulation results with the BIC as the selection criterion indicate that in ``unstable'' situations with small sample sizes (e.g.\ \(n\in\{40,60\}\), \(p=30\)) the thresholded model \(\hat{S}_{\rho}\) leads to a large reduction in the mean number of false positives in comparison to the ``best'' model \(\hat{S}_{\textrm{b}}\) found by AdaSub and particularly in comparison to the BIC-optimal model \(S^*\) (see Figures~\ref{LOW_Global00}, \ref{LOW_Toeplitz} and \ref{fig:lowADD}). 
Note that in practice the threshold \(\rho\in(0,1)\) should not be chosen too close to one, since otherwise the selection probabilities \(r_j^{(T)}\) of ``important'' variables may not have exceeded that threshold after a finite number of iterations \(T\in\N\). We observe that the choice \(\rho=0.9\) works empirically well in combination with a sufficiently large number of iterations \(T\) (see Sections \ref{sec:simstudy} and \ref{sect:realdata}). 

The idea behind the ordered importance property (OIP) is connected to the concept of partial faithfulness (PF) underlying the PC-simple algorithm for variable selection of \citet{buhlmann2010}. 
In a random design setting, let \(\rho(Y,X_j~|~X_{S})\) denote the partial correlation between the response \(Y\) and variable \(X_j\) given the set of variables \(X_S:=\{X_k: k\in S\}\) for some subset \(S\subseteq\mathcal{P}\). 
\citet{buhlmann2010} show that if the covariance matrix of \((X_1,\dots,X_p)\) 
is strictly positive definite and if \(\{\beta_j: j\in S_0\} \sim f(b)db\), where \(f\) denotes a density on a subset of \(\R^{|S_0|}\) of an absolutely continuous distribution with respect to the Lebesgue measure, then the PF property holds almost surely with respect to the distribution generating the non-zero regression coefficients, which implies that for each \(j\in\mathcal{P}\) we have
\begin{equation}
\rho(Y,X_j~|~X_{S})\neq 0 \text{ for all } S\subseteq\mathcal{P}\setminus\{j\} ~\Longleftrightarrow~ j \in S_0 = \{ k\in\mathcal{P} :\,\beta_k\neq 0 \} \,. \label{PF_property}
\end{equation}
This means that any truly important variable \(X_j\) (i.e. \(\beta_j\neq 0\)) remains ``important'' when conditioning on any subset \(S\subseteq\mathcal{P}\setminus \{j\}\) (i.e. the corresponding partial correlation is non-zero). Therefore, if PF holds, one would hope that the criterion \(C\), which aims at identifying \(S_0\), does also satisfy the following analogous property (for each \(j\in\mathcal{P}\)):
\begin{equation} j\in f_C(V) \text{ for all } V\subseteq\mathcal{P} \text{ with } j\in V ~\Longleftrightarrow~ j \in S^*=f_C(\mathcal{P}) \,. \label{PF_analog}\end{equation} 
In the following, equation (\ref{PF_analog}) is referred to as the \textit{finite-sample PF property} for the criterion \(C\). 
Note that OIP is significantly weaker than the finite-sample PF property in the sense that in order to have \(j=k_i\in S^*\)
, we do not need to have \(j\in f_C(V)\) for \textbf{all} \(V\subseteq\mathcal{P}\) with \(j\in V\), but only for each \(V\subseteq\mathcal{P}\) with \(k_1,\dots,k_i\in V\). Similarly, an OIP on the population level (which is a weaker condition than the PF property) assumes that, if \(j=k_i\in S_0\)
, then it holds 
\( \rho(Y,X_j~|~X_S) \neq 0 \) for all \(S\subseteq\mathcal{P}\setminus\{j\}\) with \(\{k_1,\dots,k_{i-1}\}\subseteq S\). 
One cannot generally expect that the PF property (\ref{PF_property}) on the population level implies the finite-sample PF property (\ref{PF_analog}) or the weaker OIP in the given finite sample situation. But if OIP does not hold, then this indicates that the best model \(S^*\) according to the criterion \(C\) is not ``stable'' in the sense of (\ref{PF_analog}) and that there does not even exist a ``learning'' path \((k_1,\dots,k_{s^*})\), such that variable \(X_{k_i}\) is selected to be important in each ``relevant experiment'' in which \(X_{k_1},\dots, X_{k_i}\) are considered.

Theorem \ref{OIP_Theorem} guarantees the correct convergence of AdaSub against the \(C\)-optimal model \(S^*\) as the number of iterations \(t\) diverges, provided that OIP holds for the employed criterion \(C\) on the given dataset. However, it does not address the speed of convergence in terms of the required number of iterations to
identify the \(C\)-optimal model \(S^*\) as well as the number of iterations needed so
that the thresholded model of AdaSub \(\hat{S}_\rho\) equals the C-optimal model \(S^*\). 
The general analytical investigation of the speed of convergence under OIP is difficult without further assumptions regarding the particular selection properties for variables in the \(C\)-optimal model \(S^*\). 
In Remark \ref{remark:speed} of Section~\ref{sec:theory} of the appendix we provide analytical results for the speed of convergence under the finite-sample PF property (\ref{PF_analog}), which can be viewed as a \textit{best case} scenario where variables \(X_j\) in the \(C\)-optimal model \(S^*\) are \textit{always} selected to be in the best sub-model \(f_C(V)\) for all possible subspaces \(V\subseteq\mathcal{P}\) with \(j\in\mathcal{P}\). 
Here, we compare via simulations the best case scenario of the finite-sample PF property with a \textit{worst case} scenario under a minimal requirement of OIP. The \textit{minimal OIP} holds if there exists a unique OIP permutation \((k_1,\dots,k_{s^*})\) of variables in \(S^*=\{k_1,\dots,k_{s^*}\}\) such that, for \(i=1,\dots,s^*\), variable \(X_{k_i}\) is \textit{only} selected to be in the best sub-model \(f_C(V)\) if \(\{k_1,\dots,k_i\}\subseteq V\), i.e.\ for \(i=1,\dots,s^*\) it holds
\begin{equation}
k_i\in f_C(V) ~\Longleftrightarrow~ \{k_1,\dots,k_i\}\subseteq V \,.
\end{equation}
This means that, under minimal OIP, variable \(X_{k_i}\) is never selected to be in the best sub-model \(f_C(V)\) if any of the variables \(X_{k_1},\dots,X_{k_{i-1}}\) are not included in the subspace \(V\) (i.e.\ \(k_l\notin V\) for some \(1\leq l<i\)).

\begin{figure}[h!]
\centering
\includegraphics[width=0.95\textwidth]{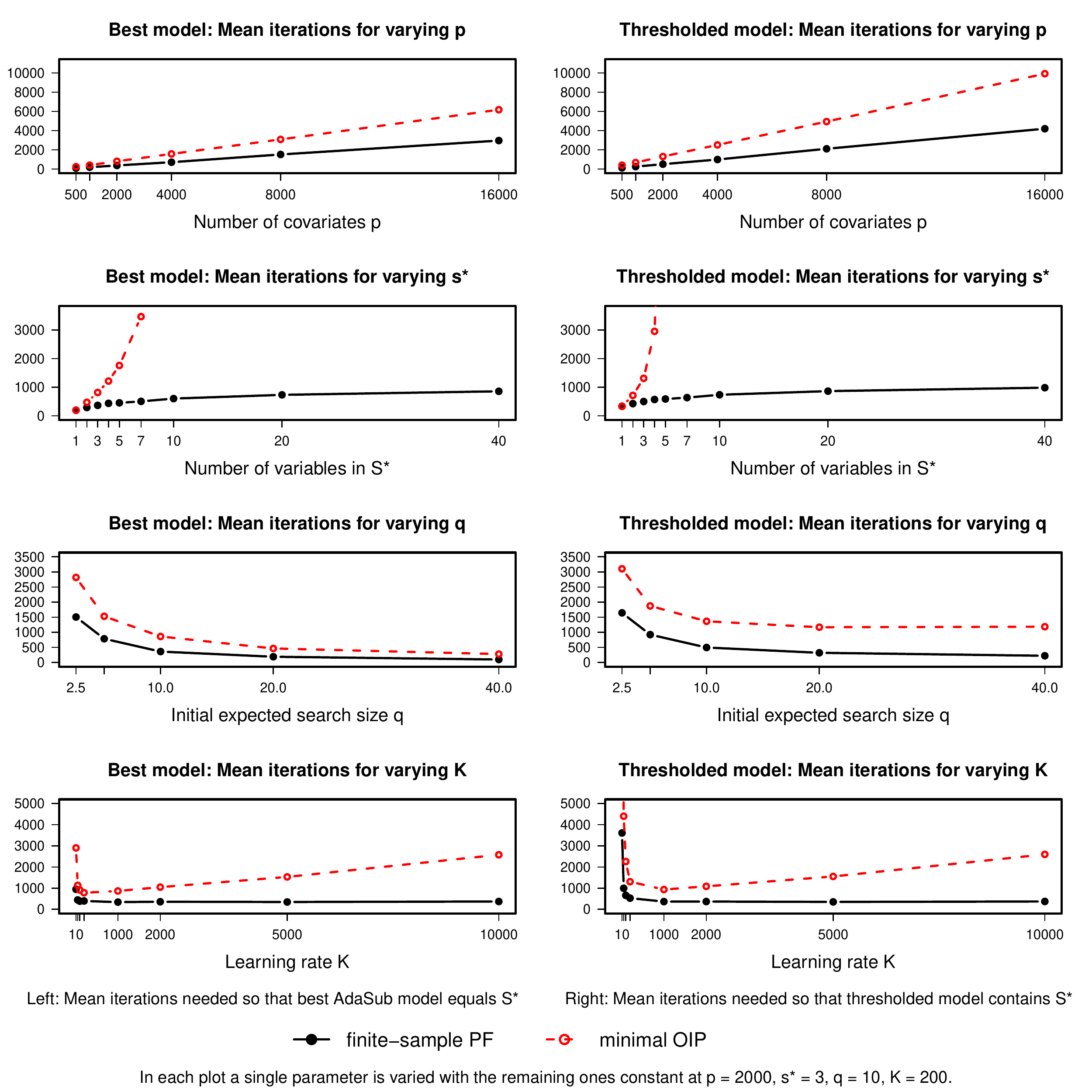} 
\caption{\small Speed of convergence of AdaSub under finite-sample PF and minimal OIP assumptions in terms of mean numbers of iterations needed so that the best AdaSub model \(\hat{S}_{\text{b}}\) equals the \(C\)-optimal model \(S^*\) (left side) and that the thresholded model \(\hat{S}_{0.9}\) of AdaSub contains \(S^*\) (right side). 
Empirical means are based on 500 simulations. } 
\label{fig:speed}
\end{figure}

Figure~\ref{fig:speed} illustrates the speed of convergence of AdaSub under the best case scenario (finite-sample PF) and the worst case scenario (minimal OIP), with respect to the number of covariates \(p\), the number of variables \(s^*\) in \(S^*\), the initial expected search size \(q\) and the learning rate \(K\) in AdaSub. The values of these parameters are set to \(p=2000\), \(s^*=3\), \(q=10\) and \(K=200\); however, in order to investigate the individual effects on the required numbers of iterations of AdaSub, in each plot one of these parameters is varied while the remaining ones are held constant (at the values given above). It can be observed that the mean numbers of iterations required to first identify the \(C\)-optimal model \(S^*\) (left side of Figure~\ref{fig:speed}) 
tend to be smaller than the mean numbers of iterations required so that all variables in \(S^*\) are included in the thresholded model \(\hat{S}_\rho\) with threshold \(\rho=0.9\) (right side of Figure~\ref{fig:speed}), as in the second case the selection probabilities \(r_j^{(t)}\) of variables \(X_j\) with \(j\in S^*\) have to be adjusted multiple times in order to exceed the threshold \(\rho=0.9\).  

Figure~\ref{fig:speed} further shows that the mean numbers of required iterations scale approximately linearly with the number of possible covariates \(p\) under both scenarios of finite-sample PF and minimal OIP (compare Remark~\ref{remark:speed}). The discrepancy between the best case (finite-sample PF) and worst case scenario (minimal OIP) becomes more pronounced for an increasing number of variables \(s^*\) in the \(C\)-optimal model \(S^*\): while mean numbers of required iterations scale approximately logarithmically with increasing \(s^*\) under the finite-sample PF assumption (compare Remark~\ref{remark:speed}), the required iterations quickly increase (non-linearly) with \(s^*\) under the minimal OIP assumption. 
Note that the minimal OIP is a worst case scenario; in practice, when applying AdaSub with a criterion such as the EBIC for a given dataset, the finite-sample PF property often holds for a subset of the variables in \(S^*\) (see also Figure~\ref{fig:PF}), while some of the variables in \(S^*\) may only be selected when certain other variables are also considered at the same time in \(V^{(t)}\). Generally, the AdaSub method is very efficient in sparse scenarios (with small \(s^*\)); on the other hand, the method is not primarily designed for dense settings (with large \(s^*\)), which may occur less likely in high-dimensional situations with limited sample sizes.     

Regarding the tuning parameters of AdaSub, the required iterations are approximately inversely proportional to the initial search size \(q\) under the finite-sample PF assumption (compare Remark~\ref{remark:speed}); a similar decline, though at a larger level, is observed under the minimal OIP assumption.  For an  increasing learning rate \(K\) the required iterations are monotonically decreasing under the finite-sample PF assumption, with the limiting case \(K\rightarrow\infty\) yielding the fastest convergence under finite-sample PF (compare Remark~\ref{remark:speed}, d), as in this case for \(j\in S^*\) it holds \(r_j^{(t)}\approx 1\) for \(t>T_j^{(1)}\), where \(T_j^{(1)}\) denotes the iteration in which variable \(X_j\) is considered (and selected) in $V$ for the first time. However, there is an important trade-off between small and large learning rates \(K\) under the minimal OIP assumption, as in this situation variables in the \(C\)-optimal model \(S^*\) may not always be selected when they are considered in the model search. If for example an important variable $X_j$ with \(j\in S^*\) is not selected when it is first considered in the model search (i.e. $j\in V^{(t)}$ but $j\notin S^{(t)}$), then $r_j^{(t)}=\frac{q}{p+K}$ is close to zero for large \(K\), so variable $X_j$ will probably not be considered in the model search for a long time.

Note that the presented results regarding the speed of convergence of AdaSub, with respect to the parameters \(p\), \(s^*\), \(q\) and \(K\), are based on idealized best case (finite-sample PF) and worst case (minimal OIP) scenarios. In practice, the realized scenario between these two extremes (and thus the speed of convergence) depends also on the properties of the employed selection criterion \(C\) as well as on the characteristics of the particular data situation, including the sample size \(n\), the correlation structure and the signal strength of important covariates. In particular, it can be observed that, with increasing sample size \(n\), variables in the \(C\)-optimal model tend to be selected more frequently for the different sub-problems in AdaSub, so that the finite-sample PF property is more likely to hold for a larger number of variables in the \(C\)-optimal model \(S^*\) (see Figure~\ref{fig:PF} in Section~\ref{sec:additional}), leading to a possibly faster convergence of the algorithm. A more detailed discussion follows in Section~\ref{sec:simstudy} where the performance of AdaSub is investigated in a simulation study for different selection criteria and for various data situations.  

Finally, we would like to emphasize that we have focused on the algorithmic convergence of AdaSub against the best model \(S^*\) according to a given criterion \(C\) (as the number of iterations \(T\) diverges). Based on the presented analysis, depending on the properties of the employed selection criterion \(C\), one may derive specific statistical consistency results for recovering the true underlying model \(S_0=\{j\in\mathcal{P}:\,\beta_j\neq 0\}\) (as the sample size \(n\) and the number of variables \(p\) diverge with a certain rate). 
We briefly indicate how such a consistency result can be obtained in case the employed selection criterion \(C\) is the (negative) BIC. 

For this, note that optimizing a given selection criterion \(C\) inside subspaces \(V\subseteq\mathcal{P}\) with \(S_0\not\subseteq V\) corresponds to variable selection in the situation of misspecified models. 
It has been shown that the BIC is a quasi-consistent criterion in such situations under mild regularity conditions for the classical asymptotic setting where the number of variables \(p\) is fixed and the sample size \(n\) diverges, i.e.\ with probability tending to one, the BIC selects the model that minimizes the Kullback-Leibler divergence to the true model (see e.g.\ \citealp{nishii1988}; \citealp{lv2014}; \citealp{Song2015}). 
By using such a result for each variable selection sub-problem \(f_C(V)=\argmax_{S\subseteq V,\,S\in\mathcal{M}} C(S)\) for all possible subspaces \(V\subseteq\mathcal{P}\), one can deduce that AdaSub in combination with the BIC yields a variable selection consistent procedure for the classical asymptotic setting, provided that the OIP condition on the population level (or alternatively the more stringent PF condition (\ref{PF_property})) is satisfied; this implies that, with probability tending to one, the thresholded model \(\hat{S}_\rho\) of AdaSub equals the true model \(S_0\) when the sample size \(n\) and the number of iterations \(T\) go to infinity for fixed \(p\). The detailed investigation of the variable selection consistency of AdaSub, including high-dimensional asymptotic settings where the number of variables \(p\) diverges with the sample size \(n\), is an interesting topic for future work.

\section{Simulation study}\label{sec:simstudy}

We have investigated the performance of AdaSub in extensive simulation studies and here we present some representative results. The discussion is divided into three parts: First, we examine relatively low-dimensional simulation examples where it is feasible to identify the best model according to an \(\ell_0\)-type criterion \(C\), so that 
it can be compared to the output of AdaSub. In the second part, we apply AdaSub on high-dimensional simulation examples and compare its performance with different well-known methods. Finally, we investigate the algorithmic stability of AdaSub and the effects of the choice of its tuning parameters. 
  
The following simulation setup is used: For a given sample size \(n\in\N\) and a number of explanatory variables \(p\in\N\) we simulate the design matrix \(\bs X=(X_{i,j})\in\mathbb{R}^{n\times p}\) with \(i\)-th row \(\bs X_{i,*}\sim\mathcal{N}_p(\bs 0,\bs \Sigma)\), where \(\bs \Sigma\in\R^{p \times p}\) is a positive definite correlation matrix with \(\Sigma_{k,k}=1\) for \(k=1,\dots,p\). Here, we consider a Toeplitz-correlation structure, i.e. for some \(c\in(-1,1)\) let \(\Sigma_{k,l}=c^{|k-l|}\) for all \(k\neq l\). Results for further correlation structures are presented in Section \ref{sec:additional} of the appendix. 

In particular, we examine the case of independent covariates (\(c=0\)) and the case of highly correlated covariates (\(c=0.9\)).  
For each dataset, we select \(s_0\in\{0,\dots,10\}\) and \(S_0\subset \mathcal{P}\) of size \(|S_0|=s_0\) randomly; then for each \(j\in S_0\) we independently simulate \(\beta_j^0\sim\mathcal{U}(-2,2)\) from the uniform distribution on \([-2,2]\), while we set \(\beta_j^0=0\) for all \(j\notin S_0\). 
The response \(\bs Y=(Y_1,\dots,Y_n)^T\) is then simulated via \(Y_i\overset{\text{ind.}}{\sim} N(\bs X_{i,*}\bs \beta^0, 1)\), \(i=1,\dots,n\), where \(\bs \beta^0 = (\beta^0_1,\dots,\beta^0_p)^T\). 
We apply AdaSub in combination with the (negative) \(\text{EBIC}_\gamma\) as a selection criterion for different regularization constants \(\gamma\in[0,1]\) (recall that \(\gamma=0\) corresponds to the usual BIC). In AdaSub we use the ``leaps and bounds'' algorithm implemented in the \textsf{R}-package \texttt{leaps} \citep{Lumley2009} to compute at iteration \(t\) the best model \(S^{(t)}\) according to \(\text{EBIC}_\gamma\) contained in \(V^{(t)}\).

\subsection{Low-dimensional setting}\label{sec:low}
It is illuminating to analyse the performance of AdaSub in a situation where we actually can compute the best model according to the criterion used (here BIC). We are thus able to answer the question whether AdaSub really recovers the BIC-optimal model. In order to compute the BIC-optimal model in reasonable computational time using the ``leaps and bounds'' algorithm we set \(p=30\). 
For a given correlation structure, the sample size \(n\) is increased from 40 to 200 in steps of size 20 and for each value of \(n\) we simulate 100 different datasets according to the simulation setup described above. In AdaSub we set \(q=5\), \(K=n\) and \(T=2000\).

\begin{figure}[h]
\centering
\includegraphics[width=14cm]{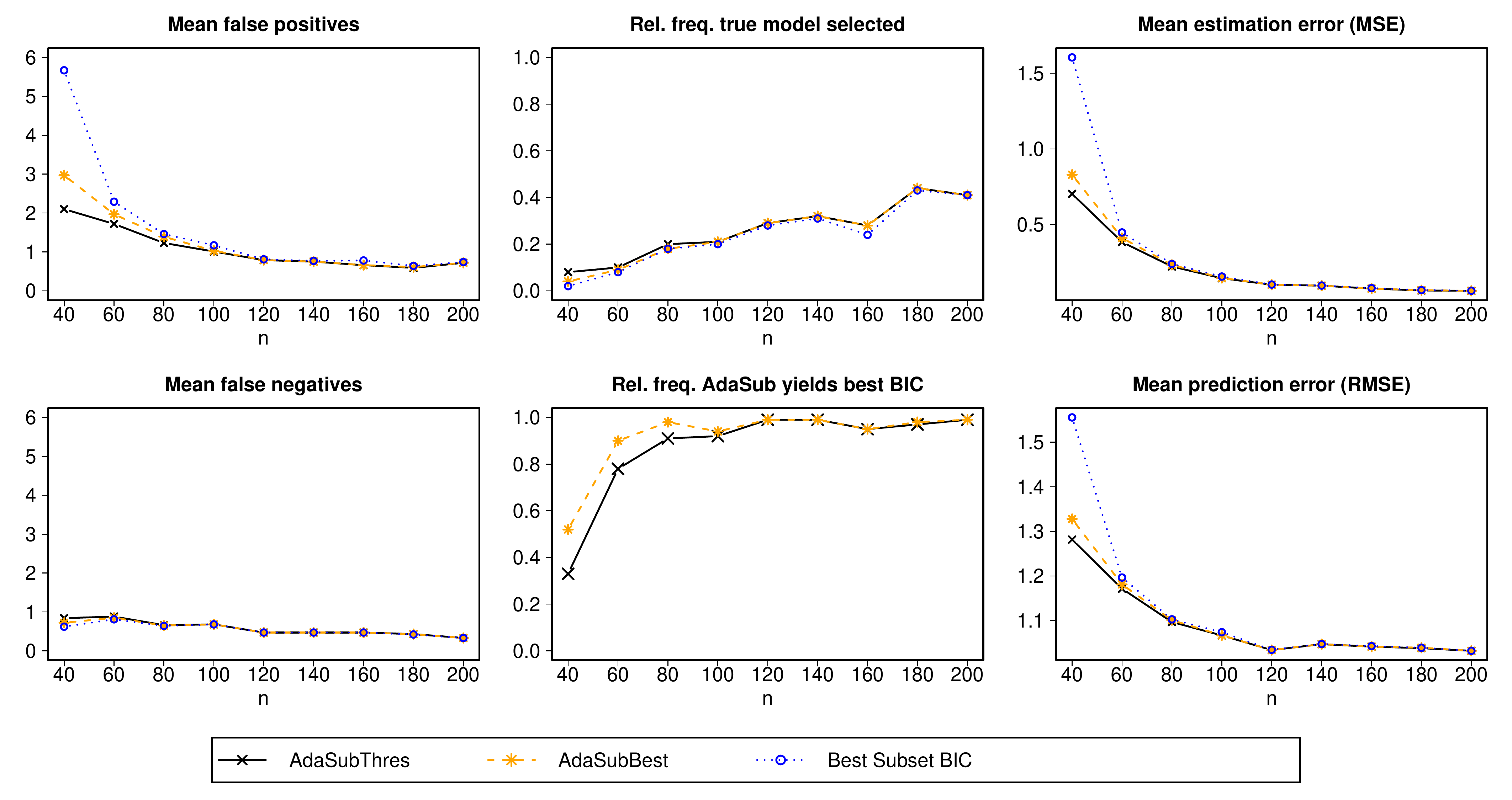} 
\caption{\small Low-dimensional example (\(p=30\)) with independent covariates (\(c=0\)): Comparison of thresholded model \(\hat{S}_{0.9}\) (AdaSubThres) and ``best'' model \(\hat{S}_{\textrm{b}}\) (AdaSubBest) from AdaSub with BIC-optimal model \(S^*\) (Best Subset BIC) in terms of mean number of false positives/ false negatives, relative frequency of selecting the true model \(S_0\), relative frequency of agreement between AdaSub models and \(S^*\), Mean Squared Error (MSE) and Root Mean Squared Prediction Error (RMSE) on independent test set with sample size 100.} 
\label{LOW_Global00}
\end{figure}

Figure~\ref{LOW_Global00} summarizes the results of the low-dimensio\-nal simulation study in the case of independent explanatory variables. 
The BIC-optimal model \(S^*\) tends to select many false positives for small sample sizes and to overfit the data. On the other hand, \(\hat{S}_{0.9}\) and \(\hat{S}_{\textrm{b}}\) from AdaSub yield sparser models and often reduce the number of falsely selected variables in a situation where the BIC is too liberal. This comes at the price of a slightly increased number of false negatives (for small \(n\)), but the overall effect of selecting a sparser model with AdaSub is beneficial for the given situation yielding higher relative frequencies of selecting the true model \(S_0\), smaller Mean Squared Errors (MSE) and smaller Root Mean Squared Prediction Errors (RMSE). Although the ``best'' sampled model \(\hat{S}_{\textrm{b}}\) from AdaSub identifies the BIC-optimal model more often than the thresholded model \(\hat{S}_{0.9}\) from AdaSub, the choice of \(\hat{S}_{0.9}\) is beneficial for the given situation. 
When the sample size increases, the BIC-optimal model becomes more ``stable'' and the relative frequencies that the models selected by AdaSub agree with the BIC-optimal models tend to one. 
We note that the tendency of AdaSub to suggest sparser models in unstable situations is also observed in further simulations with different correlation structures of \(\bs X\) (see Section \ref{sec:additional} of the appendix). 
\subsection{High-dimensional setting}\label{sec:high}
 
We now turn to a high-dimensional scenario, in which both the sample size \(n\) and the number of explanatory variables \(p\) tend to infinity with a certain rate. In particular, we consider the setting \(p=10\times n\) where \(n\) increases from 40 to 200 in steps of size 20 (and thus \(p\) increases from 400 to 2000). For each pair \((n,p)\) we simulate 500 datasets according to the simulation setup described above. 
We compare the ``best'' model \(\hat{S}_{\text{b}}\) from AdaSub and the thresholded model \(\hat{S}_{\rho}\) with \(\rho=0.9\) from AdaSub with different well-known methods for high-dimensional variable selection: We consider the Lasso, Forward Stepwise Regression, the SCAD, the Adaptive Lasso, Stability Selection with Lasso and Tilting. For the computation of the Lasso and the Adaptive Lasso we use the \textsf{R}-package \texttt{glmnet} \citep{friedman2010}, for Stability Selection the \textsf{R}-package \texttt{stabs} \citep{hofner2017}, 
for the SCAD the \textsf{R}-package \texttt{ncvreg} \citep{breheny2011} and for Tilting the \textsf{R}-package \texttt{tilting} \citep{tilting2016}. In AdaSub we choose the \(\text{EBIC}_{\gamma}\) with parameter \(\gamma=0.6\) or \(\gamma=1\) as the criterion \(C\); additionally we set \(q=10\), \(K=n\) and \(T=5000\). Note that \(p=O(n^k)\) with \(k=1\), so that we have \(\gamma>1-\frac{1}{2k}\) and thus \(\text{EBIC}_{\gamma}\) is a variable selection consistent criterion for the given asymptotic setting for both choices of \(\gamma\in\{0.6,1\}\). The choice of \(\gamma\) in \(\text{EBIC}_{\gamma}\) adds flexibility regarding the focus of the analysis (as illustrated in Figures~\ref{HIGH_Global00} and \ref{HIGH_Toeplitz}): if the main aim is variable selection with a small number of selected false positives then the choice \(\gamma=1\) is to be preferred inducing more sparsity, while the choice  \(\gamma=0.6\) yields more liberal variable selection which can be beneficial for predictive performance.
 
For comparison reasons we also choose the regularization parameter of the Lasso, the SCAD and Forward Stepwise Regression according to \(\text{EBIC}_{\gamma}\) (with \(\gamma=0.6\) or \(\gamma=1\)). 
Instead of the usual Lasso and SCAD estimators we use versions of the Lasso-OLS-hybrid (see also \citealp{Efron2004}; \citealp{Belloni2013}) where we compute the \(\text{EBIC}_{\gamma}\)-values of all models along the Lasso-path (and the SCAD-path, respectively) using the ordinary least-squares (OLS) estimators and finally select the model (with corresponding OLS estimator) yielding the lowest \(\text{EBIC}_{\gamma}\)-value.
The additional tuning parameter of the SCAD penalty is set to the default value of 3.7 (as recommended in \citealp{Fan2001}).
 For the Adaptive Lasso we derive the initial estimator with the usual Lasso where the regularization parameter is chosen using 10-fold cross-validation and compute in the second step an additional Lasso path where the regularization parameter is chosen according to \(\text{EBIC}_{\gamma}\). In Section~\ref{sec:additional} of the appendix the performance of the AdaSub models is additionally compared with Lasso, Adaptive Lasso and SCAD estimators where the final regularization parameters are tuned with cross-validation instead of using the \(\text{EBIC}_{\gamma}\) (see Figures~\ref{fig:CV1} and~\ref{fig:CV2}). 
We make use of the complementary pairs version of Stability Selection yielding improved error bounds \citep{shah2013}. The parameters for Stability Selection are chosen such that the expected number of type I errors is bounded by 1 (using the per-family error rate bound), while using the threshold 0.6 and considering 100 subsamples. 
The final estimator for Stability Selection is the OLS estimator for the model identified by Stability Selection.  

Relevant is also the adaptive variable selection approach of \citet{cho2012} via Tilting. Note that this approach is conceptually different from AdaSub in the sense that it builds a sequence of nested subsets \(S^{(1)}\subset S^{(2)} \subset \ldots \subset S^{(m)}\) by gradually adding explanatory variables based on ``tilted'' correlations and then selecting \(\hat{S}=\argmin_{S^{(i)}} \text{EBIC}_\gamma(S^{(i)})\).  
For the Tilting procedure we consider the version TCS2 based on rescaling rule 2 (see \citealp{cho2012}) and we always use the \(\text{EBIC}_{\gamma}\) with \(\gamma=1\) for final model selection, since we observe that the choice \(\gamma=0.6\) yields unreasonably large numbers of false positives. Due to the increasing computational demand of Tilting for larger values of \(p\), the maximum number of selected variables is set to 10 and results are only reported for \(p\leq 1200\) (i.e.\ \(n\leq120\)). 
Our simulations confirm the observation in \citet{cho2012} that Tilting tends to outperform the PC-simple algorithm, thus we do not report the detailed results for the PC-simple algorithm here.

\begin{figure}[!ht]
\centering
\subfloat[Results for \(\text{EBIC}_\gamma\) with \(\gamma = 0.6\).]{\includegraphics[width=14cm]{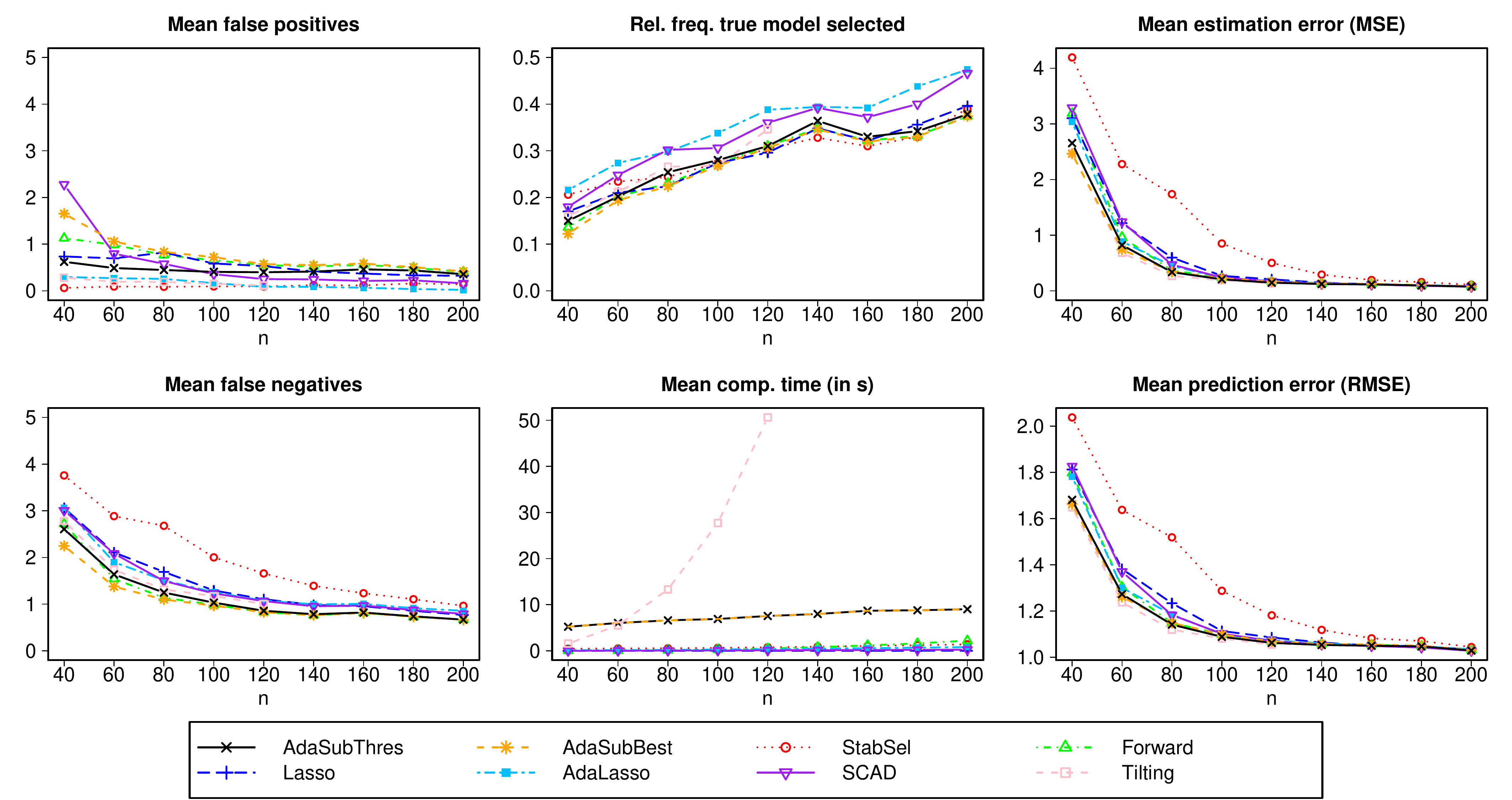}}  \\[2mm]
\subfloat[Results for \(\text{EBIC}_\gamma\) with \(\gamma = 1\).]{\includegraphics[width=14cm]{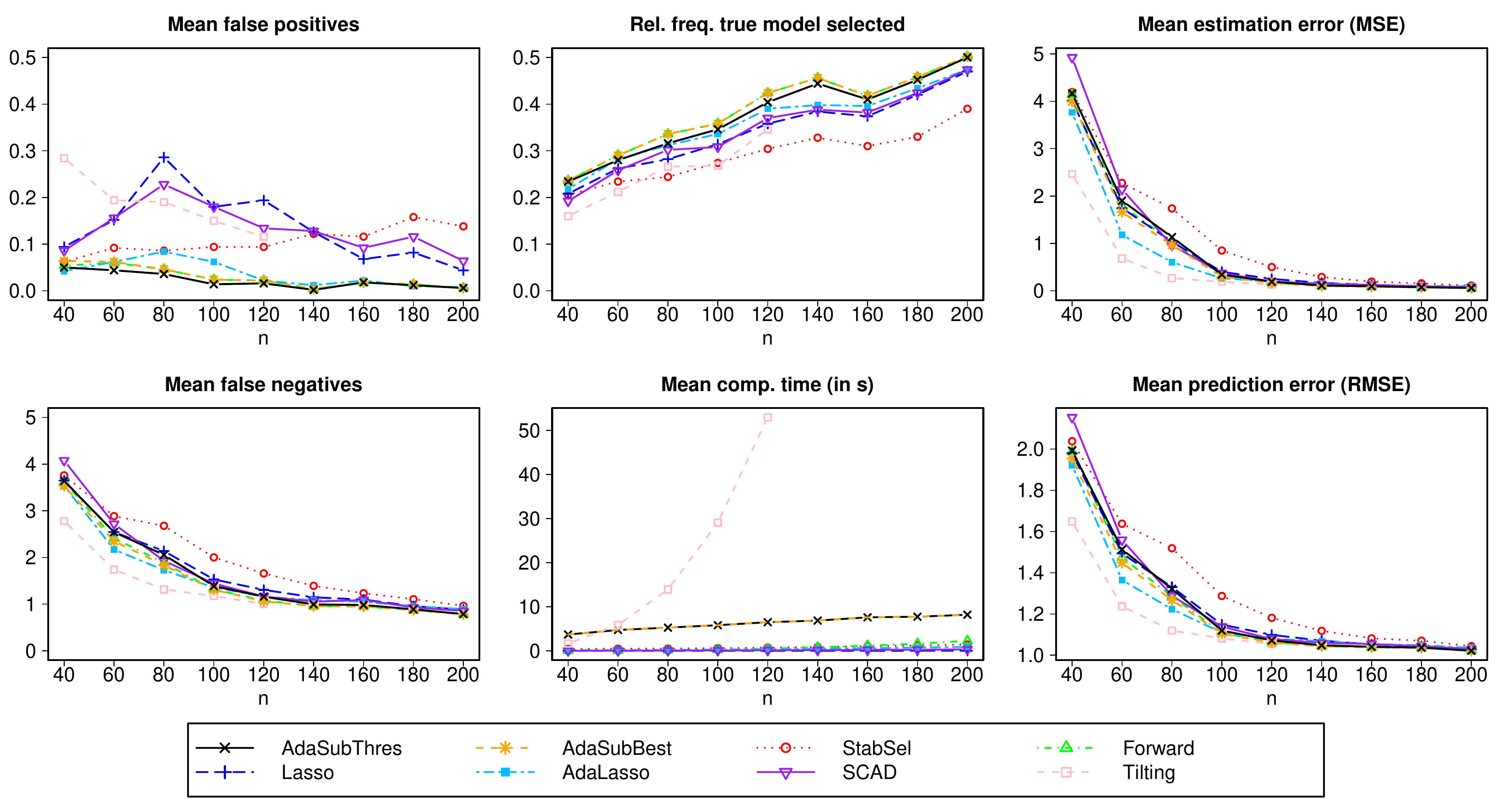}}
\caption{\label{HIGH_Global00}High-dimensional example (\(p=10n\)) with independent covariates (\(c=0\)): Comparison of thresholded model (AdaSubThres) and ``best'' model (AdaSubBest) from AdaSub with Stability Selection (StabSel), Forward Stepwise, Lasso, Adaptive Lasso (AdaLasso), SCAD and Tilting in terms of mean number of false positives/ false negatives, rel. freq. of selecting the true model, mean comp. time, MSE and RMSE.} 
\end{figure}

Figure~\ref{HIGH_Global00} summarizes the results of the high-dimen\-sional simulation study in the case of independent explanatory variables. 
For \(\gamma=0.6\), the ``best'' model \(\hat{S}_{\text{b}}\) from AdaSub tends to include more false positives than the thresholded model \(\hat{S}_{0.9}\), while the number of mean false negatives in \(\hat{S}_{\text{b}}\) is only slightly reduced for small sample sizes. Thus, in this situation with a quite liberal choice of the selection criterion \(\text{EBIC}_{0.6}\), considering the thresholded model is beneficial and yields more ``stable'' variable selection than the ``best'' model according to the criterion identified by AdaSub. On the other hand, for \(\gamma=1\), the \(\text{EBIC}_\gamma\) criterion enforces more sparsity and the performance of the thresholded and ``best'' model from AdaSub is very similar, with slight advantages of the ``best'' model yielding on average less false negatives.      
For \(\gamma=0.6\), the SCAD selects too many false positives if the sample size is small. 
On the other hand, Stability Selection with the Lasso tends to reduce the number of mean false positives in comparison to a single run of the Lasso (for \(\gamma=0.6\)), but at the prize of a larger number of mean false negatives, leading to an undesirable estimative and predictive performance. Furthermore, when the aim is the identification of the true underlying model, Stability Selection is uniformly outperformed by the AdaSub models when considering \(\text{EBIC}_1\) as the selection criterion in AdaSub. As might have been expected in a situation with independent explanatory variables, the performance of Forward Stepwise Selection is quite similar to the ``best'' model identified by AdaSub. In the considered setting it is generally observed that the AdaSub models, Forward Stepwise Selection and the Adaptive Lasso in combination with \(\text{EBIC}_1\) tend to yield the best results with respect to variable selection, while the AdaSub models with \(\text{EBIC}_{0.6}\) and Tilting with \(\text{EBIC}_1\) tend to perform best with respect to estimation and prediction.

\begin{figure}[!ht]
\centering
\subfloat[Results for \(\text{EBIC}_\gamma\) with \(\gamma = 0.6\).]{\includegraphics[width=14cm]{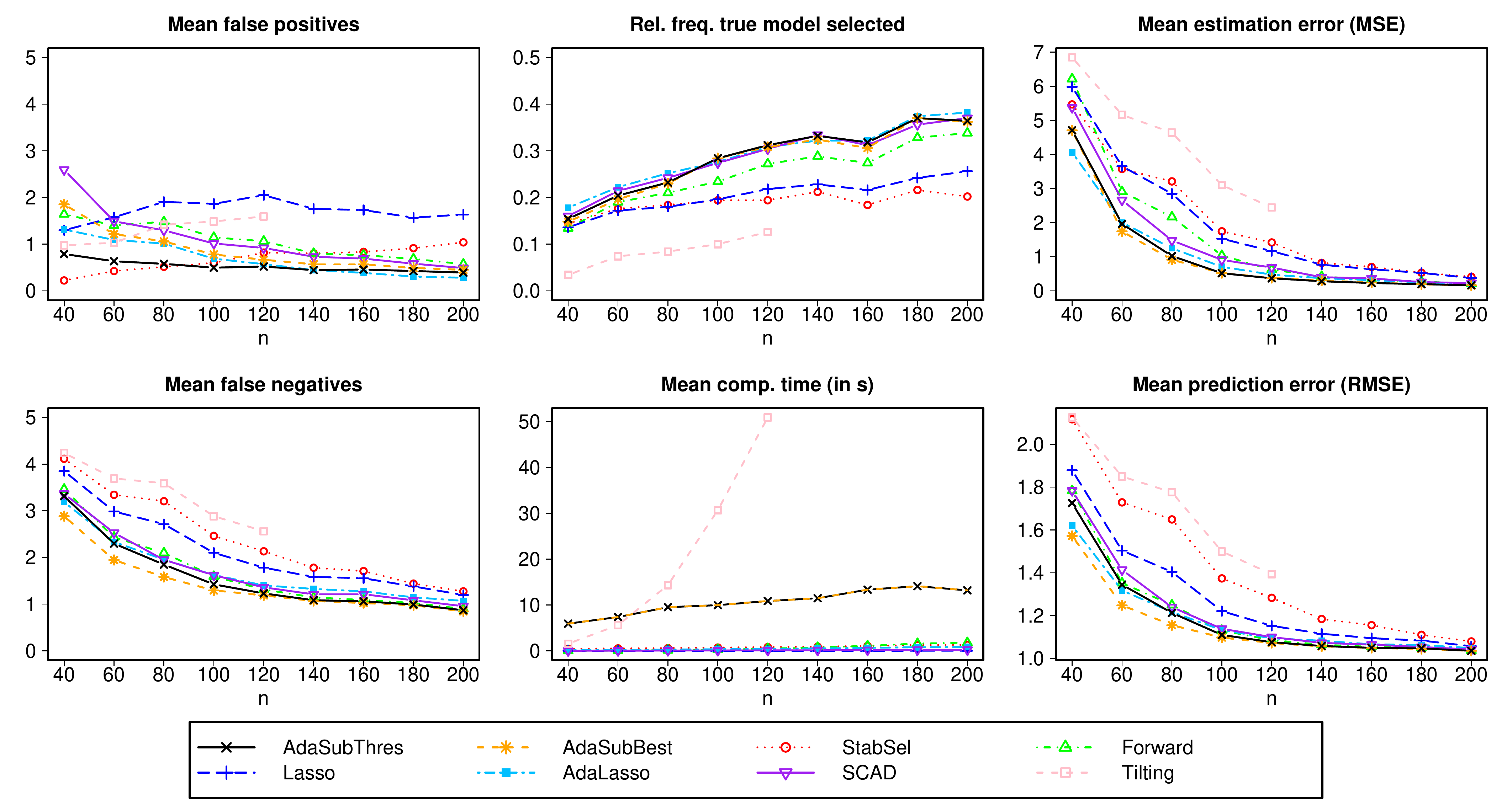}}  \\[2mm]
\subfloat[Results for \(\text{EBIC}_\gamma\) with \(\gamma = 1\).]{\includegraphics[width=14cm]{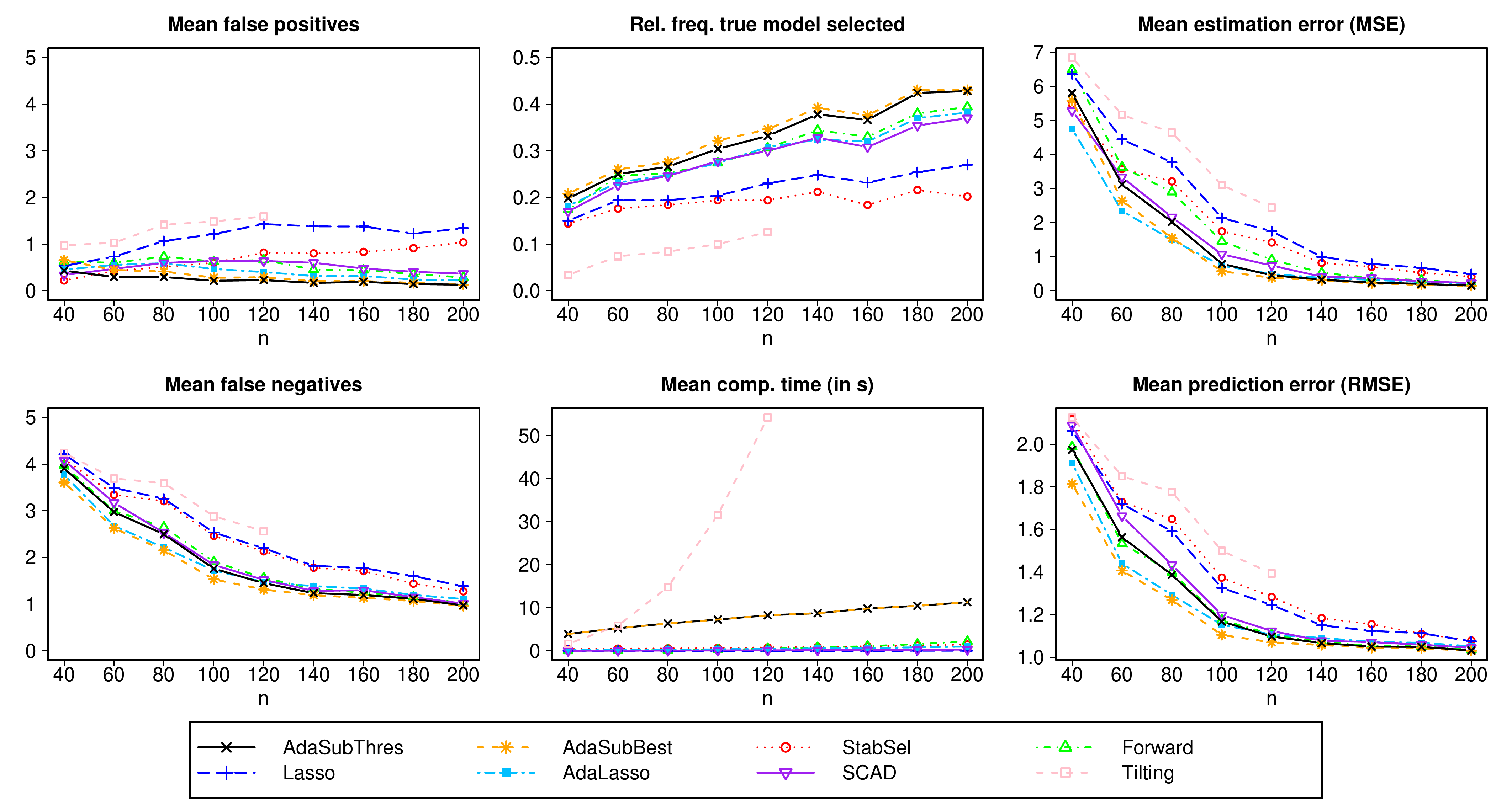}}
\caption{\label{HIGH_Toeplitz} High-dimensional example (\(p=10n\)) with Toeplitz-correlation structure (\(c=0.9\)): Comparison of thresholded model (AdaSubThres) and ``best'' model (AdaSubBest) from AdaSub with Stability Selection (StabSel), Forward Stepwise, Lasso, Adaptive Lasso (AdaLasso), SCAD and Tilting in terms of mean number of false positives/ false negatives, rel. freq. of selecting the true model, mean comp. time, MSE and RMSE.} 
\end{figure}

Figure~\ref{HIGH_Toeplitz} summarizes the results of the high-dimen\-sional simulation study for a Toeplitz-correlation structure with large correlation \(c=0.9\). 
In this setting the thresholded model from AdaSub again tends to select significantly less false positives than the ``best'' model from AdaSub (particularly for \(\gamma=0.6\)), but at the prize of missing some truly important variables (particularly for \(\gamma=1\)). It is generally observed that the AdaSub models for \(\text{EBIC}_1\) tend to yield the best variable selection results, while the ``best'' model selected by AdaSub for \(\text{EBIC}_{0.6}\) tends to show the best predictive performance. Note that using a more liberal selection criterion is beneficial for prediction in the given situation with large correlations among the explanatory variables. The Adaptive Lasso performs generally well, but the AdaSub models with \(\text{EBIC}_1\) show a significantly better variable selection performance. Similarly as in the independence case, although Stability Selection reduces the number of false positives in comparison to the usual Lasso, it is generally outperformed by the AdaSub models. In contrast to the independence scenario, Forward Stepwise Selection does not perform similarly to AdaSub, but tends to include more false positives on average. Tilting seems not to be competitive for the situation of highly correlated covariates.

The summary of the results of additional simulations  
can be found in Section~\ref{sec:additional} of the appendix for this paper. All in all the performance of AdaSub is very competitive to state-of-the-art methods like the SCAD or the Adaptive Lasso and can lead to improved results in situations with small sample sizes or highly correlated covariates. 
Additionally, AdaSub tends to outperform Stability Selection with the Lasso in all of the situations considered. 
We note that the practical computational time needed for a decent convergence behaviour of AdaSub is generally larger in comparison to the considered competitors except for the Tilting method. 
However, the computational times for AdaSub (on an Intel(R) Core(TM) i7-7700K, 4.2 GHz processor) are not prohibitively large 
with on average less than 30 seconds in all considered settings for up to \(p=2000\) variables 
and we are convinced that the extra computational time spent for AdaSub can pay off in many practical situations, as illustrated in this simulation study. 

\subsection{Sensitivity analysis}\label{sec:choice}
 
In order to illustrate the effects of the tuning parameters \(q\) (the initial expected search size) and \(K\) (the learning rate) on the performance of AdaSub, we specifically reconsider the high-dimensional simulation setting of Section \ref{sec:high} with \(n=100\) (\(p=1000\)) and \(n=200\) (\(p=2000\)) for the Toeplitz correlation structure with high correlation \(c=0.9\) and the (negative) \(\text{EBIC}_{0.6}\) as the selection criterion. For both values of \(n\), 100 datasets are simulated as before and for each dataset AdaSub is applied ten times with \(T=5000\) iterations and specific choices of \(q\) and \(K\): For the first five runs of AdaSub \(K=n\) is fixed while \(q\in\{1,2,5,10,15\}\) is varied; for the remaining five runs \(q=10\) is fixed while \(K\in\{1,100,200,1000,2000\}\) is varied. 

In this sensitivity analysis we investigate the efficiency in terms of computational time and the effectiveness with respect to optimizing the given criterion \(\text{EBIC}_{0.6}\) for the ten considered choices of \(q\) and \(K\) in AdaSub. In order to evaluate the optimization effectiveness, we proceed as follows: Let \(\hat{S}_{\text{b}}^{(i,j)}\) denote the ``best'' model identified by the \(j\)-th run of AdaSub for the \(i\)-th dataset, \(i=1,\dots,100\), \(j=1,\dots,10\). Furthermore, let 
\begin{align*}\hat{S}_{\text{b}}^{(i)}=\argmin\big\{\text{EBIC}_{0.6}\big(\hat{S}_{\text{b}}^{(i,1)}\big),\dots,\text{EBIC}_{0.6}\big(\hat{S}_{\text{b}}^{(i,10)}\big)\big\}\end{align*} denote the ``best'' model according to \(\text{EBIC}_{0.6}\) among all ten runs of AdaSub for the \(i\)-th dataset. If  \(\hat{S}_{\text{b}}^{(i,j)}=\hat{S}_{\text{b}}^{(i)}\) then the number of iterations needed to identify the ``best'' model \(\hat{S}_{\text{b}}^{(i)}\) is considered as a measure for the effectiveness of the \(j\)-th run of AdaSub; if \(\hat{S}_{\text{b}}^{(i,j)}\neq\hat{S}_{\text{b}}^{(i)}\) then the \(j\)-th run of AdaSub counts as a ``failure'' and the required number of iterations is set to the maximum number of iterations (\(T=5000\)). 

\begin{figure}[ht]
\centering
\includegraphics[width=0.9\linewidth]{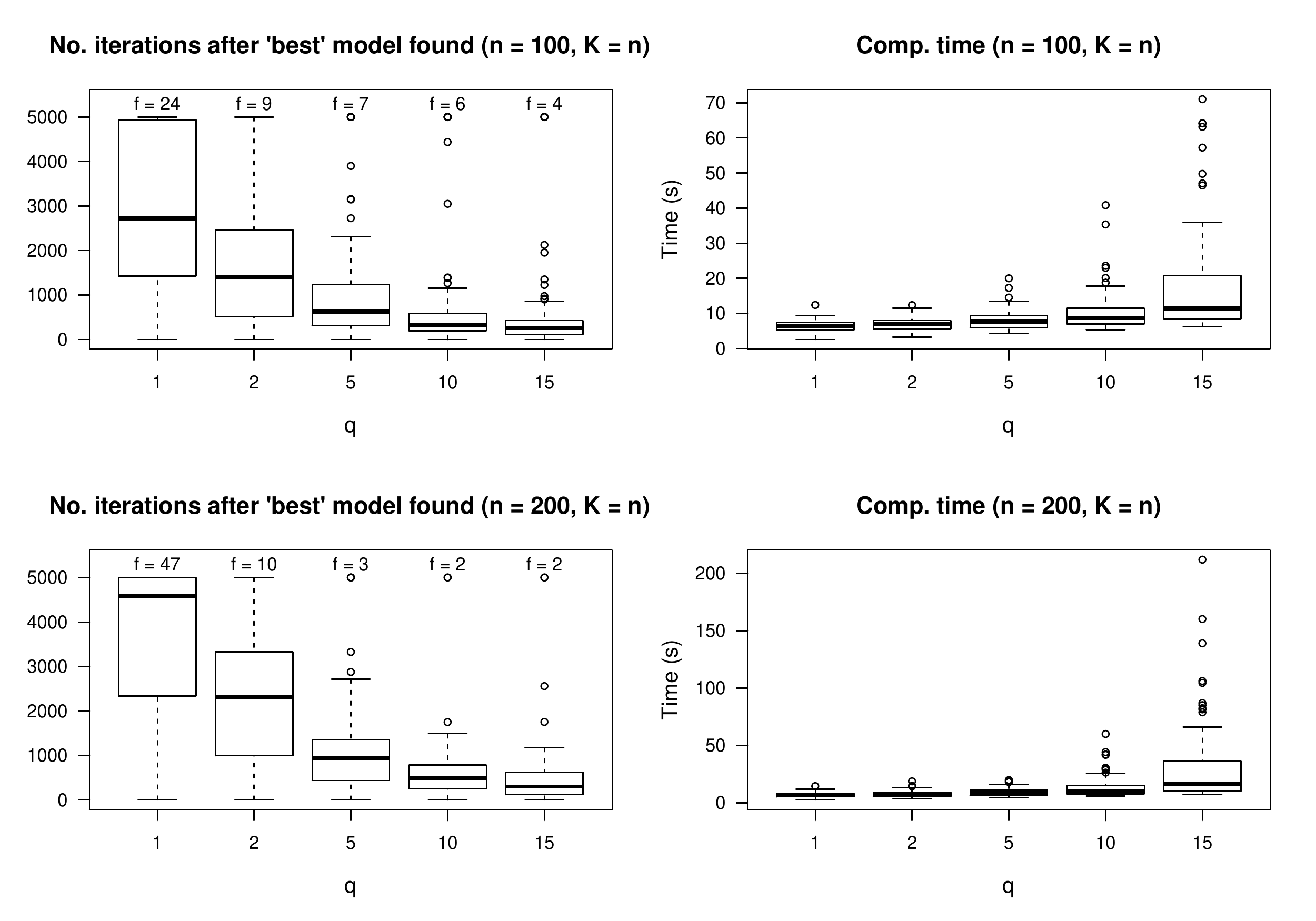} 
\caption{\label{fig:choice_q} Results of AdaSub for different choices of \(q\) (\(K=n\) fixed): Boxplots of the number of iterations needed to identify the ``best'' model (left) and of the computational times (right). In this context, the ``best'' model refers to the model with the smallest EBIC value among all ten runs of AdaSub for that dataset. The number of times the ``best'' model has not been identified is also reported (denoted by f for ``failures''; in such cases 5000 is depicted as the required number of iterations).} 
\end{figure}

Figure~\ref{fig:choice_q} indicates that there is a trade-off between computational efficiency and effectiveness regarding the choice of the initial expected search size \(q\): If \(q\) is small (e.g.\ \(q=1\)), then the algorithm needs more iterations in order to adapt the search sizes accordingly, while a larger value of \(q\) (e.g.\ \(q=15\)) results in larger sampled sub-problems, leading to an increased computational time. However, note that AdaSub automatically adjusts the search sizes so that the choice of \(q\) is not crucial for the limiting behaviour of AdaSub (for a large number of iterations). In practice, we recommend to choose the search size \(q\in[5,15]\).

\begin{figure}[ht]
\centering
\includegraphics[width=0.9\linewidth]{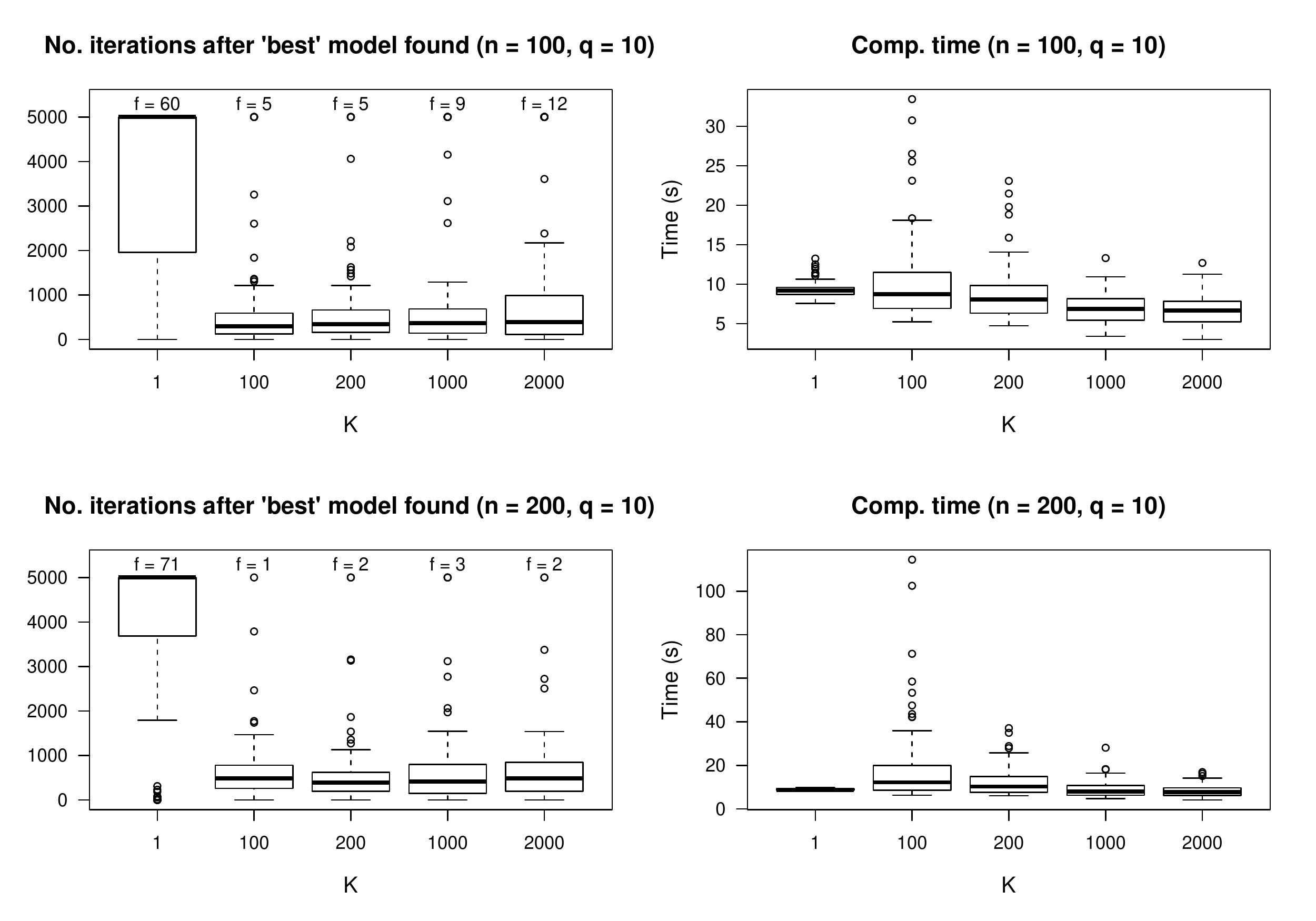} 
\caption{\label{fig:choice_K} Results of AdaSub for different choices of \(K\) (\(q=10\) fixed). The description of the illustrated boxplots is as in Figure~\ref{fig:choice_q}.} 
\end{figure}

Figure~\ref{fig:choice_K} shows that there is another trade-off regarding the choice of the learning rate $K>0$: If $K$ is small (e.g.\ \(K=1\)), then we are learning slowly from the data in order to sample more promising low-dimensional sub-problems, resulting in a slow convergence of the algorithm. If instead $K$ is large (e.g.\ \(K=2000\)), the algorithm might focus too quickly on specific classes of sub-problems and thus often a larger number of iterations is needed to identify the ``best'' model. 
It can be argued that a sensible choice of \(K\) depends on the sample size \(n\) of the considered dataset, since larger sample sizes come with less uncertainties regarding the ``best'' model and a faster convergence of the algorithm might be achieved with larger values of \(K\). We recommend to choose the learning rate \(K=n\); this choice of \(K\) is also supported by the results in Figure~\ref{fig:choice_K} regarding the required number of iterations to identity the ``best'' models. We refer to \citet[Sections 3.4, 3.5]{staerk2018} for additional discussions regarding the choice of \(K\) and \(q\).

Since AdaSub is a stochastic algorithm, it is desirable that the selected models by AdaSub do not largely vary if one repeatedly runs the algorithm for the same dataset and the same selection criterion, but with possibly different choices of the tuning parameters of AdaSub. 
In order to investigate the algorithmic stability of AdaSub we consider the same setting as in the high-dimensional simulation study of Section \ref{sec:high} and rerun the AdaSub algorithm ten times with \(T=5000\) iterations for a particular dataset with random choices of \(K\) and \(q\) from a sensible range. Here, we simulate 20 different datasets for each value of \(n\in\{40,60,\dots,200\}\) (with \(p=10n\)) for both the independence and Toeplitz correlation structure and consider again the (negative) \(\text{EBIC}_\gamma\) with \(\gamma\in\{0.6,1\}\) as the selection criterion, yielding in total \(2\times2\times10\times20\times9 = 7200\) different runs of AdaSub. For each application of AdaSub, the initial expected search size \(q\) is randomly generated from the uniform distribution \(\mathcal{U}(5,15)\) and the learning rate \(K\) is randomly generated from the uniform distribution \(\mathcal{U}(n/2,2n)\).

 
\begin{figure}[ht]
\centering
\includegraphics[width=0.9\linewidth]{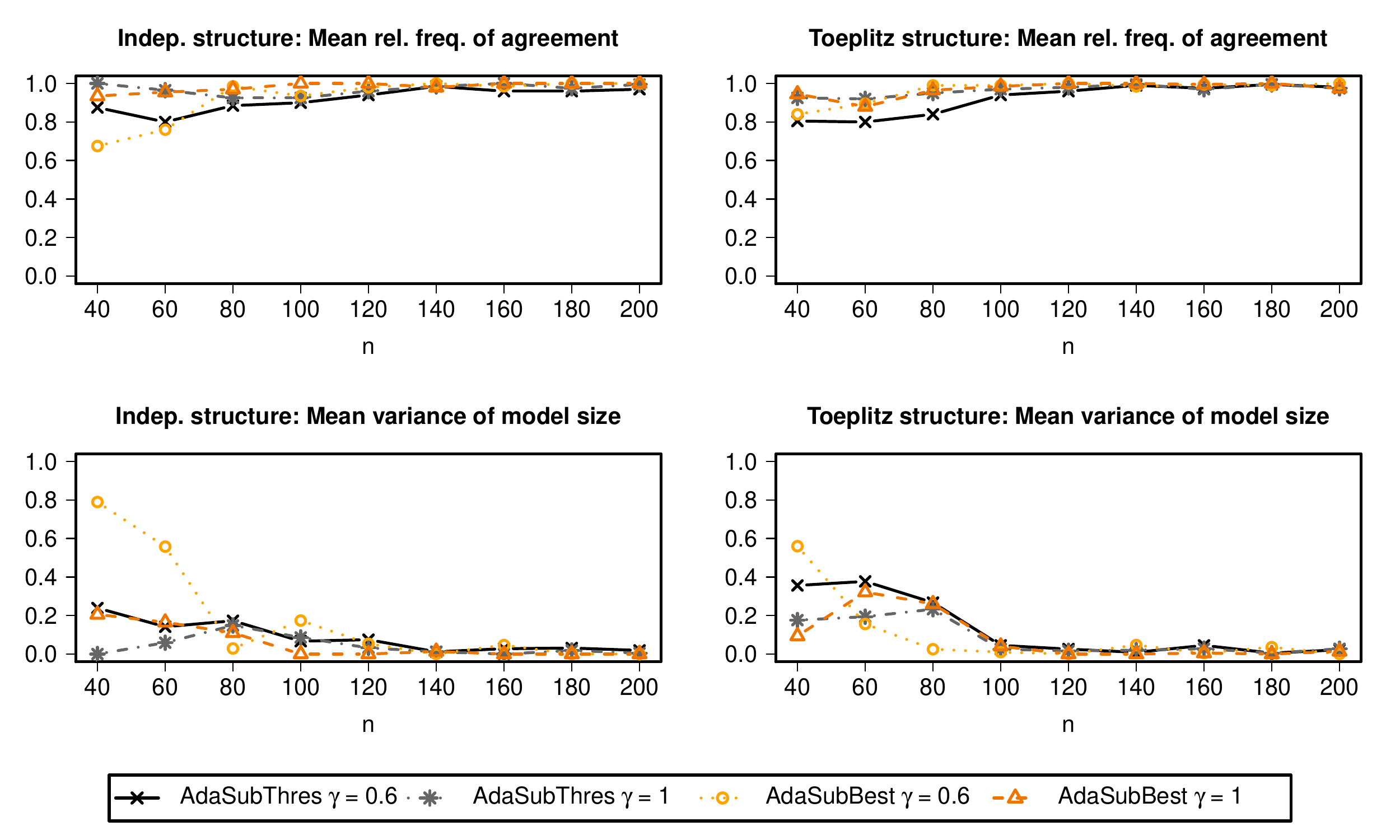} 
\caption{\label{fig:stability} Sensitivity analysis for the tuning parameters \(q\) and \(K\), assuming independence (\(c=0\)) and Toeplitz (\(c=0.9\)) correlation structures: 
Mean relative frequency of model agreement and mean variance of model sizes across the ten runs of AdaSub (averaged over 20 simulated datasets for each sample size) for the thresholded model \(\hat{S}_{0.9}\) (AdaSubThres) and the ``best'' model \(\hat{S}_{\text{b}}\) (AdaSubBest) for multiple runs of AdaSub with \(\text{EBIC}_\gamma\) for \(\gamma\in\{0.6,1\}\).} 
\end{figure}

In Figure~\ref{fig:stability} it can be seen that the average relative frequencies of model agreement for both the thresholded and the ``best'' model are reasonably large across different runs of AdaSub for the same datasets (with random choices of \(q\) and \(K\)). Furthermore, the variances of the sizes of the AdaSub models are small, indicating that the selected models are quite similar even if they differ between certain runs of AdaSub. Note that the algorithmic stability of AdaSub further improves with increasing samples size \(n\), i.e.\ the relative frequencies of agreement tend to one and the variances of model sizes tend to zero.

\section{Real data example} \label{sect:realdata}
In this section we consider the application of AdaSub on (ultra)-high-dimensional real data. For comparison reasons we examine a polymerase chain reaction (PCR) dataset which has already been analysed in \citet{Song2015}. 
They demonstrate that their Bayesian split-and-merge approach (SAM) performs favourably in comparison to hybrid methods like (I)SIS-lasso and (I)SIS-SCAD, so we do not include the results of these methods here.  
(I)SIS-lasso and (I)SIS-SCAD are acro\-nyms for the combination of a screening step with (Iterated) Sure Independence Screening \citep{Fan2008} and then a selection step of the final model with lasso and SCAD, respectively. A special intention of this section is to show that it is computationally feasible to apply the AdaSub method even in the situation of ultra-high-dimensional data with ten thousands of explanatory variables and that an additional screening step is not necessarily needed.

We consider the preprocessed PCR data from \citet{Song2015}, available in JRSS(B) Datasets Vol. 77(5), which consists of \(n=60\) samples (mice) with \(p=22,575\) explanatory variables (expression levels of genes). Phospho\-enol\-pyru\-vat-carboxykinase (physiological phenotype) is chosen as the response variable. For details concerning this data example we refer to \citet{Lan2006} and \citet{Song2015}. We first apply the AdaSub algorithm with \(q=5\), \(K=n\) and \(T=500,000\) and choose the (negative) \(\text{EBIC}_{0.6}\) as the selection criterion (computational time approximately 20 minutes). 


\begin{figure}[ht]
\centering
\subfloat[\(\gamma=0.6\)]{\includegraphics[width=.45\linewidth]{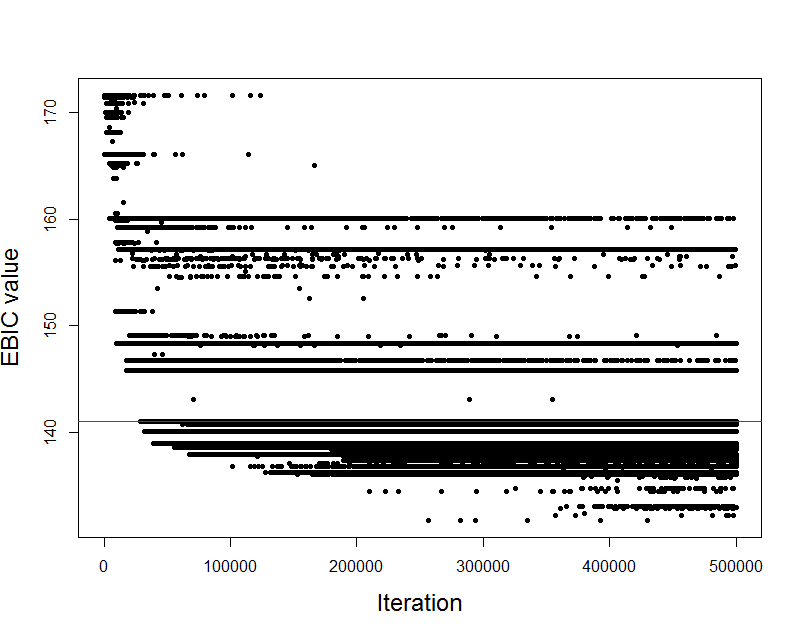}
 \label{Evolution_PCR_06}}\qquad
\subfloat[\(\gamma=1\)]{\includegraphics[width=.45\linewidth]{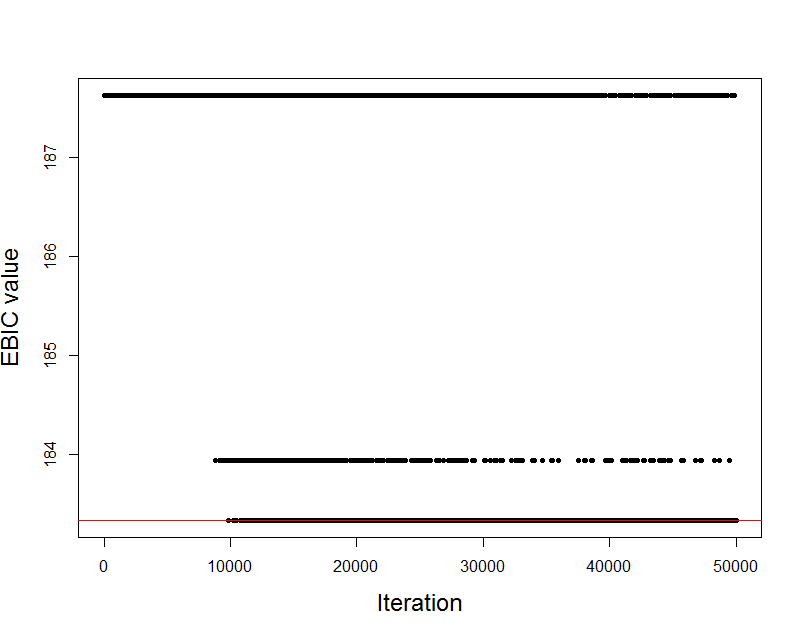}
 \label{Evolution_PCR_1}}
\caption{AdaSub for PCR-data. Plot of the evolution of \(\text{EBIC}_\gamma(S^{(t)})\) along iterations (\(t\)). The red line indicates the \(\text{EBIC}_{\gamma}\)-value of the thresholded model \(\hat{S}_{0.9}\).}
\label{Evolution_PCR}
\end{figure}

The evolution of the values \(\text{EBIC}_{0.6}(S^{(t)})\) along the iterations (\(t\)) is given in Figure \ref{Evolution_PCR_06}. The criterion \(\text{EBIC}_{0.6}\) seems to be too liberal for the given situation resulting in high uncertainty concerning the \(\text{EBIC}_{0.6}\)-optimal model and (possibly) failure of the OIP condition.  
The thresholded model \(\hat{S}_{0.9}\) selected by AdaSub consists of five variables (genes), while the ``best'' model \(\hat{S}_{\textrm{b}}\) consists of ten variables (genes); see Table \ref{tab:PCR} for a summary of the results. 

\begin{table}[ht]
\small
\caption{\small Results for PCR data in terms of selected genes and mean/median CV-errors for the final model selected by SAM as well as the ``best'' models (\(\hat{S}_{\text{b}}\)) and thresholded models (\(\hat{S}_{0.9}, \hat{S}_{6}\)) from AdaSub for \(\text{EBIC}_{0.6}\) and \(\text{EBIC}_{1}\).  
}
\label{tab:PCR}      
\tabcolsep=0.11cm
\small
\resizebox{\textwidth}{!}{%
\begin{tabular}{llll}
\hline\noalign{\smallskip} \small
Model & Selected variables (genes) & Mean CV & Median CV  \\
\noalign{\smallskip}\hline\noalign{\smallskip}
SAM model & 1429089\_s\_at, 1430779\_at, 1432745\_at, 1437871\_at, 1440699\_at, 1459563\_x\_at & 0.084 & 0.044 \\
\(\text{EBIC}_{0.6}\): \(\hat{S}_{\text{b}}\)& 1428239\_at, 1433056\_at, 1437871\_at, 1438937\_x\_at,  1440505\_at,  & 0.030 & 0.012 \\
																								& 1442771\_at, 1444\allowbreak471\_at, 1445645\_at, 1446035\_at, 1455361\_at &   & \\
\(\text{EBIC}_{0.6}\): \(\hat{S}_{0.9}\)& 1437871\_at,  1438937\_x\_at, 144277\allowbreak1\_at, 1446035\_at, 1455361\_at & 0.116 & 0.056 \\ 
\(\text{EBIC}_{0.6}\): \(\hat{S}_{6}\)& 1428239\_at, 1437871\_at, 1438937\_x\_at, 1442771\_at, 1446035\_at, 1455361\_at & 0.090 & 0.041 \\
\(\text{EBIC}_{1}\):  \(\hat{S}_{0.9}\), \(\hat{S}_{\text{b}}\) & 1438937\_x\_at & 0.403 & 0.158 \\ 
\noalign{\smallskip}\hline
\end{tabular}}
\end{table}

In order to compare the predictive performances of the selected models we compute the mean and median leave-one-out-cross-validation squared errors (CV-errors) for each fixed model as described in \citet{Song2015}. Note that the CV-errors of the final models (with variables selected based on the full dataset) generally tend to underestimate the true generalization errors on independent test data (compare \citealp{ambroise2002}) and only serve for a comparison of models with the same number of selected variables. 
It can be seen that the CV-errors of the thresholded model \(\hat{S}_{0.9}\) with five genes and the CV-errors of the ``best'' model \(\hat{S}_{\textrm{b}}\) with ten genes are of the same order or even lower than the errors of the best SAM model with five and ten explanatory variables, respectively (compare Figure 5 in \citealp{Song2015}). 
In order to compare the final model from SAM to a model with six genes selected by AdaSub we proceed in the following way: Let \(g:\mathcal{P}\rightarrow\mathcal{P}\) be a permutation such that \(r_{g(1)}^{(T)} \geq r_{g(2)}^{(T)} \geq \ldots \geq r_{g(p)}^{(T)}\). Assuming no ``ties'', for \(k\in\mathcal{P}\) we define \(\hat{S}_k := \{j\in\mathcal{P}:\, g^{-1}(j)\leq k\}\) to be the thresholded model from AdaSub with exactly \(|\hat{S}_k| = k\) variables. 
In Table \ref{tab:PCR} it can be seen that even though the thresholded model \(\hat{S}_6\) from AdaSub with six genes is totally different from the model selected by SAM, it has similar predictive performance.

We now apply AdaSub with \(q=5\), \(K=n\) and \(T=50,000\) and choose the (negative) \(\text{EBIC}_{1}\) as a selection criterion that enforces more sparsity (comp. time approximately 1 minute and 30 seconds). The evolution of the values \(\text{EBIC}_{1}(S^{(t)})\) along the iterations (\(t\)) is given in Figure \ref{Evolution_PCR_1}. Now \(\hat{S}_{0.9}\) and \(\hat{S}_{\textrm{b}}\) coincide, consisting both of only one gene (1438937\_x\_at). Note that for the criterion \(\text{EBIC}_{0.6}\) the gene 1438937\_x\_at is also included in the thresholded model \(\hat{S}_{0.9}\), in the ``best'' model \(\hat{S}_{\textrm{b}}\) and in the thresholded model \(\hat{S}_{1}\) with exactly one gene selected by AdaSub, whereas it is not included in the final model selected by SAM.


\section{Discussion}\label{sec:discussion}
AdaSub has been introduced in order to solve the natural \(\ell_0\)-regularized optimization problem for high-dimen\-sional variable selection. 
If the ordered importance property (OIP) is satisfied, then AdaSub converges against the optimal solution of the generally NP-hard \(\ell_0\)-re\-gu\-lar\-ized optimization problem. Furthermore, AdaSub provides a stable thresholded model even when OIP is not guaranteed to hold. 
It has been demonstrated through simulated and real data examples that the performance of AdaSub is very competitive for high-di\-men\-sio\-nal variable selection in comparison to state-of-the-art methods like the Adaptive Lasso, the SCAD, Tilting or the Bayesian split-and-merge approach (SAM). Furthermore, in the considered sparse high-dimensional settings, AdaSub in combination with the EBIC as the selection criterion performs favorably in comparison to widely used regularization methods tuned with cross-validation (see Section\ \ref{sec:additional}). It is notable that AdaSub outperforms Stability Selection with the Lasso in many situations, which underpins the argument that usual subsampling in combination with an \(\ell_1\)-type method might not be optimal in a high-dimensional situation. On the contrary, the application of adaptive ``subsampling'' in the space of explanatory variables can efficiently reduce the intractable \(\ell_0\)-type high-dimensional problem to solvable low-dimensional sub-problems even in very high-dimensional situations with ten thousands of possible explanatory variables. 

In this paper we have focused on variable selection in linear regression models, but the proposed AdaSub method is more general and can for example be applied to any variable selection problem in the framework of generalized linear models (GLMs). The practical problem is then that --- to the best of our knowledge --- there is no efficient algorithm like ``leaps and bounds'' which could be used for solving the low-dimensional sub-problems for a GLM within reasonable computational time. In particular, a full enumeration is costly since the ML-estimators for the single models are not given in closed form, in general. A possible solution would be to use heuristic algorithms in place of a full enumeration in order to derive approximate solutions for the sub-problems. It is then desirable to extent the convergence properties of AdaSub also to those situations. 

Furthermore, even though we have focused on the EBIC as the selection criterion, the AdaSub method is very general and can be combined with any other selection criterion. It is also possible to use other variable selection methods such as \(\ell_1\)-type methods (like the Lasso) for ``solving'' the sampled sub-problems in AdaSub. However, the theoretical results concerning the limiting properties of AdaSub are based on the assumption of optimizing a discrete function on the model space, so the presented limiting properties are not directly applicable for such alternative methods. The investigation of the performance of AdaSub for different choices of the selection procedure is an interesting topic for future research.

Another line of our current research concerns the further exploration of the sufficient condition for the \(C\)-optimal convergence of AdaSub and particularly attempts to relax OIP by weaker sufficient conditions. 
We want to emphasize that in this work we have focused on the algorithmic convergence of AdaSub against the best model according to a given criterion (as the number of iterations \(T\) diverges). Based on the presented analysis, depending on the properties of the employed selection criterion, one may derive specific model selection consistency results (as the sample size \(n\) and the number of variables \(p\) diverge with a certain rate), as indicated in Section \ref{sec:limiting}.   Furthermore, it would be desirable to obtain more general theoretical results concerning the ``speed of convergence'' of AdaSub.  
Finally, in subsequent work we develop modifications of the presented algorithm for sampling from high-dimensional posterior model distributions in a fully Bayesian framework.

\clearpage

\appendix

\section{Appendix}

\noindent In Section \ref{sec:theory} of this appendix we provide the theoretical details concerning the limiting properties of AdaSub. In Section \ref{sec:illustrative} we illustrate the application of the AdaSub algorithm on a high-dimensional simulated data example and discuss typical ``diagnostic plots'' for the convergence of the algorithm. In Section \ref{sec:additional} we present further results of the simulation study given in Section \ref{sec:simstudy}.

\subsection{Theoretical details}\label{sec:theory}

In this section we theoretically investigate the limiting properties of AdaSub (see Algorithm \ref{AdaSub}) by analysing the evolution (along the iterations \(t\in\N\)) of the selection probabilities 
\begin{equation}r_j^{(t)} = \frac{q+K \sum_{i=1}^t W_j^{(i)}}{p + K \sum_{i=1}^t Z_j^{(i)} } \, ,\end{equation}
where \(Z_j^{(i)}=1_{V^{(i)}}(j)\) and \(W_j^{(i)}=1_{f_C(V^{(i)})}(j) = 1_{S^{(i)}}(j)\) for \(j\in\mathcal{P}\), \(i\in\N\).

In order to describe the information available after iteration \(t\) of the AdaSub algorithm, we define a filtration \((\mathcal{F}^{(t)})_{t\in\N_0}\) on the underlying probability space \(\Omega\) of the process: Let \(\mathcal{F}^{(0)}:=\{\emptyset,\Omega\}\) and for \(t\in\N\) let
\begin{equation}\mathcal{F}^{(t)}:= \sigma(W_1^{(1)},Z_1^{(1)}, W_2^{(1)},Z_2^{(1)}, \dots , W_p^{(1)},Z_p^{(1)}, \dots, W_p^{(t)},Z_p^{(t)}) \label{eq:filtration}\end{equation}
be the \(\sigma\)-algebra generated by \(W_1^{(1)}, \dots, Z_p^{(t)}\).
Then by the construction of AdaSub we have for \(t\in\N_0\) and \(j\in\mathcal{P}\): 
\begin{equation} r_j^{(t)} = P(Z_j^{(t+1)}=1 ~|~ \mathcal{F}^{(t)} ) = 1 - P(Z_j^{(t+1)}=0 ~|~ \mathcal{F}^{(t)} ) \, .\end{equation}
In addition, for \(t\in\N_0\) and \(j\in\mathcal{P}\) we define 
\begin{align}p_j^{(t+1)}&:= P( W_j^{(t+1)} =1 ~|~ Z_j^{(t+1)} = 1, \mathcal{F}^{(t)}) \nonumber \\
&= 1 - P(W_j^{(t+1)}=0 ~|~Z_j^{(t+1)}=1 \, , \mathcal{F}^{(t)} ) \, ,\label{eq:condprob}\end{align}
where for events \(A,B\in\mathcal{F}^{(t+1)}\) the conditional probabilities under \(\mathcal{F}^{(t)}\) are defined by \(P( A ~|~ \mathcal{F}^{(t)}) =  E[1_A ~|~ \mathcal{F}^{(t)}]\) and \(P( A ~|~ B , \mathcal{F}^{(t)}) = \frac{E[1_{A\cap B} ~|~ \mathcal{F}^{(t)}]}{ E[1_B ~|~ \mathcal{F}^{(t)}]}\) almost surely (a.s.) on the set \(\{ E[1_B ~|~ \mathcal{F}^{(t)}]>0\}\), while we set \(P( A ~|~ B , \mathcal{F}^{(t)}) = 0\) a.s. on \(\{ E[1_B ~|~ \mathcal{F}^{(t)}]=0\}\). 

In the following, we will make repeated use of the following generalization of Borel-Cantelli's lemma and the strong law of large numbers, which is due to \citet{Dubins1965}.
\begin{theorem}[Dubins and Freedman, 1965]\label{Theorem Borel} 
Let \((\mathcal{F}_n)_{n\in\N_0}\) be a filtration and \(A_n\in\mathcal{F}_n\) for \(n\in\N\). 
 For \(i\in\N\) define \(q_i:=P(A_i~|~\mathcal{F}_{i-1})\), then:
\begin{itemize}
\item[(a)] On \(\left\{ \sum_{i=1}^\infty q_i < \infty \right\}\) we almost surely have \(\sum_{i=1}^\infty 1_{A_i} <\infty\).
\item[(b)] On \(\left\{ \sum_{i=1}^\infty q_i = \infty \right\}\) we have \[ \frac{\sum_{i=1}^n 1_{A_i}}{\sum_{i=1}^n q_i} \overset{\text{a.s.}}{\longrightarrow} 1, ~ n\rightarrow \infty \, . \]
\end{itemize}
\end{theorem}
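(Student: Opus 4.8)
The plan is to put both parts on a common footing by studying the centred process
\[ M_n := \sum_{i=1}^n \bigl( 1_{A_i} - q_i \bigr), \qquad n \in \N . \]
Since $A_i \in \mathcal{F}_i$ while $q_i = P(A_i \mid \mathcal{F}_{i-1})$ is $\mathcal{F}_{i-1}$-measurable with $E[1_{A_i} \mid \mathcal{F}_{i-1}] = q_i$, the increments $1_{A_i} - q_i$ are martingale differences, so $(M_n)$ is a martingale with respect to $(\mathcal{F}_n)$ whose increments are bounded in absolute value by $1$. First I would record its predictable quadratic variation: using $1_{A_i}^2 = 1_{A_i}$ one obtains $E[(1_{A_i} - q_i)^2 \mid \mathcal{F}_{i-1}] = q_i(1-q_i) \le q_i$, whence $\langle M \rangle_n = \sum_{i=1}^n q_i(1-q_i) \le b_n$, where $b_n := \sum_{i=1}^n q_i$. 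Writing $\sum_{i=1}^n 1_{A_i} = b_n + M_n$ then reduces both assertions to the asymptotics of $M_n$ relative to $b_n$.

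For part (a), on the event $\{ b_\infty < \infty \}$ we have $\langle M \rangle_\infty \le b_\infty < \infty$. I would invoke the standard fact that a locally square-integrable martingale converges almost surely on $\{ \langle M \rangle_\infty < \infty \}$; here local square integrability is immediate from the bounded increments. Hence $M_n$ converges to a finite limit on $\{ b_\infty < \infty \}$, and therefore $\sum_{i=1}^\infty 1_{A_i} = b_\infty + \lim_n M_n < \infty$ there, which is exactly (a).

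For part (b) I would pass to the weighted martingale
\[ N_n := \sum_{i=1}^n \frac{1_{A_i} - q_i}{b_i}, \qquad n \ge 1, \]
whose weights $1/b_i$ are predictable because $b_i$ is $\mathcal{F}_{i-1}$-measurable (discarding any initial indices with $b_i = 0$, which contribute nothing since there $1_{A_i}=0$ a.s.). Its quadratic variation satisfies $\langle N \rangle_\infty \le \sum_{i} q_i / b_i^2$, and the elementary Abel--Dini estimate $q_i/b_i^2 \le q_i/(b_{i-1}b_i) = 1/b_{i-1} - 1/b_i$ for $i \ge 2$ telescopes to give $\sum_i q_i/b_i^2 < \infty$ on the whole space. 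Thus $N_n$ converges almost surely, and Kronecker's lemma applied on $\{ b_\infty = \infty \}$ (where $b_n \uparrow \infty$) yields $M_n / b_n \to 0$. Consequently $\frac{\sum_{i=1}^n 1_{A_i}}{b_n} = 1 + \frac{M_n}{b_n} \to 1$ on $\{ b_\infty = \infty \}$, which is (b).

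The main obstacle is conceptual rather than computational: both conclusions are asserted only \emph{on} the trajectory-dependent events $\{ b_\infty < \infty \}$ and $\{ b_\infty = \infty \}$, so one cannot simply appeal to an unconditional $L^2$-bound. The device that circumvents this is the use of martingale convergence on the set $\{ \langle M \rangle_\infty < \infty \}$, together with the predictable weighting defining $N_n$, both of which are insensitive to behaviour on the complementary event; alternatively one could localise via the stopping times $\tau_c = \inf\{ n : b_{n+1} > c \}$ and let $c \to \infty$. The remaining care lies only in the Abel--Dini comparison and in checking the predictability of $b_i$, which hinges on $q_i$ being $\mathcal{F}_{i-1}$-measurable.
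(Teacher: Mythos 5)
Your proof is correct, but there is nothing in the paper to compare it against: Theorem \ref{Theorem Borel} is not proved in the paper at all --- it is imported verbatim from \citet{Dubins1965} and used as a black box in the proofs of Lemma \ref{lemma_borel} and Theorem \ref{keylemma}. What you have written is the standard modern martingale proof of the Dubins--Freedman (L\'evy) extension of the Borel--Cantelli lemma: part (a) via almost sure convergence of the bounded-increment martingale $M_n=\sum_{i\le n}(1_{A_i}-q_i)$ on $\{\langle M\rangle_\infty<\infty\}$, and part (b) via the predictably weighted martingale $N_n=\sum_{i\le n}(1_{A_i}-q_i)/b_i$, the Abel--Dini telescoping bound $\sum_i q_i/b_i^2\le 1/b_{1}$, and Kronecker's lemma on $\{b_\infty=\infty\}$. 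All the key measurability points are in order: $q_i$ and hence $b_i$ are $\mathcal{F}_{i-1}$-measurable, so the weights are predictable, and the restriction of both conclusions to trajectory-dependent events is correctly handled by localization rather than by an unconditional $L^2$ bound. The only detail worth making explicit is that the increments of $N_n$ are bounded by $1/b_{i_0}$ with $i_0$ the (random) first index at which $b_{i_0}>0$, so one should localize not only where $\langle N\rangle$ is large but also where $b$ is small (e.g.\ via the predictable times $\tau_\epsilon=\inf\{n: b_n\ge\epsilon\}$ and $\epsilon\downarrow 0$); this is exactly the kind of stopping-time device you already sketch at the end, so it is a presentational gap rather than a mathematical one.
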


A first simple but important observation is that, with probability 1, each variable \(X_j\) with \(j\in\mathcal{P}\) is considered infinitely many times in the model search of AdaSub.  
\begin{lemma}\label{lemma_borel}
Let $j\in\mathcal{P}$. Then it holds \[ P\left(\sum_{t=1}^\infty 1_{V^{(t)}}(j)=\infty\right) = P\left(\sum_{t=1}^\infty Z_j^{(t)} =\infty\right) =1 \, . \]
\end{lemma}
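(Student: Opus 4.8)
The plan is to apply the Dubins–Freedman generalization of Borel–Cantelli (Theorem \ref{Theorem Borel}) with the events $A_t := \{Z_j^{(t)} = 1\} = \{j \in V^{(t)}\}$ and the filtration $(\mathcal{F}^{(t)})_{t\in\N_0}$ defined in (\ref{eq:filtration}). By the construction of AdaSub, the conditional probabilities appearing in that theorem are exactly $q_t := P(A_t \mid \mathcal{F}^{(t-1)}) = r_j^{(t-1)}$. The assertion to be proved, namely $\sum_{t=1}^\infty 1_{V^{(t)}}(j) = \infty$ almost surely, is precisely $\sum_{t=1}^\infty 1_{A_t} = \infty$ a.s.; hence by part (b) of Theorem \ref{Theorem Borel} it suffices to show that the series $\sum_{t=1}^\infty q_t = \sum_{t=0}^\infty r_j^{(t)}$ diverges on the whole sample space.

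First I would establish a deterministic (pathwise) lower bound on the selection probabilities. Since $W_j^{(i)} \geq 0$ for all $i$, the numerator of $r_j^{(t)}$ is at least $q$; and since each $Z_j^{(i)} \in \{0,1\}$ we have $\sum_{i=1}^t Z_j^{(i)} \leq t$, so the denominator is at most $p + Kt$. This gives $r_j^{(t)} \geq \frac{q}{p + Kt}$ for every $t \in \N_0$, surely (not merely in expectation).

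Summing this bound yields $\sum_{t=0}^\infty r_j^{(t)} \geq \sum_{t=0}^\infty \frac{q}{p + Kt} = \infty$, since the right-hand side is a harmonic-type series whose general term behaves like $\frac{q}{Kt}$. Consequently the event $\{\sum_{t=1}^\infty q_t = \infty\}$ in Theorem \ref{Theorem Borel} coincides with the entire probability space. Invoking part (b) then gives $\frac{\sum_{i=1}^n 1_{A_i}}{\sum_{i=1}^n q_i} \overset{\text{a.s.}}{\to} 1$, and in particular $\sum_{t=1}^\infty 1_{A_t} = \infty$ almost surely, which is exactly the claim.

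Because the divergence of $\sum_t r_j^{(t)}$ follows from this crude deterministic estimate, there is no genuine analytic obstacle here; the only points requiring care are the correct bookkeeping between the filtration index and the probability index (so that indeed $q_t = r_j^{(t-1)}$), and the observation that the lower bound holds surely rather than only on average — it is this that allows the Dubins–Freedman criterion to be applied on the full space and hence delivers an almost-sure, rather than merely positive-probability, conclusion.
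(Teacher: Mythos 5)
Your proposal is correct and follows essentially the same route as the paper's proof: both fix the events $A_t=\{Z_j^{(t)}=1\}$ with the filtration $(\mathcal{F}^{(t)})$, use the deterministic bound $r_j^{(t)}\geq \frac{q}{p+Kt}$ to get divergence of $\sum_t q_t$ on the whole space, and then invoke part (b) of the Dubins--Freedman theorem. Your write-up merely makes explicit the bookkeeping (numerator at least $q$, denominator at most $p+Kt$) that the paper states without elaboration.
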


\begin{proof}
\noindent Let \((\mathcal{F}^{(t)})_{t\in\N_0}\) be the filtration given by equation (\ref{eq:filtration}). Fix \(j\in\mathcal{P}\) and for \(t\in\N\) let \(A_j^{(t)}:=\{Z_j^{(t)} = 1 \} \in \mathcal{F}^{(t)}\). For \(t\in\N\) we have 
\[q_j^{(t)} := P(A_j^{(t)} ~|~ \mathcal{F}^{(t-1)}) = r_j^{(t-1)} \geq \frac{q}{p+K(t-1)} \, \]
and therefore \( \sum_{i=1}^\infty q_j^{(i)} \overset{\text{a.s.}}{=} \infty \). So by Theorem \ref{Theorem Borel} we conclude
\[ \frac{\sum_{i=1}^t 1_{A_j^{(i)}}} {\sum_{i=1}^t q_j^{(i)} } \overset{\text{a.s.}}{\longrightarrow}  1 , \, t\rightarrow\infty \,. \]
Since \(\sum_{i=1}^\infty q_j^{(i)} \overset{\text{a.s.}}{=} \infty \), we also have \[ \sum_{i=1}^\infty 1_{A_j^{(i)}} = \sum_{i=1}^\infty Z_j^{(i)} \overset{\text{a.s.}}{=} \infty \, . \]
\end{proof}

The following theorem shows that the convergence of \(p_j^{(t)}\) as \(t\rightarrow\infty\) determines the convergence of \(r_j^{(t)}\). This result will be the key ingredient needed for the proof of the \(C\)-optimal convergence of AdaSub (Theorem \ref{OIP_Theorem}).

\begin{theorem}\label{keylemma}
For each \(j\in\mathcal{P}\) we have: If \(p_j^{(t)}\overset{\text{a.s.}}\rightarrow p_j^*\) as \(t\rightarrow\infty\) for some random variable \(p_j^*\), then \(r_j^{(t)}\overset{\text{a.s.}}\rightarrow p_j^*\) as \(t\rightarrow\infty\).
\end{theorem}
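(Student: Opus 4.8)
The plan is to express $r_j^{(t)}$ as a ratio of running sums and to control its numerator and denominator separately via the Dubins--Freedman theorem (Theorem~\ref{Theorem Borel}), combined with a Toeplitz-type averaging of the convergent sequence $p_j^{(t)}$. Throughout I write $N_t=\sum_{i=1}^t W_j^{(i)}$ and $D_t=\sum_{i=1}^t Z_j^{(i)}$, so that $r_j^{(t)}=(q+KN_t)/(p+KD_t)$. First I would reduce to a statement about the sums alone: by Lemma~\ref{lemma_borel} we have $D_t\overset{\text{a.s.}}{\rightarrow}\infty$, so dividing numerator and denominator by $KD_t$ shows that the constants $q$ and $p$ wash out, and it suffices to prove $N_t/D_t\overset{\text{a.s.}}{\rightarrow}p_j^*$.

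The key preliminary step is to identify the relevant conditional probabilities. Since $W_j^{(i)}=1$ forces $Z_j^{(i)}=1$, we have $W_j^{(i)}Z_j^{(i)}=W_j^{(i)}$; using the definition of $p_j^{(i)}$ in~(\ref{eq:condprob}) and the fact that $r_j^{(i-1)}>0$, this yields $E[W_j^{(i)}\mid\mathcal{F}^{(i-1)}]=p_j^{(i)}r_j^{(i-1)}$, whereas $E[Z_j^{(i)}\mid\mathcal{F}^{(i-1)}]=r_j^{(i-1)}$. Applying Theorem~\ref{Theorem Borel} to the events $\{Z_j^{(i)}=1\}$, Lemma~\ref{lemma_borel} together with part~(a) (read contrapositively) gives $\sum_i r_j^{(i-1)}\overset{\text{a.s.}}{=}\infty$, and then part~(b) gives $D_t/\sum_{i=1}^t r_j^{(i-1)}\overset{\text{a.s.}}{\rightarrow}1$. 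Since $p_j^{(i)}\to p_j^*$ and the nonnegative weights $r_j^{(i-1)}$ sum to infinity, a pathwise Toeplitz (weighted-average) argument gives $\big(\sum_{i=1}^t p_j^{(i)}r_j^{(i-1)}\big)/\big(\sum_{i=1}^t r_j^{(i-1)}\big)\overset{\text{a.s.}}{\rightarrow}p_j^*$.

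Finally I would treat the numerator $N_t$ by applying Theorem~\ref{Theorem Borel} to the events $\{W_j^{(i)}=1\}$, whose conditional probabilities are $p_j^{(i)}r_j^{(i-1)}$, distinguishing two cases according to whether $\sum_i p_j^{(i)}r_j^{(i-1)}$ diverges or converges. On the divergence event, part~(b) gives $N_t/\sum_{i=1}^t p_j^{(i)}r_j^{(i-1)}\to 1$, and writing $N_t/D_t$ as the product of the three ratios $N_t/\sum p_j^{(i)}r_j^{(i-1)}$, $\sum p_j^{(i)}r_j^{(i-1)}/\sum r_j^{(i-1)}$ and $\sum r_j^{(i-1)}/D_t$ --- whose limits are $1$, $p_j^*$ and $1$ --- yields $N_t/D_t\to p_j^*$. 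On the convergence event, part~(a) makes $N_t$ almost surely bounded, so $N_t/D_t\to 0$, while the Toeplitz step forces $p_j^*=0$; hence $N_t/D_t\to p_j^*$ again. Combining the two events proves $N_t/D_t\overset{\text{a.s.}}{\rightarrow}p_j^*$ and, by the reduction, $r_j^{(t)}\overset{\text{a.s.}}{\rightarrow}p_j^*$.

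I expect this final case distinction to be the main obstacle: precisely when $p_j^*=0$ the numerator's compensator $\sum_i p_j^{(i)}r_j^{(i-1)}$ may fail to diverge, so one cannot invoke part~(b) uniformly and must dovetail both parts of Dubins--Freedman with the Toeplitz lemma to pin down the common limit. The remaining steps --- the reduction that discards $q,p$ and the identity $E[W_j^{(i)}\mid\mathcal{F}^{(i-1)}]=p_j^{(i)}r_j^{(i-1)}$ --- are routine once the conditional-probability bookkeeping from~(\ref{eq:condprob}) is set up carefully.
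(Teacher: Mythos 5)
Your proof is correct, and it follows the same overall strategy as the paper's: reduce to $\sum_{i\leq t} W_j^{(i)}/\sum_{i\leq t} Z_j^{(i)}\to p_j^*$, apply the Dubins--Freedman strong law (Theorem~\ref{Theorem Borel}) to the events $\{W_j^{(i)}=1\}$, average the convergent sequence $p_j^{(i)}$, and split cases according to whether the compensator of $\sum_i W_j^{(i)}$ diverges, with the convergent case forcing $p_j^*=0$. The one genuine difference is the choice of filtration, and it changes the bookkeeping. The paper conditions on an enlarged filtration $\mathcal{G}^{(t)}$ that already contains $Z_j^{(t+1)}$, so the compensator of $W_j^{(t)}$ is $p_j^{(t)}Z_j^{(t)}$; the denominator $\sum_i Z_j^{(i)}$ then needs no separate stochastic treatment, and the averaging step is an ordinary Ces\`aro mean of $p_j^{(l^\omega_i)}$ along the random subsequence of iterations with $Z_j^{(l^\omega_i)}=1$. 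You stay with the natural filtration $\mathcal{F}^{(t)}$, so the compensator becomes $p_j^{(i)}r_j^{(i-1)}$, and you pay for this with a second application of Dubins--Freedman (to relate $\sum_i Z_j^{(i)}$ to $\sum_i r_j^{(i-1)}$, which requires first establishing $\sum_i r_j^{(i-1)}\overset{\text{a.s.}}{=}\infty$ --- your contrapositive reading of part (a) works, as does the direct bound $r_j^{(i-1)}\geq q/(p+K(i-1))$ from the proof of Lemma~\ref{lemma_borel}) and with a Toeplitz lemma for weighted averages with the random weights $r_j^{(i-1)}$ in place of the plain Ces\`aro argument. Both routes are sound; the paper's filtration trick buys a shorter argument with one fewer limit to control, while yours avoids introducing a second filtration and makes the compensator structure under $\mathcal{F}^{(t)}$ explicit. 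Your identity $E[W_j^{(i)}\mid\mathcal{F}^{(i-1)}]=p_j^{(i)}r_j^{(i-1)}$ is valid because $\{W_j^{(i)}=1\}\subseteq\{Z_j^{(i)}=1\}$ and $r_j^{(i-1)}>0$ always, and your final case analysis correctly mirrors the paper's split into $\Omega_1$ and the two sub-cases of $\Omega_2$.
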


\begin{proof}
\noindent Fix \(j\in\mathcal{P}\) and suppose that \(p_j^{(t)}\overset{\text{a.s.}}\rightarrow p_j^*\) as \(t\rightarrow\infty\). 
We apply Theorem \ref{Theorem Borel} again, but using a different filtration \((\mathcal{G}^{(t)})_{t\in\N_0}\), where 
\[ \mathcal{G}^{(0)} = \sigma\left( \left\{ Z_j^{(1)} :~ j\in\mathcal{P}\right\}\right) \, ,\] 
and
\[ \mathcal{G}^{(t)} = \sigma\left( \left\{ Z_j^{(i)} :~ j\in\mathcal{P}, i=1,\dots,t+1\right\} \cup \left\{ W_j^{(i)} :~ j\in\mathcal{P}, i=1,\dots,t\right\} \right) , t\in\N . \] 
Further let \(A_j^{(t)} := \{W_j^{(t)}=1\}\in \mathcal{G}^{(t)}\), for \(t\in\N\), with 
\[ q_j^{(t)} := P\left(A_j^{(t)} ~|~ \mathcal{G}^{(t-1)} \right) = P\left(W_j^{(t)} = 1 ~|~ \mathcal{G}^{(t-1)} \right)  =p_j^{(t)} Z_j^{(t)} \, \]
and \[\Omega':=\left\{\omega\in\Omega:~ \sum_{i=1}^\infty Z_j^{(i)}(\omega) = \infty \right\}\, .\] By Lemma \ref{lemma_borel} we have \(P(\Omega')=1\).
Let \[\Omega_1:=\{\omega\in\Omega':~ p_j^{(t)}(\omega) \rightarrow p_j^*(\omega), t\rightarrow\infty \text{ with } p_j^*(\omega)\in(0,1]\}\] and \[\Omega_2:=\{\omega\in\Omega':~ p_j^{(t)}(\omega) \rightarrow p_j^*(\omega), t\rightarrow\infty \text{ with } p_j^*(\omega)=0\}\,.\]
Then on \(\Omega_1\) we have
\[ \sum_{i=1}^\infty q_j^{(i)} = \sum_{i=1}^\infty p_j^{(i)} Z_j^{(i)} \,\overset{(\text{a1})}{=}\, \sum_{i=1}^\infty p_j^{(l^\omega_i)} = \infty \, ,\]
where equality in (a1) holds since for each \(\omega\in\Omega_1\) there exists an increasing sequence \((l^\omega_i)_{i\in\N}\) with \(l^\omega_i\in\N\) and \(Z_j^{(l^\omega_i)}(\omega)=1\) for all \(i\in\N\). 
So on \(\Omega_1\) we have for \(t\) large enough (to avoid division by 0)
\[ \lim_{t\rightarrow \infty} \frac{\sum_{i=1}^t q_j^{(i)}}{\sum_{i=1}^t Z_j^{(i)} } = \lim_{t\rightarrow \infty} \frac{\sum_{i=1}^t p_j^{(i)} Z_j^{(i)}}{\sum_{i=1}^t Z_j^{(i)} } 
= \lim_{t\rightarrow \infty} \frac{\sum_{i=1}^t p_j^{(l^\omega_i)}}{t } = p_j^* \, ,\]  
which holds for those increasing sequences \((l^\omega_i)_{i\in\N}\) that additionally fulfil \(Z_j^{(i)}(\omega)\) \(=0\) for all \(i\notin\{l_k^{(\omega)}:~ k\in\N\}\). Here we applied Cauchy's limit theorem using the fact that \(p_j^{(l^\omega_i)} \rightarrow p_j^*\) as \(i\rightarrow\infty\). Combining this result with Theorem \ref{Theorem Borel} it follows that on \(\Omega_1\) we have (for \(t\) large enough)
\[ \frac{\sum_{i=1}^t W_j^{(i)}}{\sum_{i=1}^t Z_j^{(i)} } = \underbrace{\frac{\sum_{i=1}^t W_j^{(i)}}{\sum_{i=1}^t q_j^{(i)} }}_{\overset{\text{a.s.}}{\rightarrow} 1} \underbrace{\frac{\sum_{i=1}^t q_j^{(i)}}{\sum_{i=1}^t Z_j^{(i)} }}_{\overset{\text{a.s.}}{\rightarrow} p_j^*} \overset{\text{a.s.}}{\longrightarrow} p_j^* ,~~ t\rightarrow \infty \, .\]
Now on \(\Omega_2\cap\left\{ \sum_{i=1}^\infty q_j^{(i)} = \infty \right\}\) we can use the same argument as above and obtain 
\[ \frac{\sum_{i=1}^t W_j^{(i)}}{\sum_{i=1}^t Z_j^{(i)} } \overset{\text{a.s.}}{\longrightarrow} p_j^* ,~~ t\rightarrow \infty \, .\]
On \(\Omega_2\cap\left\{ \sum_{i=1}^\infty q_j^{(i)} < \infty \right\}\) we almost surely have \(\sum_{i=1}^\infty W_j^{(i)} < \infty\) by Theorem \ref{Theorem Borel}, but since \(\sum_{i=1}^t Z_j^{(i)} \overset{\text{a.s.}}{\rightarrow} \infty \) it also follows that 
\[ \frac{\sum_{i=1}^t W_j^{(i)}}{\sum_{i=1}^t Z_j^{(i)} } \overset{\text{a.s.}}{\longrightarrow} 0 = p_j^* ,~~ t\rightarrow \infty \, .\]
Noting that \(P(\Omega_1\cup\Omega_2)=1\) by assumption and combining the arguments on \(\Omega_1\) and \(\Omega_2\), we conclude that on \(\Omega\) we have
\[ r_j^{(t)} = \frac{q+K \sum_{i=1}^t W_j^{(i)}}{p + K \sum_{i=1}^t Z_j^{(i)} }  = 
 \frac{ \tfrac{q}{K \sum_{i=1}^t Z_j^{(i)} } + \frac{\sum_{i=1}^t W_j^{(i)}}{ \sum_{i=1}^t Z_j^{(i)} } } { \frac{p}{  K \sum_{i=1}^t Z_j^{(i)} }  + 1 } \overset{\text{a.s.}}{\longrightarrow} p_j^* ,~~ t\rightarrow \infty \, .\]
\end{proof}

\begin{definition}\label{def:OIP'}
Given that data \(\mathcal{D}=(\bs X,\bs Y)\) is observed, let \(C_{\mathcal{D}}:\mathcal{M}\rightarrow\R\) be a selection criterion with well-defined function \(f_C\) and \(C\)-optimal model \(S^*=f_C(\mathcal{P})=\{j_1,\dots,j_{s^*}\}\) of size \(s^*=|S^*|\). 
Then the selection criterion \(C\) is said to fulfil the \textit{ordered importance property (OIP')} for the sample \(\mathcal{D}\), if there exists a permutation \((k_1,\dots,k_{s^*})\) of \((j_1,\dots,j_{s^*})\) such that for each \(i=1,\dots,s^*-1\) it holds
\begin{equation} k_i \in f_C(V) ~~\text{ for all } V\subseteq \mathcal{P} \setminus N_{i-1} \text{ with } \{k_1,\dots,k_i\}\subseteq V \,, \label{OIP:eq'}\end{equation}
where 
\begin{equation} N_0 := \{j\in\mathcal{P}:~ j\notin f_C(V) \text{ for all } V\subseteq \mathcal{P} \}  \end{equation}
and 
\begin{equation} N_i := \{j\in\mathcal{P}:~ j\notin f_C(V) \text{ for all } V\subseteq \mathcal{P} \setminus N_{i-1} \text{ with }  \{k_1,\dots,k_i\}\subseteq V \} \,.\end{equation}
\end{definition}

\begin{remark} \label{rem:OIP}
Note that \(S^*=f_C(V)\) for all \(V\subseteq\mathcal{P}\) with \(S^*\subseteq V\). Therefore (\ref{OIP:eq'}) always holds for \(i=s^*\) since \(k_{s^*}\in S^*\). Furthermore, we have 
\begin{equation*} N_{s^*} = \{j\in\mathcal{P}:~ j\notin f_C(V) \text{ for all } V\subseteq \mathcal{P} \setminus N_{s^*-1} \text{ with }  S^*\subseteq V \} =\mathcal{P}\setminus S^* \,. \end{equation*}
\end{remark}

\begin{remark} \label{rem:OIP'}
Note that OIP' of Definition \ref{def:OIP'} is a weaker condition than OIP of Definition \ref{def:OIP} in Section~\ref{sec:limiting} (i.e.\ OIP implies OIP'). Indeed, equation (\ref{OIP:eq}) of Section~\ref{sec:limiting} implies equation (\ref{OIP:eq'}) since the required condition is only imposed on a generally smaller set of subsets \(V\). 
\end{remark}

The next theorem shows that OIP' (and thus also OIP) is really a sufficient condition for the \(C\)-optimal convergence of AdaSub against \(S^*\).

\begin{theorem}\label{OIP_Theorem'}
Given that dataset \(\mathcal{D}=(\bs X,\bs Y)\) is observed, let \(C_{\mathcal{D}}:\mathcal{M}\rightarrow\R\) be a selection criterion with well-defined function \(f_C\) and \(C\)-optimal model \(S^*\). 
Suppose that the ordered importance property (OIP') is satisfied. Then AdaSub converges to the \(C\)-optimal model in the sense of Definition \ref{def:correct_conv}. 
\end{theorem}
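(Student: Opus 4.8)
The plan is to reduce everything to controlling the conditional selection probabilities $p_j^{(t)}$ defined in (\ref{eq:condprob}) and then invoke Theorem~\ref{keylemma}, which guarantees that $p_j^{(t)}\overset{\text{a.s.}}{\rightarrow}p_j^*$ forces $r_j^{(t)}\overset{\text{a.s.}}{\rightarrow}p_j^*$. Thus it suffices to show $p_j^{(t)}\overset{\text{a.s.}}{\rightarrow}1$ for $j\in S^*$ and $p_j^{(t)}\overset{\text{a.s.}}{\rightarrow}0$ for $j\in\mathcal{P}\setminus S^*$. I would organise this as an induction along the OIP$'$ ordering, proving simultaneously for $i=0,1,\dots,s^*$ the two statements $(A_i)$: $r_{k_l}^{(t)}\overset{\text{a.s.}}{\rightarrow}1$ for all $l\leq i$, and $(B_i)$: $r_j^{(t)}\overset{\text{a.s.}}{\rightarrow}0$ for all $j\in N_i$. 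A preliminary observation I would record is that the sets $N_i$ are nested increasingly: if $j\in N_{i-1}$ then $j\notin V$ for every $V\subseteq\mathcal{P}\setminus N_{i-1}$, so by Remark~\ref{remark:fC}(a) $j\notin f_C(V)$ and hence $j\in N_i$. Together with Remark~\ref{rem:OIP} this gives $N_{s^*}=\mathcal{P}\setminus S^*$, so that $(A_{s^*})$ and $(B_{s^*})$ are exactly the two cases of Definition~\ref{def:correct_conv}.

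For the base case $i=0$, every $j\in N_0$ satisfies $j\notin f_C(V)$ for all $V$, so $W_j^{(t)}\equiv 0$, whence $p_j^{(t)}\equiv 0$ and Theorem~\ref{keylemma} yields $(B_0)$; $(A_0)$ is vacuous. For the inductive step I would assume $(A_{i-1})$ and $(B_{i-1})$. To prove $(A_i)$ I bound $p_{k_i}^{(t+1)}$ from below: since the coordinates of $V^{(t+1)}$ are, conditionally on $\mathcal{F}^{(t)}$, independent Bernoulli variables with parameters $r_m^{(t)}$, and since OIP$'$ (extended to $i=s^*$ via Remark~\ref{rem:OIP}) forces $k_i\in f_C(V^{(t+1)})$ whenever $\{k_1,\dots,k_i\}\subseteq V^{(t+1)}\subseteq\mathcal{P}\setminus N_{i-1}$, I obtain
\[
p_{k_i}^{(t+1)}\ \geq\ \prod_{l=1}^{i-1} r_{k_l}^{(t)}\ \prod_{m\in N_{i-1}}\bigl(1-r_m^{(t)}\bigr)\,.
\]
By $(A_{i-1})$ and $(B_{i-1})$ together with the finiteness of $N_{i-1}$, the right-hand side tends to $1$ almost surely, so $p_{k_i}^{(t+1)}\overset{\text{a.s.}}{\rightarrow}1$ and Theorem~\ref{keylemma} gives $r_{k_i}^{(t)}\overset{\text{a.s.}}{\rightarrow}1$, establishing $(A_i)$.

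For $(B_i)$ it remains to treat $j\in N_i\setminus N_{i-1}$; such $j$ lies in $N_{s^*}=\mathcal{P}\setminus S^*$, so $j\neq k_l$ for every $l$. By the definition of $N_i$, $j\notin f_C(V^{(t+1)})$ whenever $\{k_1,\dots,k_i\}\subseteq V^{(t+1)}\subseteq\mathcal{P}\setminus N_{i-1}$, and the same conditional-independence computation (now conditioning on $j\in V^{(t+1)}$, an index distinct from the $k_l$ and from $N_{i-1}$) yields
\[
p_j^{(t+1)}\ \leq\ 1-\prod_{l=1}^{i} r_{k_l}^{(t)}\ \prod_{m\in N_{i-1}}\bigl(1-r_m^{(t)}\bigr)\,.
\]
Using $(A_i)$ just proved and $(B_{i-1})$, the product again converges to $1$ almost surely, so $p_j^{(t+1)}\overset{\text{a.s.}}{\rightarrow}0$ and Theorem~\ref{keylemma} delivers $r_j^{(t)}\overset{\text{a.s.}}{\rightarrow}0$, completing $(B_i)$ and the induction. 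I expect the main obstacle to be the careful bookkeeping of the conditional probabilities: one must justify the factorisation of $p_j^{(t+1)}$ into the displayed products using the conditional independence of the Bernoulli draws given $\mathcal{F}^{(t)}$, verify that the guaranteeing event $\{k_1,\dots,k_i\}\subseteq V\subseteq\mathcal{P}\setminus N_{i-1}$ is genuinely contained in $\{W_{k_i}=1\}$ (respectively disjoint from $\{W_j=1\}$), and exploit the finiteness of each $N_{i-1}$ so that the finite products of the converging factors indeed converge; the remainder is a clean two-layer induction feeding into Theorem~\ref{keylemma}.
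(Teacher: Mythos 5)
Your proof is correct and follows essentially the same route as the paper's: an induction along the OIP$'$ permutation that alternates between driving $r_{k_i}^{(t)}\to 1$ and $r_j^{(t)}\to 0$ for $j\in N_i$, using the conditional independence of the Bernoulli draws to control $p_j^{(t)}$ and then invoking Theorem~\ref{keylemma}. The only differences are presentational — you use one-sided bounds on $p_j^{(t+1)}$ where the paper expands via the law of total probability, and you explicitly justify the nesting $N_0\subseteq N_1\subseteq\cdots\subseteq N_{s^*}$, which the paper asserts without comment.
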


\begin{proof}
\noindent Let \(S^*=f_C(\mathcal{P})=\{j_1,\dots,j_{s^*}\}\) be the \(C\)-optimal model of size \(s^*=|S^*|\).
Since OIP' is satisfied there exists a permutation \((k_1,\dots,k_{s^*})\) of \((j_1,\dots,j_{s^*})\) such that equation (\ref{OIP:eq'}) holds for each \(i=1,\dots,s^*-1\) (with corresponding sets \(N_0\subseteq N_1\subseteq\ldots \subseteq N_{s^*}\)). 
Let \(j\in N_0\). Then by definition we have \(j\notin f_C(V)\) for all \(V\subseteq \mathcal{P}\), so that 
\[ p_j^{(t+1)} = P(j\in f_C(V^{(t+1)})~|~j\in V^{(t+1)}, \mathcal{F}^{(t)}) = 0 \]
for all \(t\in\N_0\). With Theorem \ref{keylemma} we conclude that \(r_{j}^{(t)} \overset{\text{a.s.}}{\rightarrow} 0\) as \(t\rightarrow\infty\) for \(j\in N_0\). \\
Now by OIP' we have \(k_1\in f_C(V)\) for all \(V\subseteq\mathcal{P}\setminus N_0\) with \(\{k_1\}\subseteq V\), so that for all \(t\in\N_0\) we have
\[ P( k_1\in f_C(V^{(t+1)}) ~|~ k_1 \in V^{(t+1)}, N_0 \cap V^{(t+1)}=\emptyset , \mathcal{F}^{(t)} ) = 1 \, .\] 
Note that by the independence of the Bernoulli trials in AdaSub we have 
\begin{align*}  P( N_0 \cap V^{(t+1)} = \emptyset ~|~ k_1 \in V^{(t+1)} , \mathcal{F}^{(t)} )  &= P( N_0 \cap V^{(t+1)} = \emptyset ~|~ \mathcal{F}^{(t)} ) \\
&= \prod_{l\in N_0} \left(1-r_l^{(t)}\right) \overset{\text{a.s.}}{\rightarrow} 1 \end{align*}
and therefore
\[ P( N_0 \cap V^{(t+1)} \neq \emptyset  ~|~ k_1 \in V^{(t+1)} , \mathcal{F}^{(t)} ) = 1- \prod_{l\in N_0} \left(1-r_l^{(t)}\right) \overset{\text{a.s.}}{\rightarrow} 0 \,.\] 
Thus we conclude with the law of total probability that
\begin{align*}
p_{k_1}^{(t+1)} &= P( k_1\in f_C(V^{(t+1)}) ~|~ k_1 \in V^{(t+1)} , \mathcal{F}^{(t)} ) \\
                &= P( k_1\in f_C(V^{(t+1)}) ~|~ k_1 \in V^{(t+1)},N_0 \cap V^{(t+1)} = \emptyset , \mathcal{F}^{(t)} ) 	\\
								&~~~~\times  \prod_{l\in N_0} \left(1-r_l^{(t)}\right)  \\
								&~~~+ P( k_1\in f_C(V^{(t+1)}) ~|~ k_1\in V^{(t+1)} , N_0 \cap V^{(t+1)} \neq \emptyset , \mathcal{F}^{(t)} ) \\
								&~~~~\times \left(1-\prod_{l\in N_0} \left(1-r_l^{(t)}\right)\right)  \\
								&\overset{\text{a.s.}}{\rightarrow} 1 \times 1 + 0 = 1 , ~~~ t\rightarrow\infty \, .\end{align*} 
By Theorem \ref{keylemma} we also obtain \(r_{k_1}^{(t)} \overset{\text{a.s.}}{\rightarrow} 1 \) as \(t\rightarrow\infty\). \\
Now let \(j\in N_1\setminus N_0\). Then by definition we have \(j\notin f_C(V)\) for all \(V\subseteq \mathcal{P}\setminus N_0\) with \(\{k_1\}\subseteq V\), so that 
\[ P(j\in f_C(V^{(t+1)})~|~j\in V^{(t+1)}, N_0 \cap V^{(t+1)} = \emptyset , k_1 \in V^{(t+1)}, \mathcal{F}^{(t)}) = 0 \]
for all \(t\in\N_0\).
Note that again by the independence of the Bernoulli trials in AdaSub we have 
\[  P(N_0 \cap V^{(t+1)} =\emptyset, k_1 \in V^{(t+1)} ~|~ j \in V^{(t+1)} , \mathcal{F}^{(t)} ) =\prod_{l\in N_0} \left(1-r_l^{(t)}\right) \times r_{k_1}^{(t)} \overset{\text{a.s.}}{\rightarrow} 1 \, .\]
Thus we similarly conclude with the law of total probability that
\begin{align*}
p_{j}^{(t+1)} &= P( j\in f_C(V^{(t+1)}) ~|~ j \in V^{(t+1)} , \mathcal{F}^{(t)} ) \\[-1mm]
                &= P( j\in f_C(V^{(t+1)}) ~|~ k_1,j \in V^{(t+1)},N_0 \cap V^{(t+1)} = \emptyset, \mathcal{F}^{(t)} ) \\
								&~~~~\times \prod_{l\in N_0} \left(1-r_l^{(t)}\right) \times r_{k_1}^{(t)} \\[-1mm]
									&~~~+ \ldots \\[-1mm]
								&\overset{\text{a.s.}}{\rightarrow} 0 \times 1 + 0 = 0 , ~~~ t\rightarrow\infty \, .\end{align*} 
By Theorem \ref{keylemma} we also obtain \(r_{j}^{(t)} \overset{\text{a.s.}}{\rightarrow} 0 \) as \(t\rightarrow\infty\) for \(j\in N_1\setminus N_0\). \\
Now by OIP' we have \(k_2\in f_C(V)\) for all \(V\subseteq\mathcal{P}\setminus N_1\) with \(\{k_1,k_2\}\subseteq V\), so that for all \(t\in\N_0\) we have
\[ P( k_2\in f_C(V^{(t+1)}) ~|~ k_2 \in V^{(t+1)}, N_1 \cap V^{(t+1)} = \emptyset , k_1 \in V^{(t+1)}, \mathcal{F}^{(t)} ) = 1 \, .\] 
Note that again by the independence of the Bernoulli trials in AdaSub we have 
\[  P(N_1 \cap V^{(t+1)} = \emptyset , k_1 \in V^{(t+1)} ~|~ \mathcal{F}^{(t)} ) = \prod_{l\in N_1} \left( 1- r_l^{(t)}\right) \times r_{k_1}^{(t)} \overset{\text{a.s.}}{\rightarrow} 1 \, .\]
Thus we similarly conclude with the law of total probability that
\begin{align*}
p_{k_2}^{(t+1)} &= P( k_2\in f_C(V^{(t+1)}) ~|~ k_2 \in V^{(t+1)} , \mathcal{F}^{(t)} ) \\
                &= P( k_2\in f_C(V^{(t+1)}) ~|~ k_1,k_2 \in V^{(t+1)},N_1 \cap V^{(t+1)} = \emptyset ,\, \mathcal{F}^{(t)} ) \\
								&~~~~\times \prod_{l\in N_1} \left( 1- r_l^{(t)}\right) \times r_{k_1}^{(t)}   \\
								&~~~+ \ldots  \\
								&\overset{\text{a.s.}}{\rightarrow} 1 \times 1 + 0 = 1 , ~~~ t\rightarrow\infty \, .\end{align*} 
By Theorem \ref{keylemma} we also obtain \(r_{k_2}^{(t)} \overset{\text{a.s.}}{\rightarrow} 1 \) as \(t\rightarrow\infty\). \\
Proceeding by induction we similarly conclude that for each \(i=2,\dots,s^*-1\) we have \(r_{j}^{(t)} \overset{\text{a.s.}}{\rightarrow} 0 \) as \(t\rightarrow\infty\) for all \(j\in N_i\setminus N_{i-1}\); and for each \(i=3,\dots,s^*-1\) we have \(r_{k_i}^{(t)} \overset{\text{a.s.}}{\rightarrow} 1 \) as \(t\rightarrow\infty\). \\
Note that \(k_{s^*}\in S^* = f_C(V)\) for all \(V\subseteq\mathcal{P}\) with \(\{k_1,\dots,k_{s^*}\}\subseteq V\) and that \(N_{s^*} = \mathcal{P}\setminus S^*\) (see Remark \ref{rem:OIP}). Therefore, by using the same arguments, we also obtain \(r_{k_{s^*}}^{(t)} \overset{\text{a.s.}}{\rightarrow} 1 \) as \(t\rightarrow\infty\) and 
\(r_{j}^{(t)} \overset{\text{a.s.}}{\rightarrow} 0\) as \(t\rightarrow\infty\) for all \(j\in N_{s^*} = \mathcal{P}\setminus S^*\).
This completes the proof.
\end{proof}

\begin{corollary}
If \(|S^*|\leq 1\), then OIP is satisfied and therefore AdaSub converges to the \(C\)-optimal model. 
\end{corollary}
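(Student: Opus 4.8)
The plan is to reduce the claim to a direct application of Theorem~\ref{OIP_Theorem} (equivalently Theorem~\ref{OIP_Theorem'}) by verifying that the ordered importance property holds automatically whenever $s^* = |S^*| \leq 1$. The crucial observation is that the defining condition of OIP in Definition~\ref{def:OIP} is imposed only for indices $i = 1,\dots,s^*-1$; when $s^* \leq 1$ this index set is empty, so the condition is vacuously satisfied as soon as \emph{any} permutation of $(j_1,\dots,j_{s^*})$ exists at all. Hence the entire content of the proof is to exhibit such a (trivial) permutation and then invoke the convergence theorem.

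First I would split into the two cases. If $s^* = 0$, then $S^* = \emptyset$ and the only candidate is the empty permutation, for which there are no indices $i$ to check, so OIP holds. If $s^* = 1$, say $S^* = \{j_1\}$, the trivial permutation $(j_1)$ is the only candidate, and the required condition again ranges over the empty set $i = 1,\dots,0$ and is therefore vacuous. In both cases OIP is satisfied, and Theorem~\ref{OIP_Theorem} then yields the $C$-optimal convergence of AdaSub in the sense of Definition~\ref{def:correct_conv}, i.e.\ $r_j^{(t)}$ converges almost surely to one for $j \in S^*$ and to zero for $j \notin S^*$.

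There is essentially no analytical content here, so the only point requiring care --- and the one I would regard as the ``obstacle'', such as it is --- is the bookkeeping at the boundary of the quantifier: one must be sure that the single important variable (when $s^* = 1$) and the emptiness of $S^*$ (when $s^* = 0$) really are handled by the cited theorem despite the vacuous OIP clause. Both are covered by the separate treatment of $k_{s^*}$ in the proof of Theorem~\ref{OIP_Theorem'} (cf.\ Remark~\ref{rem:OIP}), which relies only on the fact that $f_C(V) = S^*$ for every $V \supseteq S^*$ (Remark~\ref{remark:fC}(b)). Concretely, for $s^* = 0$ this remark gives $f_C(V) = \emptyset$ for all $V$, so every index lies in $N_0$ and $r_j^{(t)} \rightarrow 0$ for all $j$; for $s^* = 1$ it gives $k_1 = k_{s^*} \in f_C(V)$ whenever $k_1 \in V$, forcing $r_{k_1}^{(t)} \rightarrow 1$ while $r_j^{(t)} \rightarrow 0$ for $j \neq k_1$. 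Thus the statement follows purely from the correct reading of the empty-range quantifier, with no estimate needed.
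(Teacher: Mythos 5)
Your proof is correct and coincides with the paper's (implicit) reasoning: the corollary is stated without proof precisely because the OIP condition quantifies over $i=1,\dots,s^*-1$, which is empty when $s^*\leq 1$, so OIP holds vacuously and Theorem~\ref{OIP_Theorem} applies. Your additional check that the proof of Theorem~\ref{OIP_Theorem'} handles $k_{s^*}$ (and the case $S^*=\emptyset$ via $N_0$) separately through Remark~\ref{remark:fC}(b) is exactly the right point to verify and is consistent with the paper.
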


\begin{corollary}\label{OIP_replacement}
Let \(S^*=\{j_1,\dots,j_{s^*}\}\) and let \(D=\{l_1,\dots,l_d\}\subseteq S^*\) be of maximal cardinality \(|D|=d\) such that there exists a permutation \((k_1,\dots,k_d)\) of \((l_1,\dots,l_d)\) such that for all \(i=1,\dots,d\) we have
\begin{equation} k_i\in f_C(V) ~~\text{ for all } V\subseteq \mathcal{P} \setminus N_{i-1} \text{ with } \{k_1,\dots,k_i\}\subseteq V \,, \label{OIPalt:eq}\end{equation}
where the sets \(N_0,\dots,N_d\) are defined as in Definition \ref{def:OIP}. In particular we have 
 \begin{equation}N_d =\{j\in\mathcal{P}:~ j\notin f_C(V) \text{ for all } V\subseteq \mathcal{P} \setminus N_{d-1} \text{ with } \{k_1,\dots,k_d\}\subseteq V\}\,.\end{equation}
Then for all \(j\in D\) we have \(r_{j}^{(t)}\overset{\text{a.s.}}{\rightarrow} 1\), \(t\rightarrow\infty\) and for all \(j\in N_d\) we have \(r_{j}^{(t)}\overset{\text{a.s.}}{\rightarrow}0\), \(t\rightarrow\infty\).
\end{corollary}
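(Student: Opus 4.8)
The plan is to replay, almost verbatim, the alternating induction used in the proof of Theorem~\ref{OIP_Theorem'}, but to stop it after $d$ steps instead of carrying it all the way to $s^*$. The only ingredients are Theorem~\ref{keylemma} (which lets me deduce the limit of $r_j^{(t)}$ from that of $p_j^{(t)}$), the mutual independence of the Bernoulli draws generating $V^{(t+1)}$, the defining property \eqref{OIPalt:eq} of the permutation $(k_1,\dots,k_d)$, and the recursive definition of the nested sets $N_0\subseteq N_1\subseteq\dots\subseteq N_d$.

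First I would treat the base case $N_0$: for $j\in N_0$ one has $j\notin f_C(V)$ for every $V\subseteq\mathcal{P}$, so $p_j^{(t+1)}=0$ identically and Theorem~\ref{keylemma} yields $r_j^{(t)}\overset{\text{a.s.}}{\rightarrow}0$. I would then alternate two kinds of steps for $i=1,\dots,d$. For the \emph{inclusion} step I invoke \eqref{OIPalt:eq}, which guarantees $k_i\in f_C(V)$ whenever $\{k_1,\dots,k_i\}\subseteq V\subseteq\mathcal{P}\setminus N_{i-1}$; conditioning further on the event that $k_1,\dots,k_{i-1}\in V^{(t+1)}$ and $N_{i-1}\cap V^{(t+1)}=\emptyset$ and applying the law of total probability exactly as was done for $k_1$ and $k_2$ in Theorem~\ref{OIP_Theorem'} gives $p_{k_i}^{(t+1)}\overset{\text{a.s.}}{\rightarrow}1$, hence $r_{k_i}^{(t)}\overset{\text{a.s.}}{\rightarrow}1$. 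For the \emph{exclusion} step I use the definition of $N_i$, namely that $j\notin f_C(V)$ for every such $V$ when $j\in N_i\setminus N_{i-1}$, to obtain $p_j^{(t+1)}\overset{\text{a.s.}}{\rightarrow}0$ and hence $r_j^{(t)}\overset{\text{a.s.}}{\rightarrow}0$.

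The common engine of both steps is that, by independence, the conditioning event just described has conditional probability
\[ \prod_{l\in N_{i-1}}\bigl(1-r_l^{(t)}\bigr)\prod_{m<i} r_{k_m}^{(t)}\overset{\text{a.s.}}{\longrightarrow}1\qquad(t\to\infty), \]
a \emph{finite} product (since $N_{i-1}\subseteq\mathcal{P}$ is finite) whose convergence follows once the inductive hypotheses $r_l^{(t)}\to0$ on $N_{i-1}$ and $r_{k_m}^{(t)}\to1$ for $m<i$ are in force (the exclusion step uses the analogous product over $m\le i$). I would therefore carry out the steps in the order $N_0,\,k_1,\,N_1\setminus N_0,\,k_2,\dots,k_d,\,N_d\setminus N_{d-1}$, so that each step's prerequisites are already secured. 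The two conclusions are mutually consistent because $D$ and $N_d$ are disjoint: indeed $S^*\cap N_l=\emptyset$ for every $l$ by an easy induction using $f_C(V)=S^*$ for all $V\supseteq S^*$ (Remark~\ref{remark:fC}(b), taking $V=S^*$ as the witnessing test set), and $D\subseteq S^*$. Collecting the exclusion steps gives $r_j^{(t)}\to0$ on $N_0\cup\bigcup_{i=1}^{d}(N_i\setminus N_{i-1})=N_d$, while the inclusion steps give $r_j^{(t)}\to1$ on $D=\{k_1,\dots,k_d\}$, which is the assertion.

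The single point where the argument genuinely departs from Theorem~\ref{OIP_Theorem'} is the terminal step. There, the last variable $k_{s^*}$ and the terminal set $N_{s^*}=\mathcal{P}\setminus S^*$ were handled by exploiting the identity $f_C(V)=S^*$ for all $V\supseteq S^*$, a feature special to the full $C$-optimal model. For a proper subset $D\subsetneq S^*$ no analogue $f_C(V)=D$ is available, and I expect this to be the only thing to watch; the resolution is that none is needed, since the recursive definition of $N_d$ already provides exactly the hypothesis ``$j\notin f_C(V)$ for all relevant $V$'' consumed by the exclusion step, while \eqref{OIPalt:eq} at $i=d$ supplies the inclusion of $k_d$. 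Thus the former special terminal step collapses into an ordinary induction step. I would finally stress that the maximality of $D$ is nowhere used in the convergence argument itself; it enters only the \emph{statement}, where it guarantees that the conclusion captures the largest possible learning path and thereby connects the corollary to the stability discussion of Section~\ref{sec:limiting}.
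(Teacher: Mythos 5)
Your argument is correct and is essentially the paper's own proof: the paper disposes of Corollary~\ref{OIP_replacement} by noting that one replays the alternating induction from the proof of Theorem~\ref{OIP_Theorem'} with the partial permutation $(k_1,\dots,k_d)$ in place of the full one, which is precisely the scheme you carry out (base case $N_0$, then alternating inclusion/exclusion steps driven by Theorem~\ref{keylemma}, independence of the Bernoulli draws, and the law of total probability). Your added observations --- that the special terminal step of Theorem~\ref{OIP_Theorem'} is not needed because \eqref{OIPalt:eq} is assumed up to $i=d$ and the definition of $N_d$ supplies the final exclusion hypothesis, that $S^*\cap N_l=\emptyset$ via the test set $V=S^*$, and that maximality of $D$ plays no role in the convergence argument --- are all accurate refinements of what the paper leaves implicit.
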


\begin{proof}
The proof is along the lines of the proof of Theorem \ref{OIP_Theorem'}, using the (partial) permutation \((k_1,\dots,k_d)\) of variables in \(D\subseteq S^*\) instead of the (full) permutation \((k_1,\dots,k_{s^*})\) of all variables in \(S^*\).  
\end{proof}

\begin{remark}\label{remark:unique}
In Theorem \ref{OIP_Theorem'} it is assumed that the function \(f_C\) is well-defined, i.e.\ that the solutions \(S^{(t)}=f_C(V^{(t)})\) are unique for all subspaces \(V^{(t)}\subseteq\mathcal{P}\).  
In case of non-uniqueness of the solutions to sub-problems \(S^{(t)}=f_C(V^{(t)})\) the AdaSub algorithm and the convergence result can be slightly adjusted (see \citealp[Remark 4.4]{staerk2018} for details). 
In particular, perfect multicollinearity among a set of explanatory variables \(\{\bs X_j; j\in S\}\) with \(S\in\mathcal{M}=\{S\subseteq\mathcal{P}:\, |S|< n-2 \}\) can lead to non-uniqueness of the solution to \(\ell_0\)-type criteria; in such a case it may generally be preferred to deal with the multicollinearity issue before proceeding with data-driven variable selection, e.g.\ by reducing the number of considered variables based on subject-matter knowledge or by aiming to increase the sample size. 
\end{remark}

Finally, the following remark provides analytical results regarding the speed of convergence of AdaSub under the finite-sample PF assumption. 

\begin{remark} \label{remark:speed}
For \(m\in\N\) and \(j\in\mathcal{P}\) let  
\begin{equation}
T_j^{(m)} = \min\Big\{t\in\N:\,\sum_{l=1}^t 1_{V^{(l)}}(j) = m\Big\}
\end{equation}
denote the random number of iterations until variable \(X_j\) is considered \(m\) times in the search of AdaSub. Furthermore, let \begin{equation} S^*_{\text{PF}}=\{j\in S^*: j\in f_C(V) \text{ for all } V\subseteq\mathcal{P} \text{ with } j\in V \}
\end{equation}
denote the set of variables in the \(C\)-optimal model \(S^*\) for which the finite-sample PF property holds.
\begin{enumerate}
\item[(a)] Since \(b_j^{(i)} = 1_{V^{(i)}}(j) \) are independent Bernoulli distributed with success probability \(r_j^{(0)}= \frac{q}{p}\) for \(1\leq i\leq T_j^{(1)}\) (see step (2)(a) in Algorithm~\ref{AdaSub}), the number of iterations \(T_j^{(1)}\) until variable \(X_j\) with \(j\in\mathcal{P}\) is considered for the first time follows a geometric distribution with success probability \(r_j^{(0)}=\frac{q}{p}\) and expectation \(E[T_j^{(1)}] = \frac{p}{q}\), i.e.\
\begin{equation}
T_j^{(1)} = \min\Big\{t\in\N:\,b_j^{(t)} = 1\Big\} \sim \text{Geo}\left(\frac{q}{p}\right) \,.
\end{equation}

\item[(b)] Let \(j\in S^*_{\text{PF}}\). Then, by the definition of AdaSub it holds \(r_j^{(t)} = \frac{q + Ki}{p + Ki}\) for \(T_j^{(i)}  \leq t< T_j^{(i+1)}\) and \(i\in\N\).  
Thus, for \(j\in S^*_{\text{PF}}\) it holds that 
\begin{equation}
T_j^{(i+1)} - T_j^{(i)} \sim \text{Geo}\left(\frac{q + Ki}{p + Ki}\right) \,,
\end{equation}
since, for \(T_j^{(i)} < t \leq T_j^{(i+1)}\), variables \(b_j^{(t)}\) are independent Bernoulli distributed with success probability \(r_j^{(t-1)} = \frac{q + Ki}{p + Ki}\). Hence, for \(m\geq 1\), it holds 
\begin{equation}
E\left[T_j^{(m)}\right] = E\left[T_j^{(1)}\right] + \sum_{i=1}^{m-1} E\left[T_j^{(i+1)} - T_j^{(i)} \right] = \sum_{i=0}^{m-1}  \frac{p + Ki}{q + Ki} \,.
\end{equation}
Let 
\begin{equation} 
T_{\rho,j} = \min\left\{t\in\N: r_j^{(t)}> \rho \right\} 
\end{equation} 
denote the number of iterations until the thresholded model \(\hat{S}_\rho\) with threshold \(\rho\) includes variable \(X_j\). Under the assumption that \(j\in f_C(V)\) for all \(V\subseteq\mathcal{P}\) with \(j\in V\), it holds 
\begin{equation} 
T_{\rho,j}  =  T_j^{(i(\rho))} , ~\text{with}~ i(\rho) = \left\lfloor\frac{\rho p - q}{K(1-\rho)} + 1 \right\rfloor\in\N \,
\end{equation} 
and thus we derive 
\begin{equation} 
E\left[T_{\rho,j}\right] = E\left[ T_j^{(i(\rho))} \right] =  \sum_{i=0}^{i(\rho)-1}  \frac{p + Ki}{q + Ki} \,.
\end{equation} 

\item[(c)]  Suppose that the finite-sample PF property (\ref{PF_analog}) holds for the criterion \(C\), i.e.~it holds \(S^*=S^*_{\text{PF}}\).  
Then the number of iterations 
needed so that the thresholded model \(\hat{S}_\rho\) with threshold \(\rho\)  includes all variables in the \(C\)-optimal model \(S^*\) (with \(|S^*|=s^*\)) can be written as \(T_\rho = \max_{j\in S^*}\, T_{j}^{(i(\rho))}\). 
Thus, by using the result from (b), we obtain an upper bound on the expected number of required iterations 
\begin{equation}
E\left[ T_\rho \right] = E\left[\max_{j\in S^*}\, T_{j}^{(i(\rho))}\right] 
                       \leq \sum_{j\in S^*} E\left[ T_{j}^{(i(\rho))} \right] 
											 = s^* \sum_{i=0}^{i(\rho)-1} \frac{p + Ki}{q + Ki}  \,.
\end{equation} 
Note that this is only a crude upper bound; in fact, in simulations (see Figure~\ref{fig:speed}) it is empirically observed that the mean number of iterations \( E\left[ T_\rho \right]\) scales approximately logarithmically with the number of variables \(s^*\) in the \(C\)-optimal model \(S^*\). 

\item[(d)] Suppose that the finite-sample PF property (\ref{PF_analog}) holds for the criterion \(C\), i.e.~it holds \(S^*=S^*_{\text{PF}}\). Let  
\begin{equation} 
T_{\text{b}} = \min\left\{t\in\N: S^{(t)} = S^* \right\}
\end{equation}
denote the number of iterations needed to identify the \(C\)-optimal model. 
If there is no adaptation of the selection probabilities in the algorithm (\(K=0\)), then for all \(t\in\N\) it holds
\[P\left(S^{(t)} = S^*\right) = \left(\frac{q}{p}\right)^{s^*},  \] 
and thus
\[ T_{\text{b}} \sim \text{Geo}\left( \left(\frac{q}{p}\right)^{s^*} \right) \text{ with } E\left[ T_{\text{b}}\right] = \left(\frac{p}{q}\right)^{s^*} \,. \]
However, in the limiting case \(K\rightarrow\infty\) of infinite adaptation, i.e.\ for \(j\in S^*\) it holds \(r_j^{(t)}=\frac{q}{p}\) for \(0\leq t < T_j^{(1)}\) and \(r_j^{(t)}=1\) for \(t\geq T_j^{(1)}\), we obtain \(T_{\text{b}} = \max_{j\in S^*}  T_j^{(1)}\) with expectation
\begin{equation} \label{eq:approxE}
 E\left[T_{\text{b}}\right] = E\left[\max_{j\in S^*} \,  T_j^{(1)} \right] 
             \approx \frac{1}{2} + \frac{1}{\log\left(\frac{p}{p-q}\right)} \sum_{i=1}^{s^*} \frac{1}{i} \,.
\end{equation}
Here we have used a result by \cite{eisenberg2008} regarding the expectation of the maximum of independent geometrically distributed variables \(T_j^{(1)}\sim\text{Geo}(q/p)\), \(j\in S^*\), with approximation error of the expectation in (\ref{eq:approxE}) bounded by \(\frac{1}{2}\).  Thus, under the finite-sample PF assumption with \(K\rightarrow\infty\), the expected number of iterations \(E\left[T_{\text{b}}\right]\) grows logarithmically with \(s^*\) (as the harmonic series diverges logarithmically). Furthermore, with a Taylor expansion the term \(1/\log\left(\frac{p}{p-q}\right)\) in (\ref{eq:approxE}) can be approximated by \(\frac{p-q}{q}\), showing that the expectation \(E\left[T_{\text{b}}\right]\) grows approximately linearly with~\(p\). \
\end{enumerate}
\end{remark}

\clearpage

\subsection{Illustrative example of AdaSub} \label{sec:illustrative}

In order to illustrate the performance of AdaSub in a high-dimensional set-up, we consider a simulated example with \(p=1000\) and \(n=60\). 
 We generate one particular dataset \(\mathcal{D}=(\bs X,\bs Y)\) by simulating \( \bs X=(X_{ij})\in\mathbb{R}^{n\times p}\) with independent rows \(\bs X_{i,*}\sim\mathcal{N}_p(0,\bs \Sigma)\), where \(\Sigma_{kl}=0\) for \(k\neq l\) and \(\Sigma_{kk}=1\). Furthermore, let \[\bs \beta^0=(0.4,0.8,1.2,1.6,2.0,0,\dots,0)^T\in\mathbb{R}^p\] be the true vector of regression coefficients with active set \(S_0=\{1,\dots,5\}\). 
 The response \(\bs Y=(Y_1,\dots,Y_n)^T\) is simulated via \(Y_i\overset{\text{ind.}}{\sim} N(\bs X_{i,*} \bs \beta^0, 1)\), \(i=1,\dots,n\). We adopt the (negative) extended BIC (\(\text{EBIC}_\gamma\)) as the criterion \(C\) and consider the tuning parameter choices \(\gamma=0.6\) and \(\gamma=1\) in \(\text{EBIC}_\gamma\). For both cases, we apply AdaSub with \(T=10,000\) iterations on the same dataset simulated as above and choose \(q=10\) and \(K=n\) as the tuning parameters of AdaSub. 

\begin{figure}[ht]
\centering
\subfloat[\(\gamma=0.6\)]{\includegraphics[width=.45\linewidth]{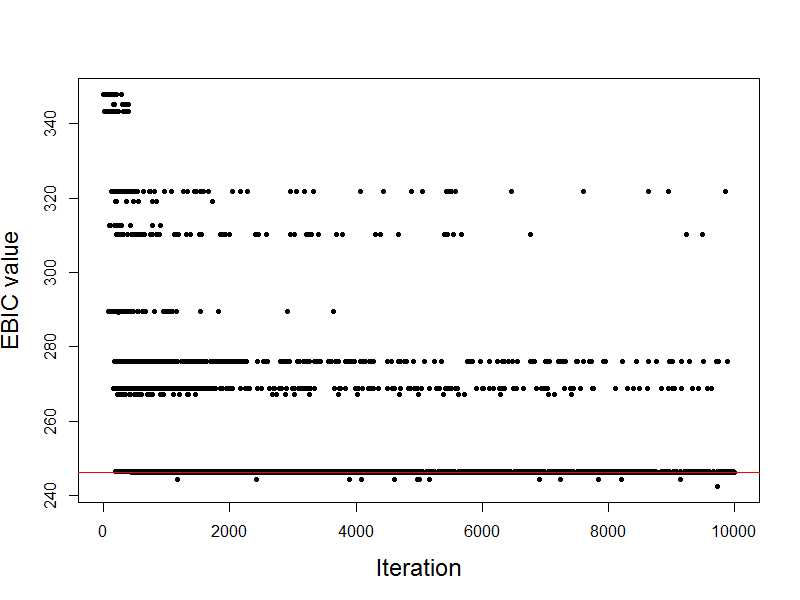}
 \label{Evolution_Ex_06}}\qquad
\subfloat[\(\gamma=1\)]{\includegraphics[width=.45\linewidth]{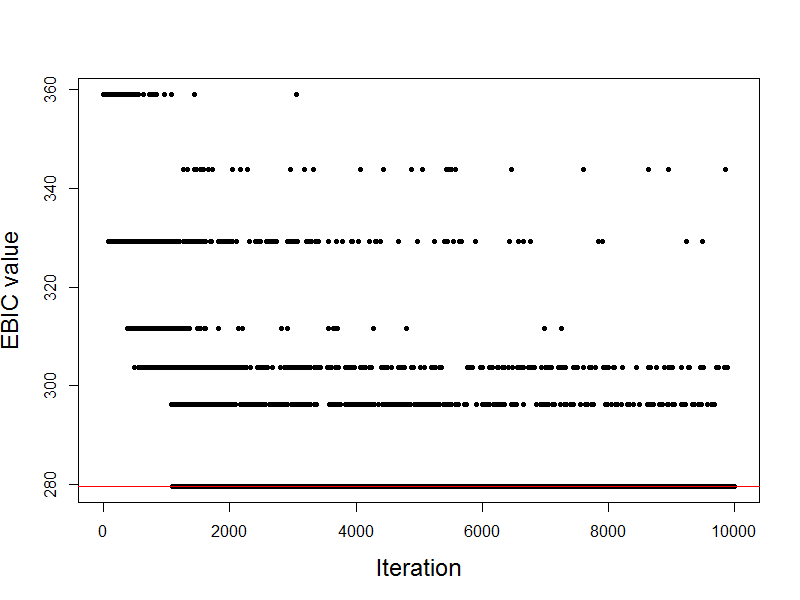}
 \label{Evolution_Ex_1}}
\caption{AdaSub for the high-dimensional simulated example. Plots of the evolution of \(\text{EBIC}_\gamma(S^{(t)})\) along the iterations \(t\) for (a) \(\gamma=0.6\) and (b) \(\gamma=1\). The red lines indicate the \(\text{EBIC}_{\gamma}\)-values of the thresholded model \(\hat{S}_{0.9}\).}
\label{fig:EBICvalues}
\end{figure}

We present some typical ``diagnostic plots'' for the described simulated data example, which are generally very helpful for examining the convergence of the AdaSub algorithm. Figure~\ref{fig:EBICvalues} shows the evolution of the \(\text{EBIC}_\gamma(S^{(t)})\)-values along the iterations \(t\) for \(\gamma=0.6\) and \(\gamma=1\) (recall that \(S^{(t)}=f_C(V^{(t)})\) denotes the ``best'' submodel contained in \(V^{(t)}\)), while the red lines indicate the values of \(\text{EBIC}_{\gamma}\) for the thresholded model \(\hat{S}_{0.9}\). For \(\gamma=0.6\) it is obvious that the algorithm does not converge against the ``best'' sampled model \(\hat{S}_{\textrm{b}}=\arg\min\{\text{EBIC}_{0.6}(S^{(1)}),\dots,\text{EBIC}_{0.6}(S^{(T)})\}\) and thus OIP' does not hold here. The ``best'' model identified by AdaSub is given by \(\hat{S}_{\textrm{b}}=\{2, 3 , 4 ,  5 ,519 ,731 ,950\}\), while the thresholded model \(\hat{S}_{0.9}=\{2, 3 , 4 , 5 , 950 \}\) with threshold \(\rho=0.9\) does not include the ``noise variables'' \(X_{519}\) and \(X_{731}\) and is therefore closer to the true underlying model. This is an example, where the thresholded model from AdaSub reduces the number of false positives in a situation where the criterion used is too liberal (compare Corollary \ref{OIP_replacement}). On the other hand, for \(\gamma=1\), the algorithm appears to have converged against the \(\text{EBIC}_{0.6}\)-optimal model; the ``best'' sampled model \(\hat{S}_{\textrm{b}}\) and the thresholded model \(\hat{S}_{0.9}\) agree: \(\hat{S}_{\textrm{b}} = \hat{S}_{0.9} = \{2,3,4,5\}\). This indicates, that the model identified by AdaSub is ``stable'' in the sense of OIP.

\begin{figure}[ht]
\centering
\subfloat[\(\gamma=0.6\)]{\includegraphics[width=.47\linewidth]{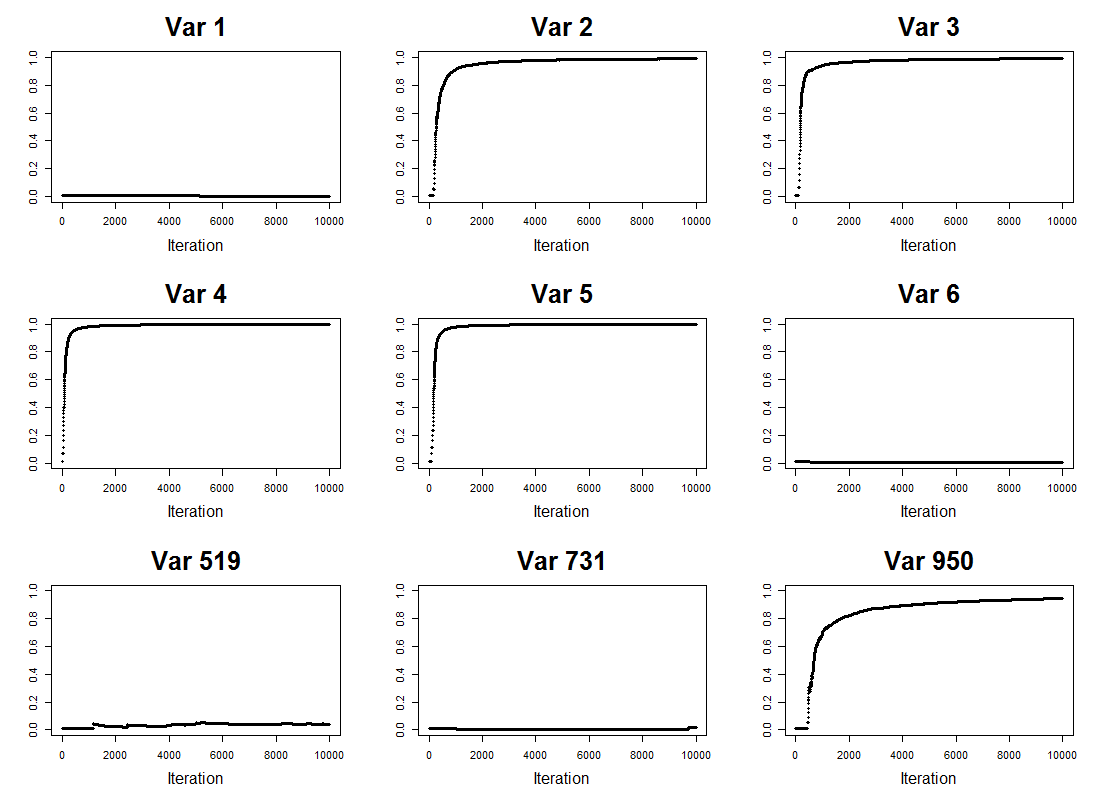}} \qquad
\subfloat[\(\gamma=1\)]{\includegraphics[width=.47\linewidth]{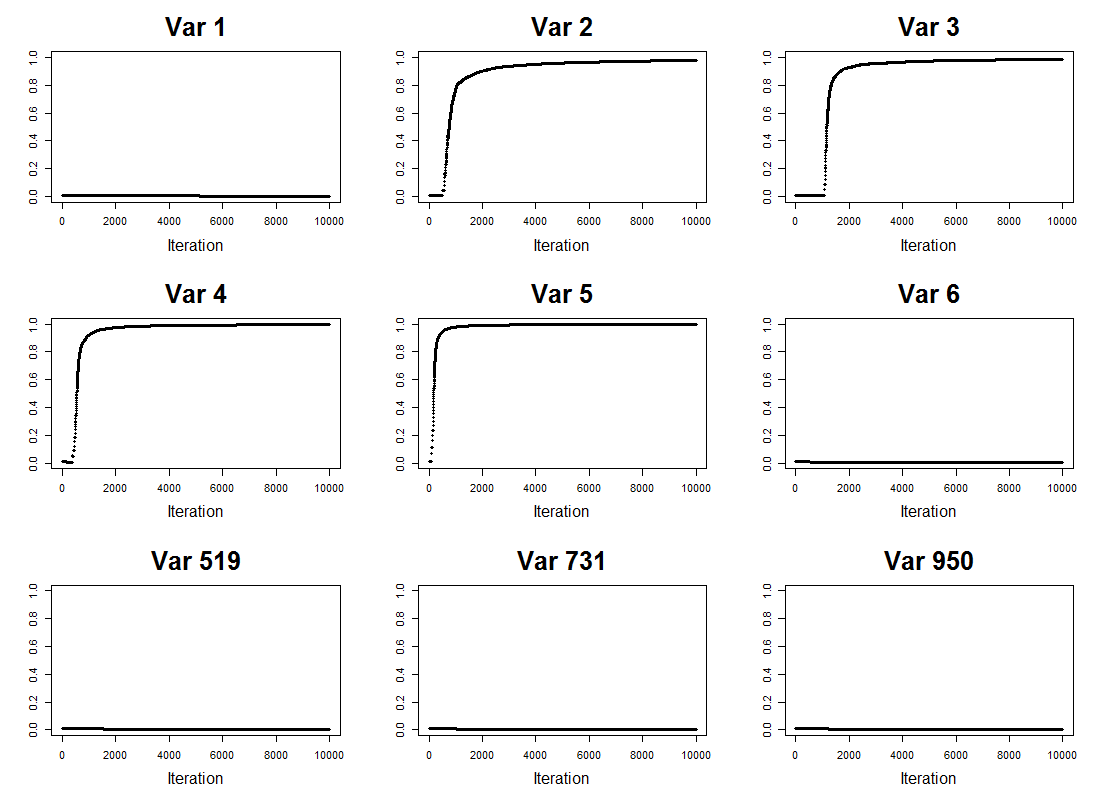}}
\caption{AdaSub for the high-dimensional simulated example. Plots of the evolution of \(r_j^{(t)}\) (with \(j\in\{1,\dots,6,519,731,950\}\)) along the iterations \(t\) for (a) \(\gamma=0.6\) and (b) \(\gamma=1\).}
\label{fig:SelectionProb}
\end{figure}

\begin{figure}[ht]
\centering
\subfloat[\(\gamma=0.6\)]{\includegraphics[width=.45\linewidth]{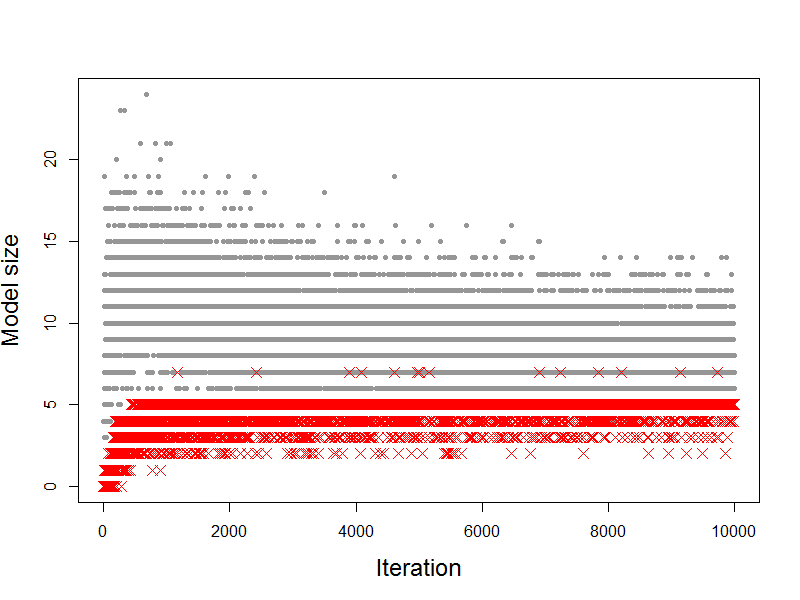}} \qquad
\subfloat[\(\gamma=1\)]{\includegraphics[width=.45\linewidth]{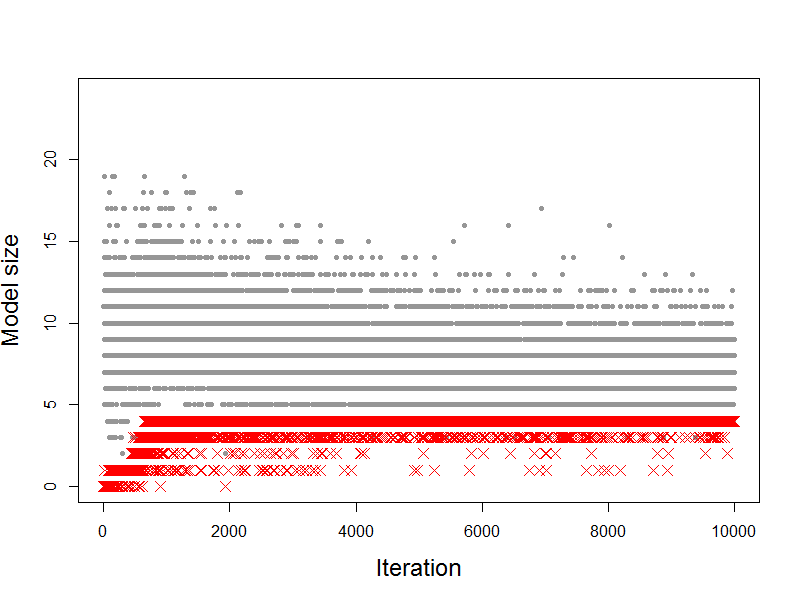}}
\caption{AdaSub for the high-dimensional simulated example. Plots of the evolution of the sizes of the sampled sets \(V^{(t)}\) (grey dots) and the sizes of the selected subsets \(f_C(V^{(t)})=S^{(t)}\) (red crosses) along the iterations \(t\) for (a) \(\gamma=0.6\) and (b) \(\gamma=1\).}
\label{fig:Sizes}
\end{figure}

Figure~\ref{fig:SelectionProb} shows the evolution of some of the selection probabilities \(r_j^{(t)}\) along the iterations \(t\) for \(\gamma=0.6\) and \(\gamma=1\). In both cases, the selection probabilities \(r_j^{(t)}\) for \(j\in\{2,3,4,5\}\) quickly approach the value of one while \(r_6^{(t)}\) tends to zero. On the other hand, \(r_1^{(t)}\) tends to zero and hence the ``signal variable'' \(X_1\) is not selected in both cases (note that \(\beta_1=0.4\) is quite small). Additionally, the evolution of the selection probabilities \(r_j^{(t)}\) for \(j\in\{519,731,950\}\) is shown. While for \(\gamma=1\) these selection probabilities all tend to zero as desired, the behaviour is different for \(\gamma=0.6\): \(r_{950}^{(t)}\) tends to one; \(r_{519}^{(t)}\) and \(r_{731}^{(t)}\)  seem to converge to values close but not exactly zero. This reflects a situation, where OIP does not hold and variables \(X_{519}\) and \(X_{731}\) are not ``stable'' in the sense of OIP.

\begin{figure}[ht]
\centering
\subfloat[\(\gamma=0.6\)]{\includegraphics[width=.45\linewidth]{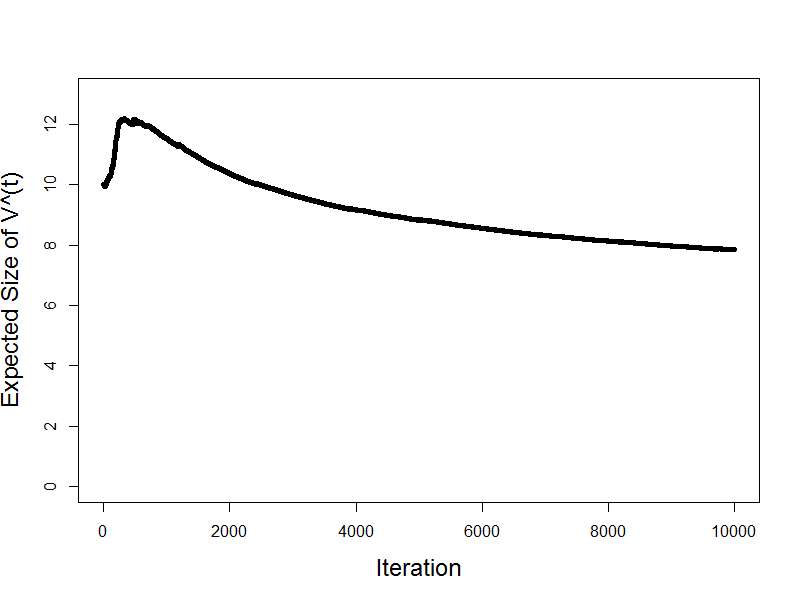}} \qquad
\subfloat[\(\gamma=1\)]{\includegraphics[width=.45\linewidth]{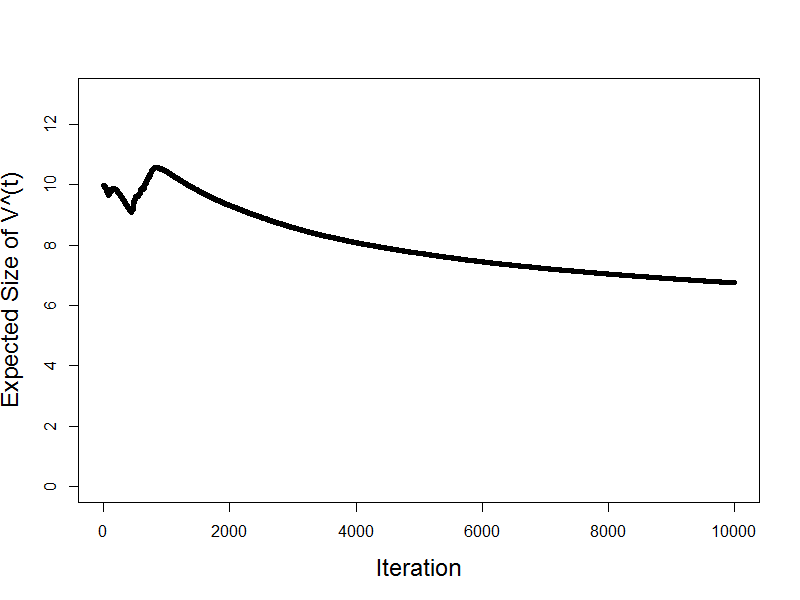}}
\caption{AdaSub for the high-dimensional simulated example. Plots of the evolution of the expected search size \(E|V^{(t)}|\) along the iterations \(t\) for (a) \(\gamma=0.6\) and (b) \(\gamma=1\).}
\label{fig:ExpectedSizes}
\end{figure}

Figure~\ref{fig:Sizes} shows the evolution of the sizes of the sampled sets \(V^{(t)}\) and the sizes of the selected subsets \(S^{(t)}\) along the iterations \(t\); additionally, Figure~\ref{fig:ExpectedSizes} depicts the evolution of the expected search size \(E|V^{(t)}|=\sum_{j\in\mathcal{P}}r_j^{(t-1)}\) along the iterations \(t\) for \(\gamma=0.6\) and \(\gamma=1\). Starting with initial expected search size \(E|V^{(1)}|=q=10\), the AdaSub algorithm automatically adjusts the expected search sizes which, after some time, start to decrease with the number of iterations. For \(\gamma=0.6\), the search sizes are a bit larger, since the criterion \(\text{EBIC}_{0.6}\) enforces less sparsity than \(\text{EBIC}_{1}\). The computation times for \(T=10,000\) iterations of AdaSub were approximately 15.1 seconds for \(\gamma=0.6\) and 13.5 seconds for~\(\gamma=1\).

\clearpage

\subsection{Additional results of simulation study} \label{sec:additional}
We present further results of the simulation study given in Section \ref{sec:simstudy}.  
The low- and high-dimensional simulation set-ups are as described in Section \ref{sec:simstudy}. In particular, the design matrix \(\bs X=(X_{i,j})\in\mathbb{R}^{n\times p}\) is simulated via \(\bs X_{i,*}\sim\mathcal{N}_p(\bs 0,\bs \Sigma)\). Here, we consider the following correlation structures between the explanatory variables induced by the matrix \(\bs \Sigma\in\R^{p \times p}\):
\begin{enumerate}
\item[(a)] \textbf{Toeplitz-Correlation Structure:} For some \(c\in(-1,1)\) let \(\Sigma_{k,l}=c^{|k-l|}\) for all \(k\neq l\).
\item[(b)] \textbf{Equal-Correlation Structure:} For some \(c\in[0,1)\) let \(\Sigma_{k,l}=c\) for all \(k\neq l\). 
 %
\item[(c)] \textbf{Block-Correlation Structure:} For some \(c\in(0,1)\) and a fixed number of blocks \(b\in\N\) let \(\Sigma_{k,l}=c\) for all \(k\neq l\) with \((k-l) ~\text{mod}~ b = 0\), and let \(\Sigma_{k,l}=0\) otherwise. 
\end{enumerate}

\begin{figure}[!ht]
\centering
\includegraphics[width=14cm]{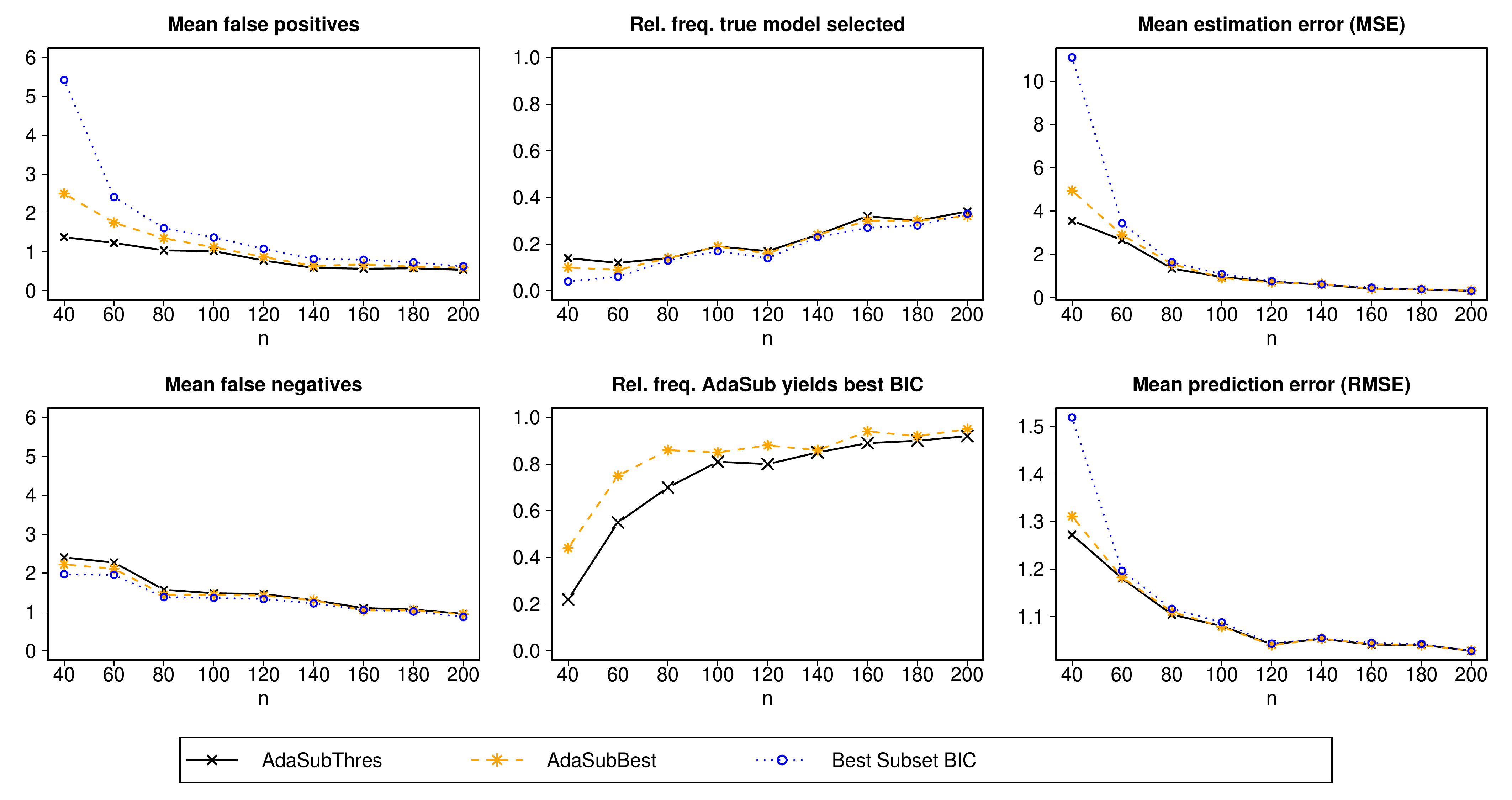}
\caption{\label{LOW_Toeplitz} Results for low-dimensional setting (\(p=30\)) with Toeplitz-correlation structure (\(c=0.9\)): Comparison of thresholded model \(\hat{S}_{0.9}\) (AdaSubThres) and ``best'' model \(\hat{S}_{\textrm{b}}\) (AdaSubBest) from AdaSub with BIC-optimal model \(S^*\) (Best Subset BIC) in terms of mean number of false positives/ false negatives, relative frequency of selecting the true model \(S_0\), relative frequency of agreement between AdaSub models and \(S^*\), Mean Squared Error (MSE) and Root Mean Squared Prediction Error (RMSE) on independent test set.} 
\end{figure}

\begin{figure}[!ht]
\centering
\subfloat[Equal-correlation structure (\(c=0.7\))]{\includegraphics[width=14cm]{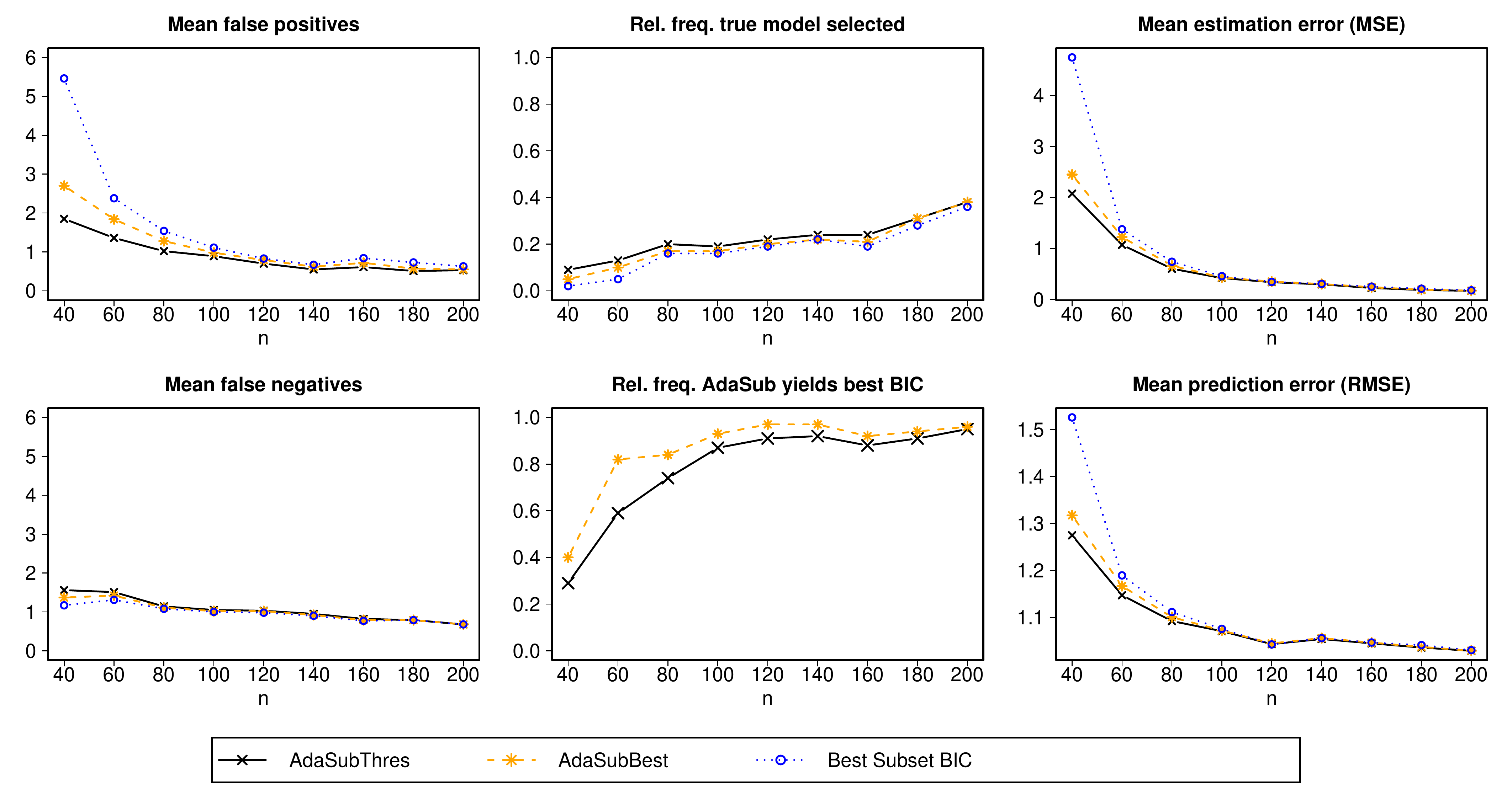}}  \\[2mm]
\subfloat[Block-correlation structure (\(b=10\) blocks and \(c=0.5\))]{\includegraphics[width=14cm]{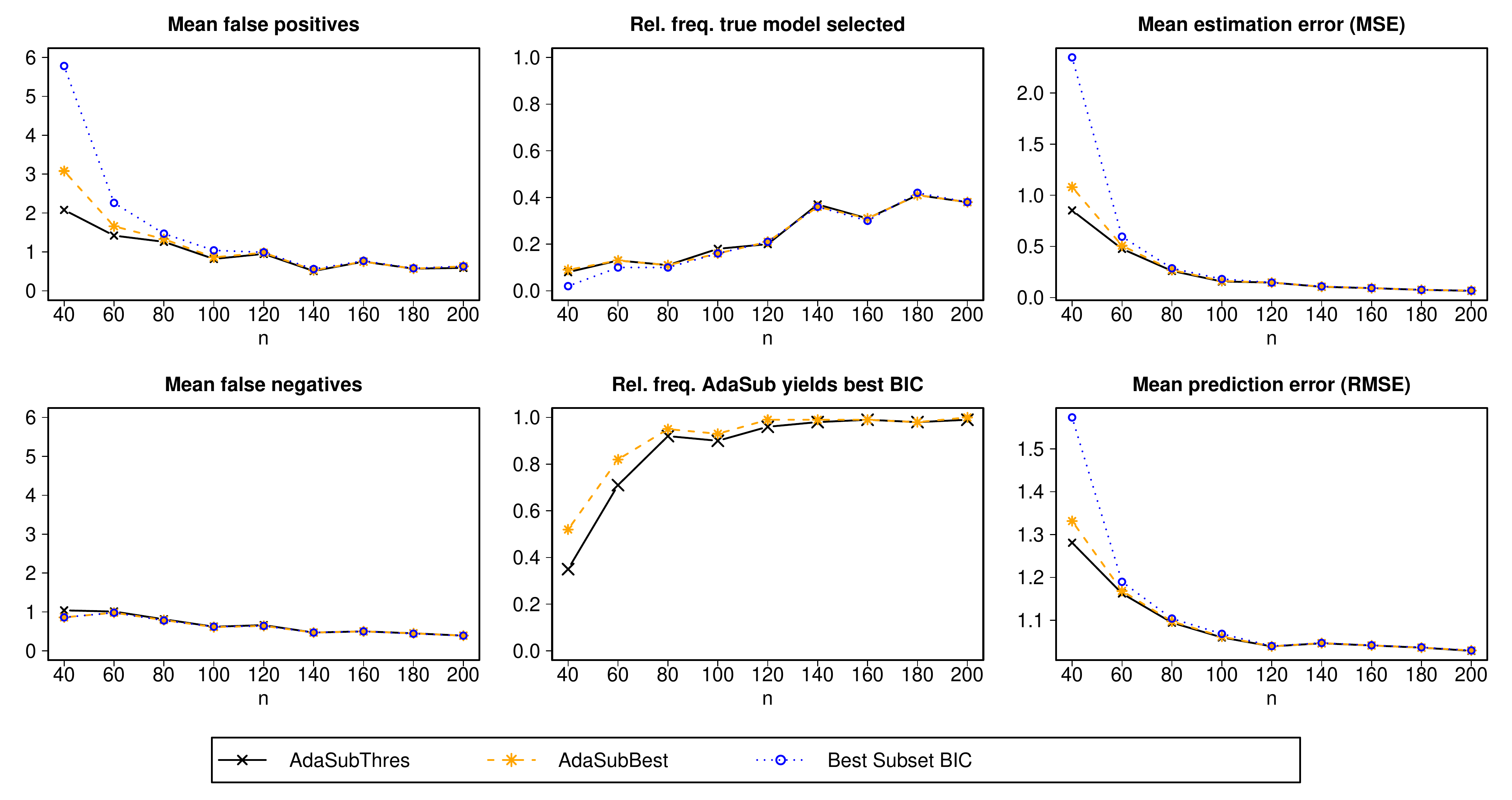}}
\caption{Results for low-dimensional setting (\(p=30\)) with (a) equal-correlation structure and (b) block-correlation structure: Comparison of thresholded model \(\hat{S}_{0.9}\) (AdaSubThres) and ``best'' model \(\hat{S}_{\textrm{b}}\) (AdaSubBest) from AdaSub with BIC-optimal model \(S^*\) (Best Subset BIC) in terms of mean number of false positives/ false negatives, relative frequency of selecting the true model \(S_0\), relative frequency of agreement between AdaSub models and \(S^*\), Mean Squared Error (MSE) and Root Mean Squared Prediction Error (RMSE) on independent test set.}
\label{fig:lowADD}
\end{figure}

\begin{figure}[!ht]
\centering
\subfloat[Results for \(\text{EBIC}_\gamma\) with \(\gamma = 0.6\).]{\includegraphics[width=14cm]{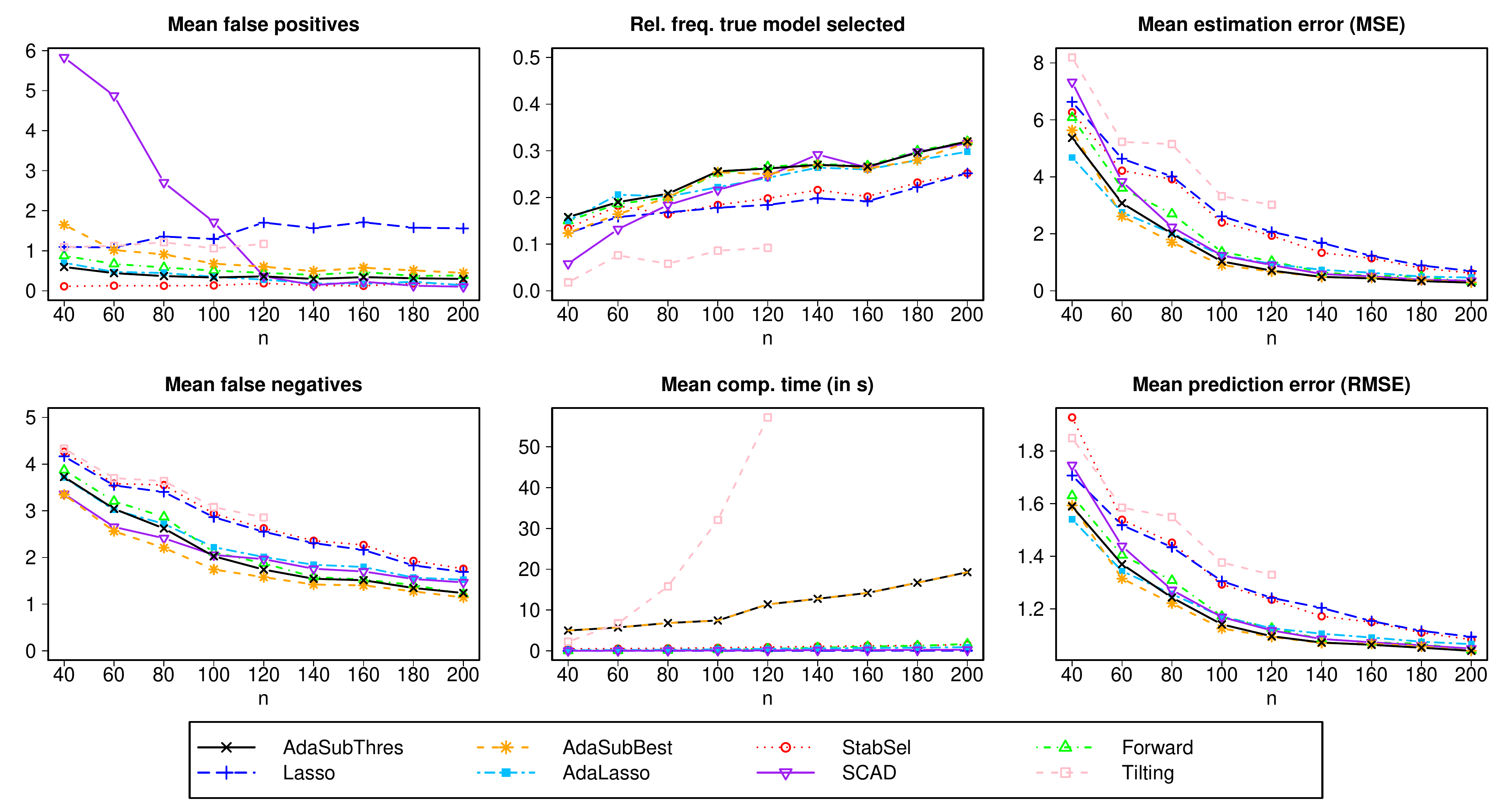}}  \\[2mm]
\subfloat[Results for \(\text{EBIC}_\gamma\) with \(\gamma = 1\).]{\includegraphics[width=14cm]{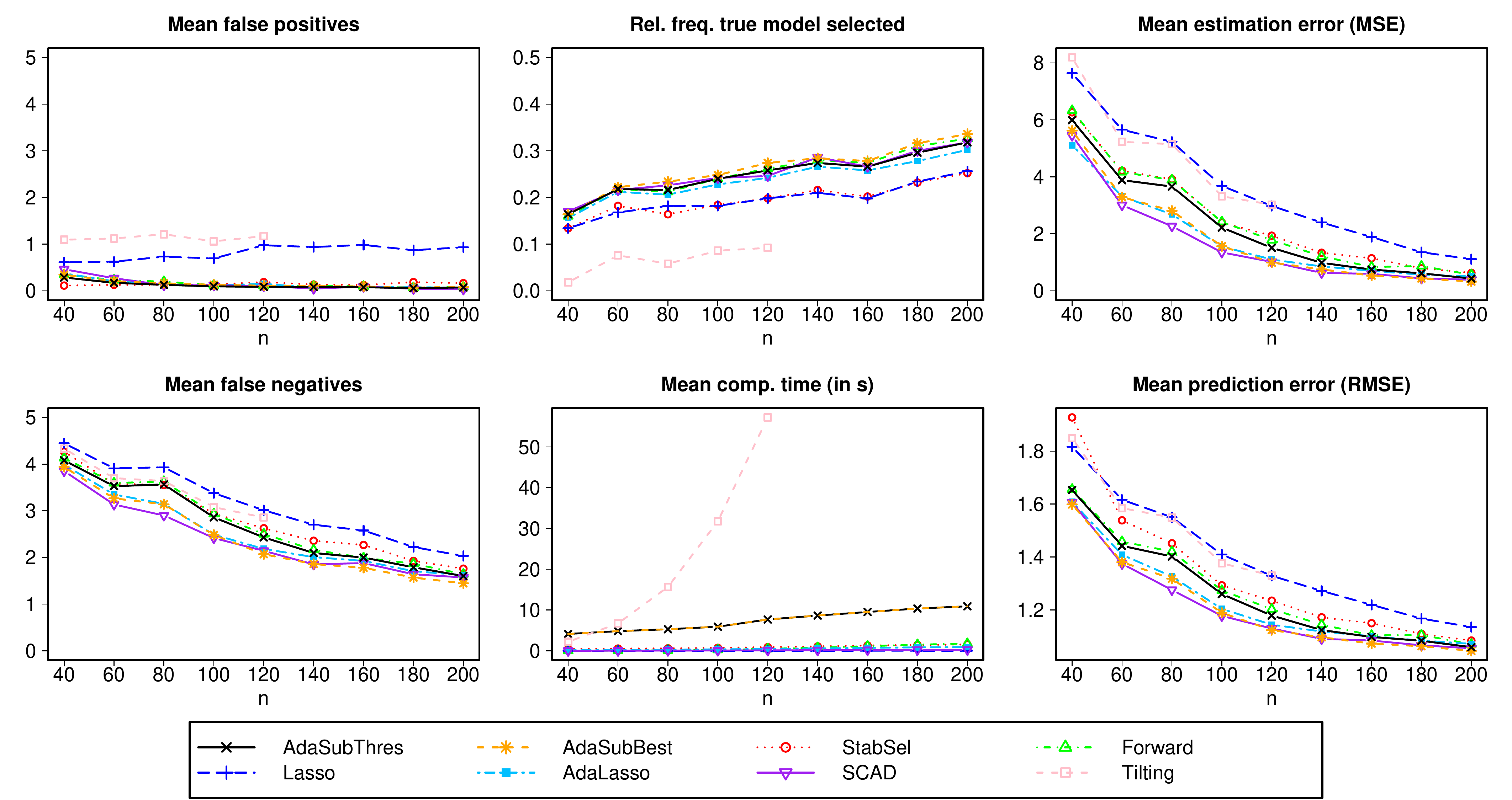}}
\caption{\label{fig:highGlobal07}Results for high-dimensional setting  (\(p=10n\)) with equal-correlation structure (\(c=0.7\)): Comparison of thresholded model (AdaSubThres) and ``best'' model (AdaSubBest) from AdaSub with Stability Selection (StabSel), Forward Stepwise, Lasso, Adaptive Lasso (AdaLasso), SCAD and Tilting in terms of mean number of false positives/ false negatives, rel. freq. of selecting the true model, mean comp. time, MSE and RMSE.} 
\end{figure}

\begin{figure}[!ht]
\centering
\subfloat[Results for \(\text{EBIC}_\gamma\) with \(\gamma = 0.6\).]{\includegraphics[width=14cm]{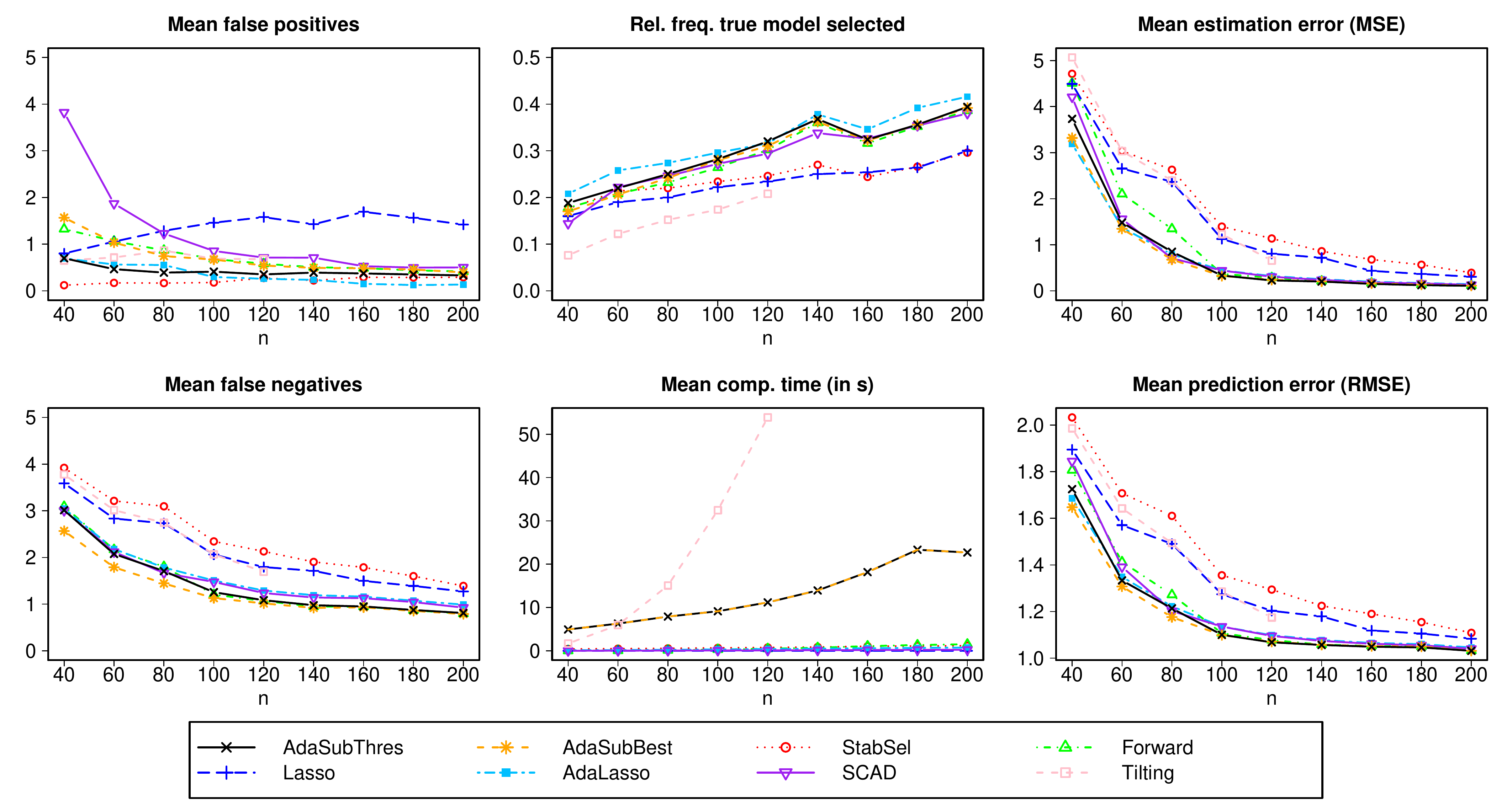}}  \\[2mm]
\subfloat[Results for \(\text{EBIC}_\gamma\) with \(\gamma = 1\).]{\includegraphics[width=14cm]{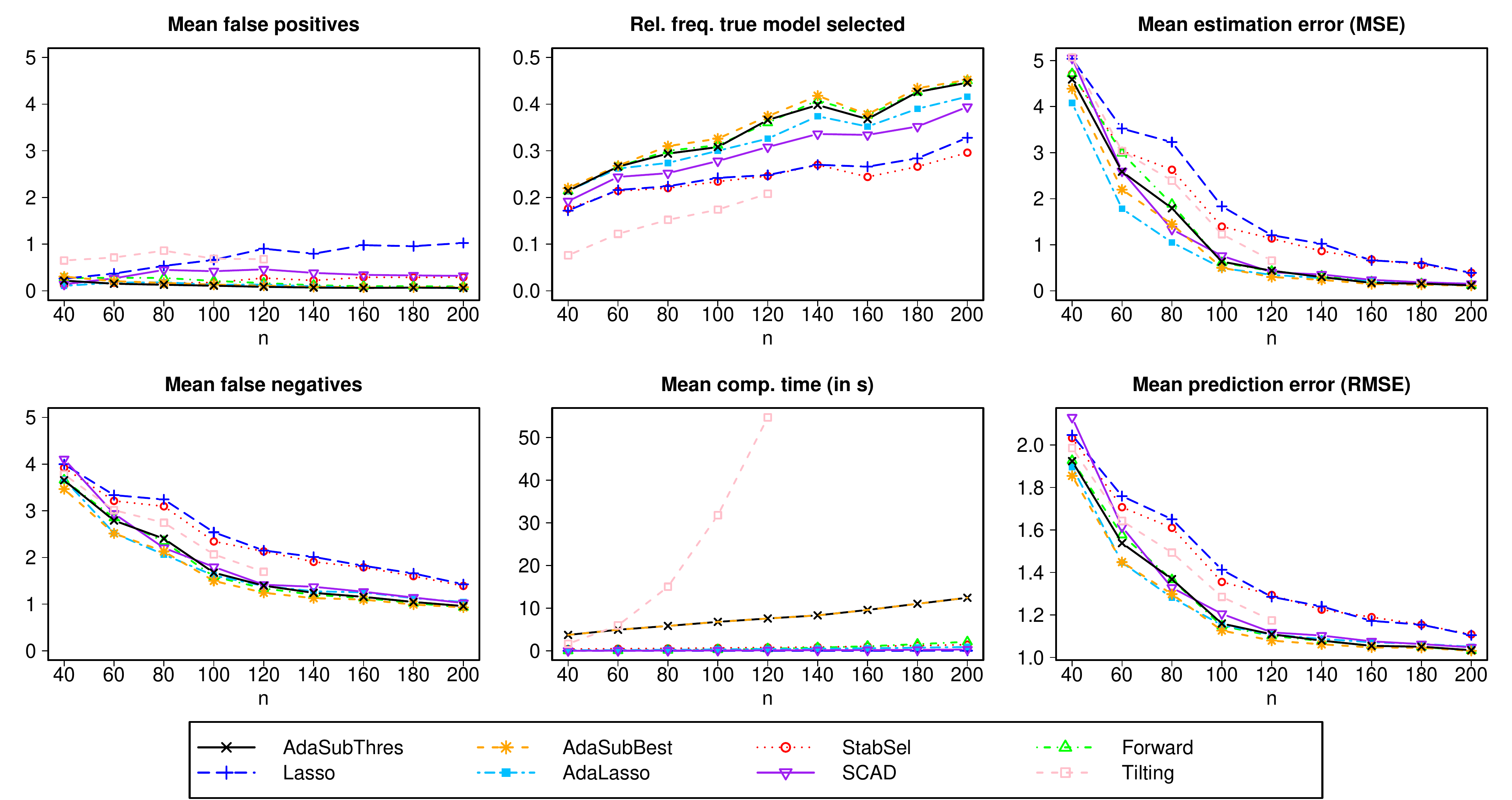}}
\caption{\label{fig:highBlocks05}Results for high-dimensional setting (\(p=10n\)) with block-correlation structure (\(b=10\) blocks and \(c=0.5\)): Comparison of thresholded model (AdaSubThres) and ``best'' model (AdaSubBest) from AdaSub with Stability Selection (StabSel), Forward Stepwise, Lasso, Adaptive Lasso (AdaLasso), SCAD and Tilting in terms of mean number of false positives/ false negatives, rel. freq. of selecting the true model, mean comp. time, MSE and RMSE.} 
\end{figure}

Figure~\ref{LOW_Toeplitz} depicts the results in a low-dimensional situation (\(p=30\)) with large correlations between the explanatory variables (Toeplitz-correlation structure with \(c=0.9\)). The relative frequency of agreement between the models selected by AdaSub and the BIC-optimal model increases towards one when the sample size increases, but the ``convergence'' is markedly slower than in the independent case (see Figure~\ref{LOW_Global00}). This shows that the models from AdaSub may yield different (and in the given setting preferable) results in comparison to the BIC-optimal model even if the sample size is moderately large. 

Next, we consider an equal-correlation structure (correlation \(c=0.7\)) and a block-correlation structure (\(b=10\) blocks and \(c=0.5\) as the correlation within blocks). Figure~\ref{fig:lowADD} shows the results of the low-dimensional examples, while Figures~\ref{fig:highGlobal07} and \ref{fig:highBlocks05} depict the results of the high-dimensional examples. In the low-dimensional examples the observations are very similar to the other situations described; the high-dimensional examples further demonstrate, that the performance of AdaSub is very competitive in comparison to the other methods considered. 

\begin{figure}[!ht]
\centering
\includegraphics[width=0.8\linewidth]{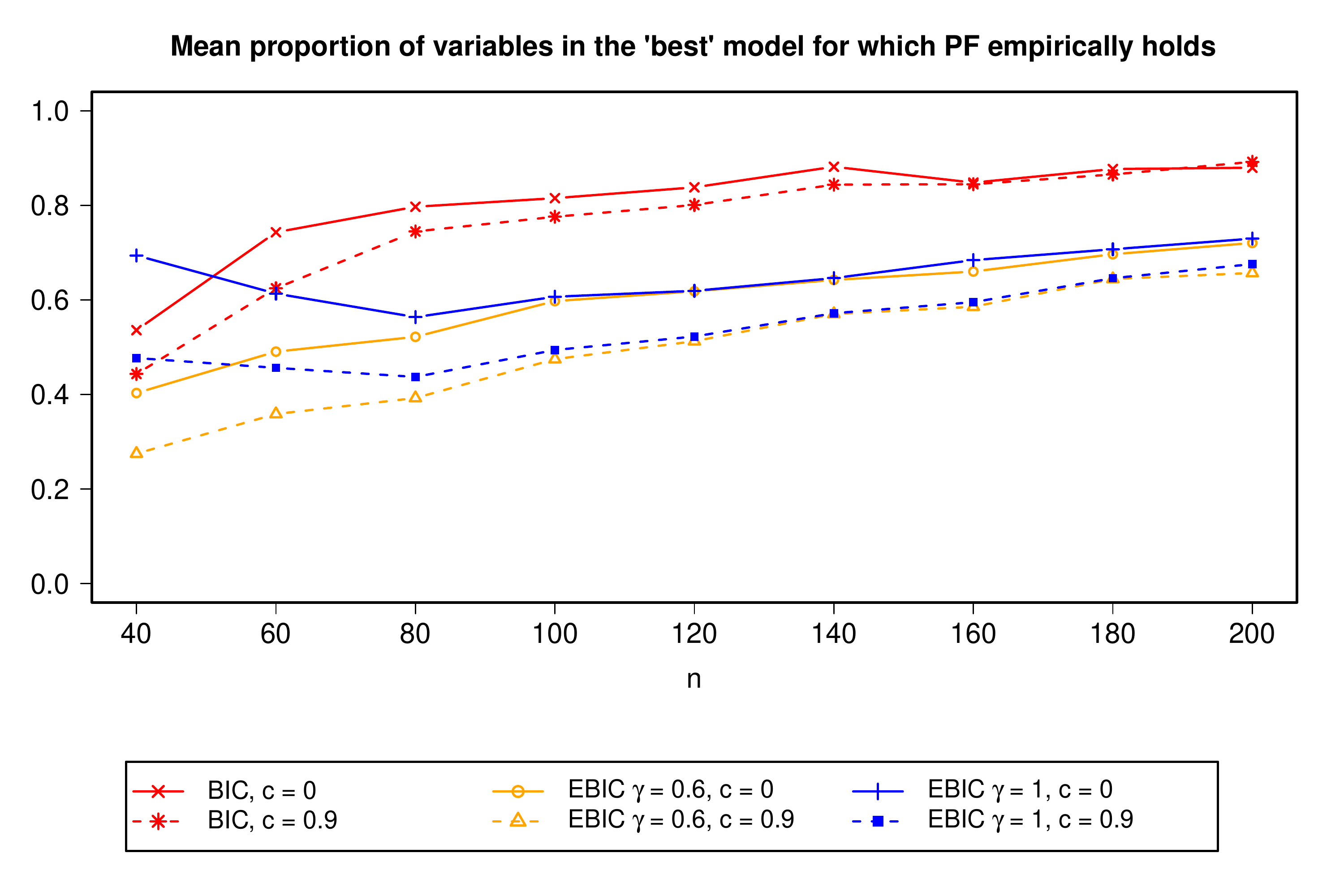}
\caption{\label{fig:PF} Additional results for low-dimensional setting (\(p=30\)) with BIC and high-dimensional setting (\(p=10n\)) with \(\text{EBIC}_\gamma\), \(\gamma\in\{0.6,1\}\), for independence correlation structure (\(c=0\)) and Toeplitz correlation structure (\(c=0.9\)): Mean proportion of variables inside the ``best'' model \(\hat{S}^*\) for which the finite-sample PF property (\ref{PF_analog}) empirically holds (excluding cases with \(\hat{S}^*=\emptyset\)), where the ``best'' model \(\hat{S}^*\) refers to \(S^*\) in the low-dimensional setting and to the best model \(\hat{S}_{\text{b}}\) identified by AdaSub in the high-dimensional setting. Here, PF empirically holds for variable \(X_j\) with \(j\in \hat{S}^*\), if \(j\in V^{(t)}\) implies \(j\in f_C(V^{(t)})\) for all \(t=1,\dots,T\).} 
\end{figure}

Figure~\ref{fig:PF} depicts additional results regarding the finite-sample PF property (\ref{PF_analog}) for the low-dimensional setting (\(p=30\)) with the BIC and the high-dimensional setting (\(p=10n\)) with the \(\text{EBIC}_\gamma\), \(\gamma\in\{0.6,1\}\), considering an independence correlation structure (\(c=0\)) and a Toeplitz correlation structure with large correlations (\(c=0.9\)). Note that for a large number of variables \(p\) it is computationally prohibitive to  compute the \(C\)-optimal model \(S^*\) and to assess whether the finite-sample PF property holds for the criterion \(C\), as one would have to compute the best sub-models \(f_C(V)\) for all possible subspaces \(V\subseteq\mathcal{P}\). Thus, here we check whether the finite-sample PF property is satisfied for all subspaces \(V^{(t)}\) sampled by AdaSub after \(T=5000\) iterations. In particular, we say that PF empirically holds for variable \(X_j\) with \(j\in \hat{S}^*\), if \(j\in V^{(t)}\) implies \(j\in f_C(V^{(t)})\) for all \(t=1,\dots,T\), where the ``best'' model \(\hat{S}^*\) refers to the actual \(C\)-optimal model \(S^*\) in the low-dimensional setting and to the best model \(\hat{S}_{\text{b}}\) identified by AdaSub in the high-dimensional setting (as an estimate for \(S^*\)). 
 
It can be observed that the mean proportion of variables in the ``best'' model \(\hat{S}^*\), for which the PF property (\ref{PF_analog}) empirically holds, tends to increase with the sample size \(n\) for both the low- and high-dimensional settings. This indicates that a faster convergence of AdaSub can be achieved for larger values of \(n\), as more ``important'' covariates in \(\hat{S}^*\) are always selected to be in the best sub-model when considered in the model search of AdaSub. Figure~\ref{fig:PF} further shows that the finite-sample PF property is less likely to hold in case of large correlations between the covariates (Toeplitz structure with \(c=0.9\)) in comparison to the case of independent covariates (\(c=0\)).

\begin{figure}[!ht]
\subfloat[Independence correlation structure (\(c=0\))]{\includegraphics[width=\linewidth]{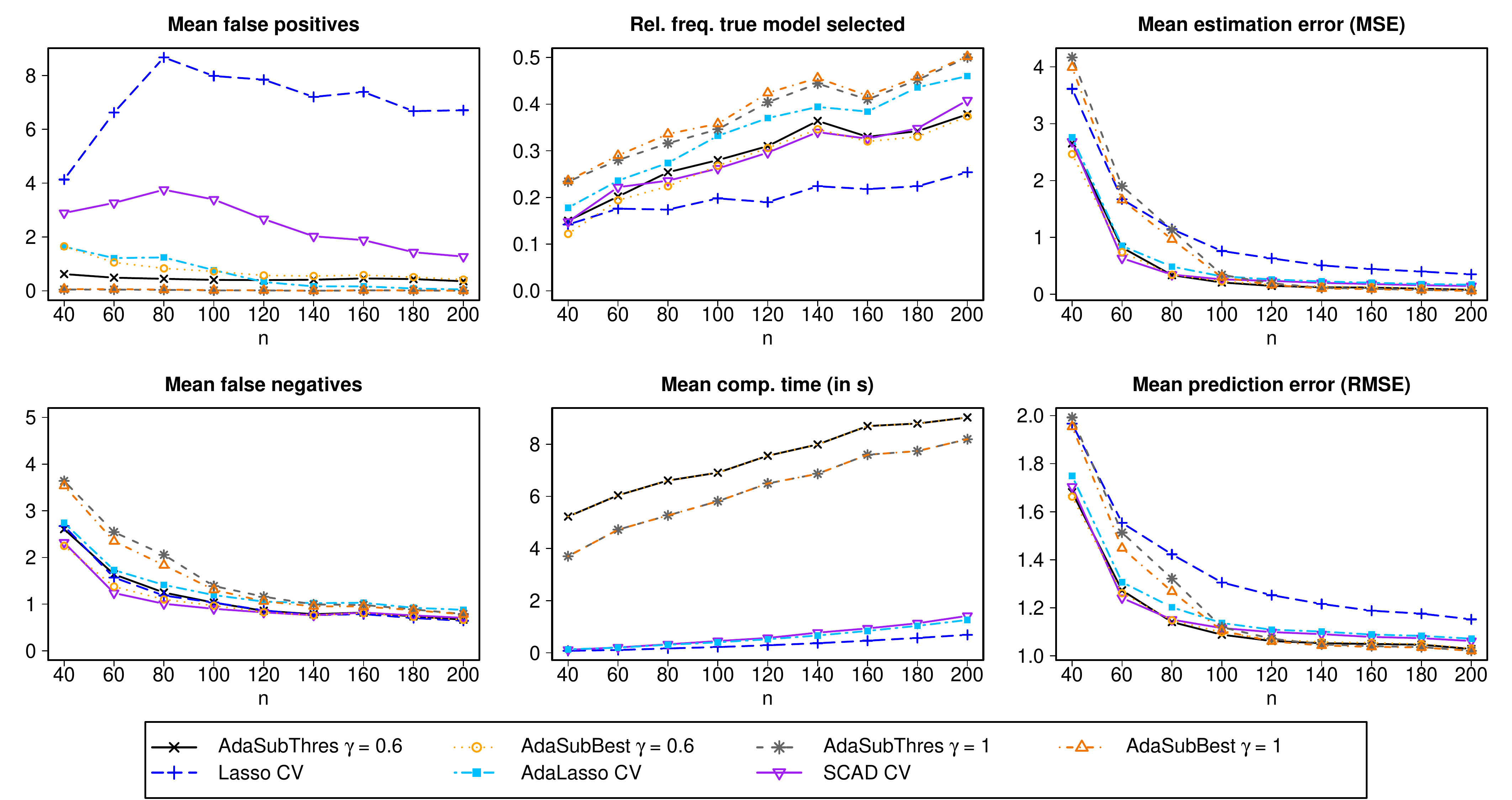}}  \\[2mm]
\subfloat[Toeplitz correlation structure (\(c=0.9\))]{\includegraphics[width=\linewidth]{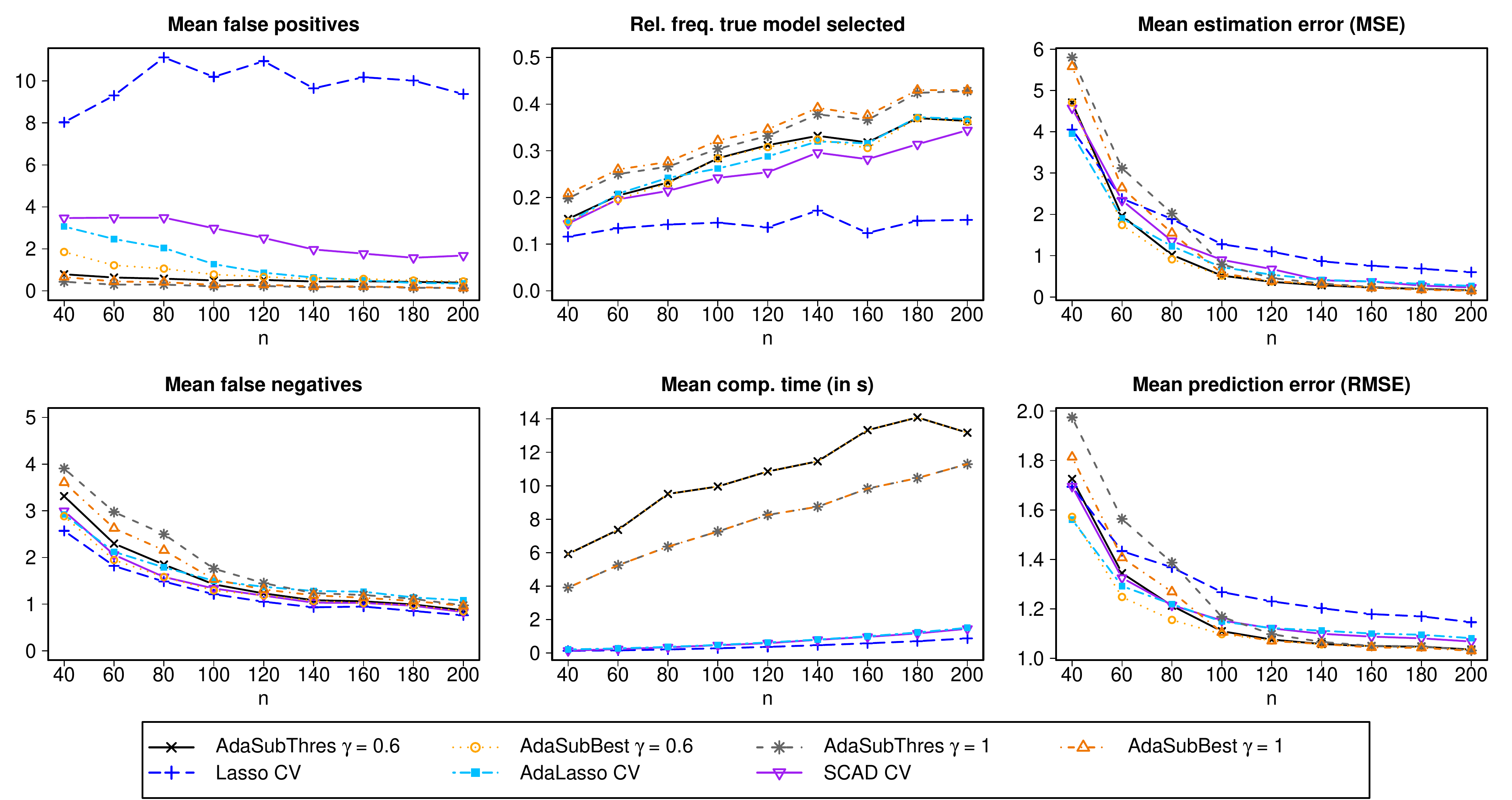}}
\caption{\label{fig:CV1}Results for high-dimensional setting (\(p=10n\)) with (a) independent covariates and (b) Toeplitz correlation structure: Comparison of thresholded model (AdaSubThres) and ``best'' model (AdaSubBest) from AdaSub (for \(\text{EBIC}_\gamma\) with \(\gamma \in\{0.6,1\}\)) with Lasso, Adaptive Lasso (AdaLasso) and SCAD tuned with ten-fold cross-validation (CV) in terms of mean number of false positives/ false negatives, rel. freq. of selecting the true model, mean comp. time, MSE and RMSE.} 
\end{figure}

\begin{figure}[!ht]
\subfloat[Equal-correlation structure (\(c=0.7\))]{\includegraphics[width=\linewidth]{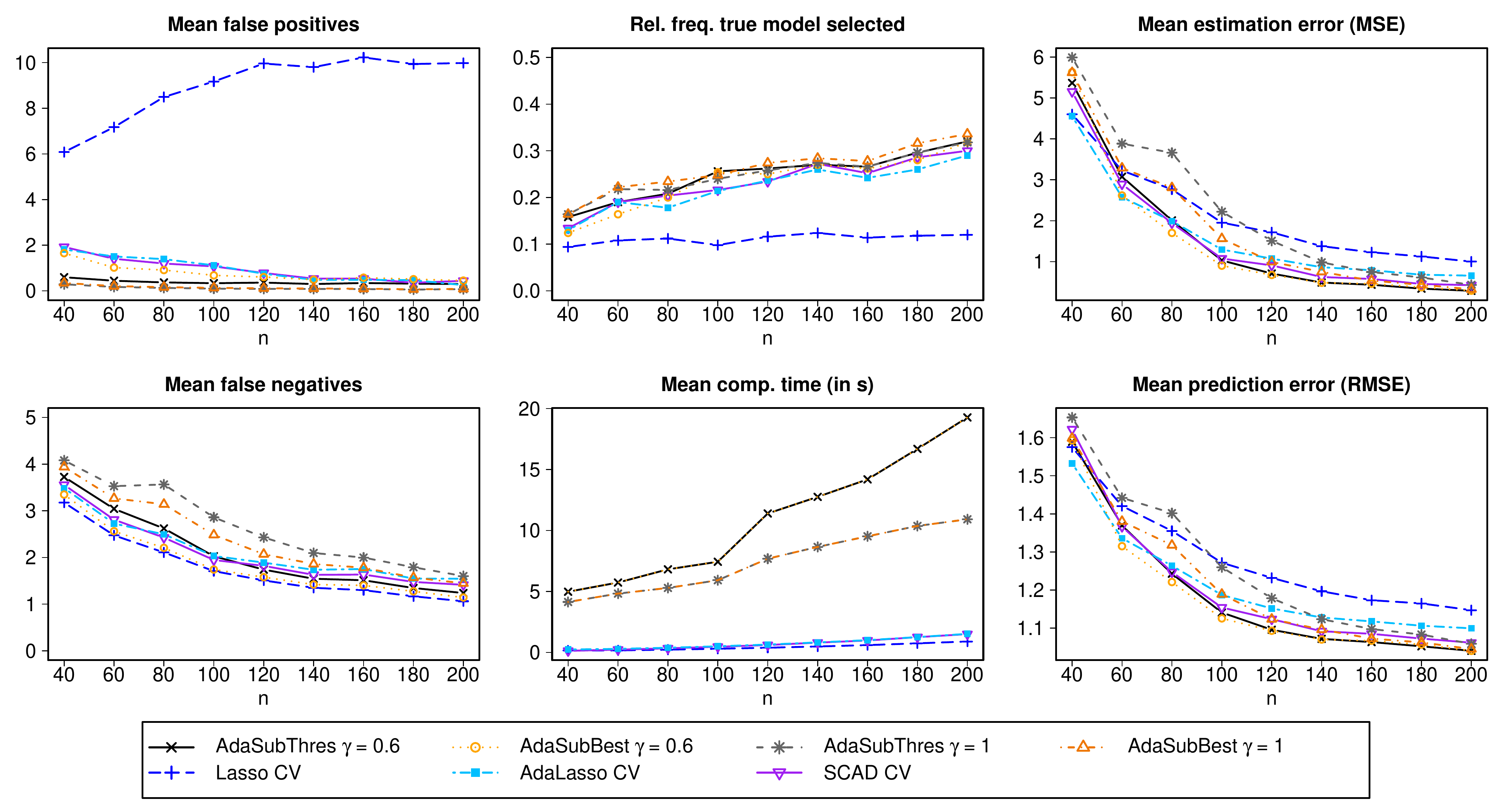}}  \\[2mm]
\subfloat[Block-correlation structure (\(b=10\) blocks and \(c=0.5\))]{\includegraphics[width=\linewidth]{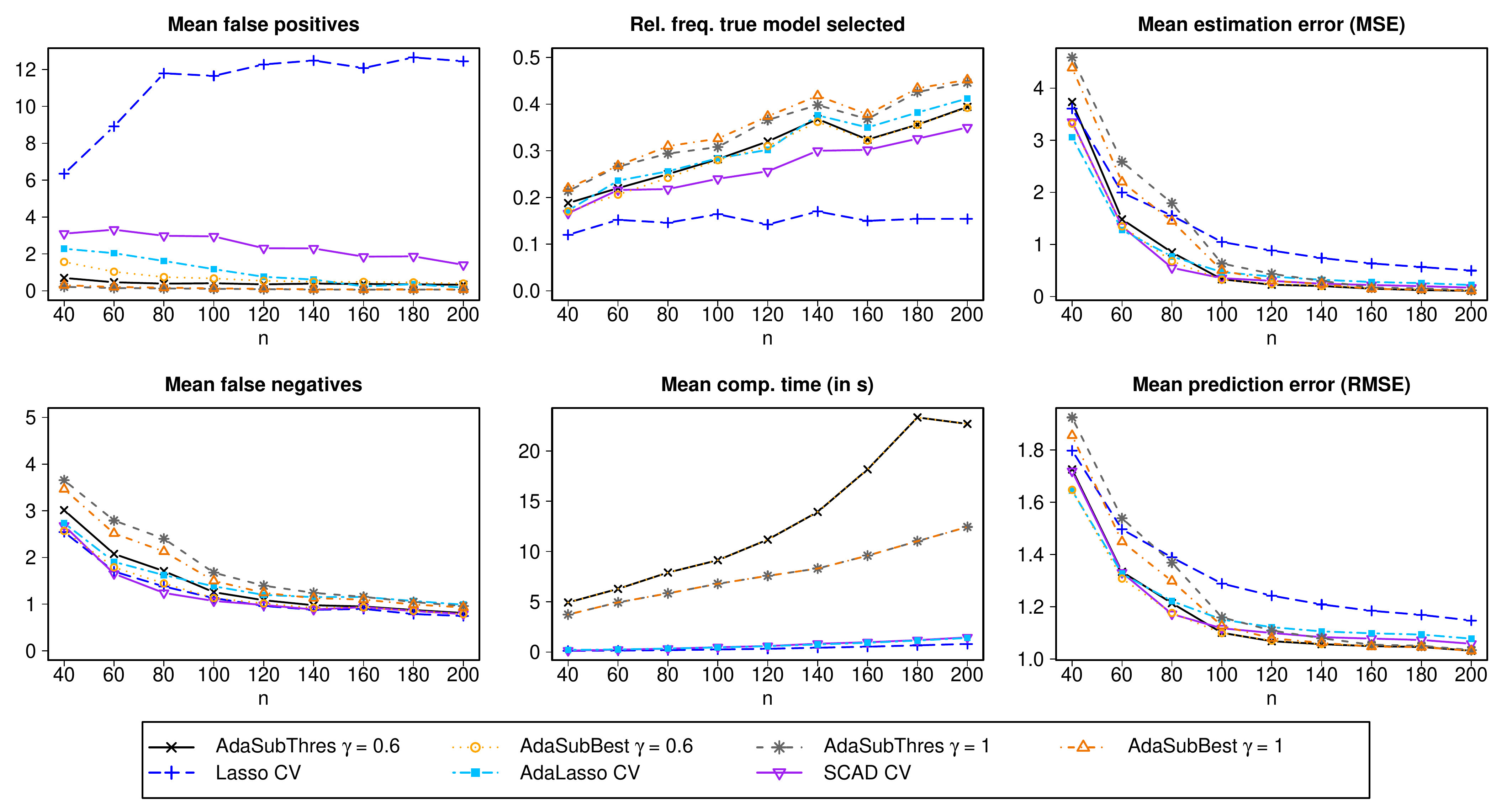}}
\caption{\label{fig:CV2}Results for high-dimensional setting (\(p=10n\)) with (a) equal-correlation structure and (b) block-correlation structure: Comparison of thresholded model (AdaSubThres) and ``best'' model (AdaSubBest) from AdaSub (for \(\text{EBIC}_\gamma\) with \(\gamma \in\{0.6,1\}\)) with Lasso, Adaptive Lasso (AdaLasso) and SCAD tuned with ten-fold cross-validation (CV) in terms of mean number of false positives/ false negatives, rel. freq. of selecting the true model, mean comp. time, MSE and RMSE.} 
\end{figure}

Finally, in the considered high-dimensional setting we additionally compare the performance of the AdaSub models (for \(\text{EBIC}_\gamma\) with \(\gamma \in\{0.6,1\}\)) with regularization methods tuned via cross-validation for ``optimal'' predictive performance, instead of applying the same criterion \(\text{EBIC}_\gamma\) for tuning parameter selection (see Section \ref{sec:simstudy}). In particular, we reconsider the Lasso, the Adaptive Lasso and SCAD where the penalty parameters are chosen with ten-fold cross-validation via the one-standard error rule, i.e.\ the final estimator is obtained by selecting the largest penalty parameter which is within one standard error of the minimal cross-validation error.

The results of the CV-tuned regularized estimators for the independence correlation structure (\(c=0\)) and the Toeplitz correlation structure (\(c=0.9\)) are depicted in Figure~\ref{fig:CV1}, while the results for the equal-correlation structure (\(c=0.7\)) and the block-correlation structure (\(b=10\) blocks and \(c=0.5\)) can be found in Figure~\ref{fig:CV2}. 
It is apparent that the Lasso tends to select many false positives, confirming the observation in \cite{feng2013} that tuning the penalty parameter of the Lasso via ten-fold cross-validation is not optimal when the aim is the identification of the true model. Even though it can be beneficial for variable selection to reserve a larger fraction 
of the observed data for validation in the cross-validation procedure \citep{shao1993, feng2013}, here we have considered ten-fold cross-validation in combination with the one-standard error rule as a benchmark which is commonly used in practice (as the default in the R-package \texttt{glmnet}). 
The CV-tuned SCAD estimator tends to select less false positives than the Lasso in most of the cases, but still yields considerably more noise variables than the AdaSub models. In contrast, the Adaptive Lasso does not suffer from a very large number of false positives and generally performs quite well in the considered settings. 

Results in Figures~\ref{fig:CV1} and~\ref{fig:CV2} further show that the thresholded models \(\hat{S}_\rho\) (with \(\rho=0.9\)) from AdaSub yield the smallest mean numbers of false positives in all considered scenarios, while the ``best'' models \(\hat{S}_{\text{b}}\) identified by AdaSub can provide favorable predictive performance at the price of slightly increased numbers of false positives (especially for the choice \(\gamma = 0.6\) in \(\text{EBIC}_\gamma\)). 
Regarding the choice of \(\gamma\) for the selection criterion \(\text{EBIC}_\gamma\) in AdaSub, a smaller value (\(\gamma=0.6\)) is to be preferred if the main focus is predictive performance (minimizing mean prediction error on test data), while a larger value (\(\gamma=1\)) tends to provide sparser models and the largest relative frequencies for the correct identification of the true model.

\section*{Acknowledgements}
We would like to thank the reviewer for constructive and very helpful comments that improved an earlier version of the manuscript.   
\clearpage

\bibliographystyle{Chicago}
\footnotesize
\setlength{\bibsep}{0pt}
\bibliography{AdaSub_ref_short}

\end{document}